  \providecommand\BibTeX{{%
    \normalfont B\kern-0.5em{\scshape i\kern-0.25em b}\kern-0.8em\TeX}}}
\renewcommand\footnotetextcopyrightpermission[1]{}
\newcommand{\myparagraph}[1]{\vspace{1ex}\noindent{\bf #1}}
\newcommand{\process}{\texttt{process}}
\newcolumntype{L}[1]{>{\raggedright\let\newline\\\arraybackslash\hspace{0pt}}m{#1}}
\newcolumntype{C}[1]{>{\centering\let\newline\\\arraybackslash\hspace{0pt}}m{#1}}
\newcolumntype{R}[1]{>{\raggedleft\let\newline\\\arraybackslash\hspace{0pt}}m{#1}}
\newcounter{index}
\newcommand{\PP}{\mathbb{P}}
\newcommand{\adria}[1]{\todo[inline,caption={},color=magenta!40]{{\it Adria:~}#1}}
\newcommand{\david}[1]{\todo[inline,color=green!40]{David: #1}}
\newcommand{\james}[1]{\todo[inline,color=blue!40]{James: #1}}
\newcommand{\matt}[1]{\todo[inline,color=red!40]{Matt: #1}}
\newcommand{\nitin}[1]{\todo[inline,color=yellow!40]{Nitin: #1}}
\newcommand{\cheated}{\mathsf{cheated}}
\newcommand{\valid}{\mathsf{valid}}
\newcommand{\inconclusive}{\mathsf{inconclusive}}
\newcommand{\sign}{\mathsf{sign}}
\newcommand{\pvccommit}{\mathsf{pvccommit}}
\newcommand{\assert}{\mathsf{assert}}
\newcommand{\checkfun}{\mathsf{check}}
\newcommand{\out}{\mathsf{output}}
\newcommand{\sgn}{\mathsf{sgn}}
\newcommand{\sk}{\mathsf{sk}}
\newcommand{\pk}{\mathsf{pk}}
\newcommand{\adv}{\mathcal{A}}
\newcommand{\A}{\texttt{P1}}
\newcommand{\B}{\texttt{P2}}
\newcommand{\corrupted}{\texttt{C}}
\newcommand{\prng}{\mathsf{prng}}
\newcommand{\andreduce}{\mathsf{andreduce}}
\newcommand{\xor}{\oplus}
\newcommand{\parity}{\mathsf{parity}}
\newcommand{\startcompact}[1]{\par\vspace{-0.75em}\begin{#1}%
\allowdisplaybreaks\ignorespaces}
\newcommand{\stopcompact}[1]{\end{#1}\ignorespaces}
\renewenvironment{equation*}{%
    \startcompact{small}
   \mathdisplay@push
   \nonumber
  \mathdisplay{equation*}%
}{%
  \endmathdisplay{equation*}%
  \mathdisplay@pop
  \stopcompact{small}
}
\renewenvironment{equation}{%
    \startcompact{small}
   \mathdisplay@push
   \nonumber
  \mathdisplay{equation}%
}{%
  \endmathdisplay{equation}%
  \mathdisplay@pop
  \stopcompact{small}
}
\begin{document}
\fancyhead{}
\title{MPC-Friendly Commitments for Publicly Verifiable Covert Security} 

\settopmatter{printfolios=true}

\author{Nitin Agrawal}

\email{nitin.agrawal@cs.ox.ac.uk}
\affiliation{
  \institution{University of Oxford}
  \streetaddress{}
  \city{}
  \country{}
}

\author{James Bell}
\email{jbell@turing.ac.uk}
\affiliation{
  \institution{The Alan Turing Institute}
  \streetaddress{}
  \city{}
  \country{}
}

\author{Adri\`{a} Gasc\'{o}n}
\email{adriag@google.com}
\affiliation{
  \institution{Google}
  \streetaddress{}
  \city{}
  \country{}
}

\author{Matt J. Kusner}
\email{m.kusner@ucl.ac.uk}
\affiliation{
  \institution{University College London}
  \streetaddress{}
  \city{}
  \country{}
}

\begin{abstract}

  We address the problem of efficiently verifying a commitment in a two-party computation. This addresses the scenario where a party P1 commits to a value $x$
  to be used in a subsequent secure computation with another party P2 that wants to receive assurance that P1 did not cheat, i.e. that $x$ was indeed the value inputted into the secure computation.
  Our constructions operate in the publicly verifiable covert (PVC) security model, which is a relaxation of the malicious model of MPC
  appropriate in settings where P1 faces a reputational harm if caught cheating.
  
  We introduce the notion of PVC commitment scheme and indexed hash functions to build commitments schemes tailored to the PVC framework, and propose constructions for both arithmetic and Boolean circuits
  that result in very efficient circuits.
  From a practical standpoint, our constructions for Boolean circuits are $60\times$ faster to evaluate securely, and use $36\times$ less communication than baseline
  methods based on hashing. Moreover, we show that our constructions are tight in terms of required non-linear operations, by proving lower bounds on the nonlinear gate count of commitment verification circuits.
  Finally, we present a technique to amplify the security properties of our constructions that allows to efficiently recover malicious guarantees with statistical security.

\end{abstract}

\keywords{Privacy-preserving deep learning; Committed MPC} 

\newcommand{\submission}{T} 
\newcommand{\nsubmission}{F} 

\ifthenelse{\equal{\submission}{T}}
  {\renewcommand{\matt}[1]{}
 \renewcommand{\james}[1]{}
 \renewcommand{\david}[1]{}
 \renewcommand{\nitin}[1]{}
 \renewcommand{\adria}[1]{}}

\twocolumn

\maketitle

\section{Introduction}\label{sec:intro}

Secure multi-party computation (MPC) methods have undergone impressive improvements in the last decade. Advances in the scalability of garbled circuit protocols \cite{wang2017authenticated, wang2017global, ben2017efficient}, commitment schemes \cite{frederiksen2018committed}, and oblivious transfer \cite{naor2001efficient, asharov2013more} have transformed the range of applications for MPC~ \cite{lindell2020secure}. In particular, a significant amount of research efforts have been recently devoted to finding efficient MPC protocols for training and evaluation of widely-deployed machine learning (ML) models \cite{nikolaenko_privacy-preserving_2013, petsregression, agrawal2019quotient, mohassel2017secureml, mohassel2018aby, wagh2019securenn, cryptonets, tapas}. These works enable collaborative training, as well as private predictions, where users can get predictions from a confidential model while preserving the privacy of their input.

Generally speaking, the guarantee of an MPC computation is that the inputs of the participants remain private to other parties, but that does not prevent parties to choose their inputs in an arbitrary way, e.g., in the well-known millionaires problem, nothing prevents a millionaire from lying about their wealth. Going back to the private prediction application mentioned above: nothing prevents the model owner from modifying the model arbitrarily. This is a problem in settings where the model has to satisfy certain non-functional constraints such as safety, fairness, or privacy. These constraints undermine accuracy (as often measured in ML) and thus the model owner may have an incentive to switch the model. This exact problem and, more generally, model certification, was tackled recently by~\citet{kilbertus2018blind} and \citet{segal2020fairness}. Both these works show that commitments verified in MPC can help here. For example, consider a service provider offering dietary or exercise recommendations based on personal data. Users may require the service provider to commit to a recommendation algorithm that is certified not to make harmful recommendations (which could have been verified and signed by a regulator). More formally a user requires the following 2-party functionality: the service provider (P1) commits to an input $x$ by producing commitment $c$ and sends $c$ to the user (P2). Later, P2 uses $c$ to verify that $x$ is being used by P1 inside an MPC protocol between both parties. We call this \emph{MPC on committed data}. We describe an application of this framework to certified predictions in Appendix~\ref{app:fairness}, along with an empirical example on realistic data showing that heuristic methods that do not ensure that the model does not change will fail. Concretely, we show that changing a single parameter in a fair model results in an unfair model with increased accuracy. 

So far current work on this uses standard collision-resistant hash functions such as SHA-256 \cite{kilbertus2018blind} and SHA-3 \cite{segal2020fairness} to produce and then verify commitments in MPC. However, these methods do not take advantage of two key properties of this setting: 1. \emph{Interactivity}: given that an MPC protocol needs to be run between the user and service provider to compute some functionality (e.g., a recommendation), it is possible to leverage the interactivity of the protocol to construct a commitment; 2. \emph{Reputation of service provider}: as this computation involves a service provider who relies on users for profit, a protocol can be constructed so that cheating would harm the reputation of the service provider, using ideas from Publicly-Verifiable Covert (PVC) security  \cite{DBLP:conf/asiacrypt/AsharovO12}. 

Based on these properties we make a simple observation: one can detect if an input $x$ to a Boolean MPC protocol has been changed with probability $\nicefrac{1}{2}-\epsilon$ (for arbitrarily small $\epsilon>0$) using a simple additional Boolean circuit as part of the protocol (more details on such circuits are in Figure~\ref{fig:constructions-1-2-3}). The idea is that in MPC a hash can be efficiently constructed by using inputs from both parties, an idea we call \emph{indexed hash functions}. These functions allow one to build MPC commitments in the PVC setting that analytically and experimentally outperform prior approaches by as much as $60\times$ in runtime and $36\times$ in communication.

\myparagraph{Other Related Work. }
\citet{baum2016garbling} also discusses the problem of input validity, using efficient SFE protocols. In particular, the solution utilizes universal hash functions and committed OT. The protocol specifically improves the performance of garbled circuit based secure function evaluation for cases where sub-circuits depend on only one party's input. Concurrent with \cite{baum2016garbling}, \citet{katz2016efficiently} propose a solution performing predicate checks followed by secure evaluation of an arbitrary function, if the inputs pass the verification. However, both these works focus on malicious security, different from our proposal in PVC security model, utilizing properties 1,2. In particular, our approach does not fit their paradigm because our commitment verification has a private input. However, our results for maliciously secure setting (section \ref{sec:malicious}) are compatible with their optimizations.

In this paper we give a technical overview of our paper, and describe our contributions. We then introduce indexed hash functions and give efficient constructions for them. We describe how to use these hash functions to enable MPC on committed data. We give an analytic and experimental comparison with prior work.
We derive lower bounds on the number of AND gates necessary for indexed hash functions, demonstrating that some of our constructions are as efficient as possible. A natural question is if the security guarantees of indexed hash functions can be extended to computational security. We answer this question affirmatively: we give a construction and present complexity results. Finally we describe initial ideas of extensions of this approach for arithmetic circuits.

Although we chose to motivate our contribution from the perspective of certified predictions, our results are general, and essentially provide constructions of commitment schemes tailored for PVC security, along with efficient implementations in MPC.
\section{Preliminaries}\label{sec:prelims}
We give a brief background on key ideas we will use in the paper.

\myparagraph{Hash functions and pseudo-randomness.}
We consider a hash function to be a function $h:\{0,1\}^*\rightarrow O$ for some finite output space $O$. Informally, $h$ is \emph{collision resistant} if no adversary is capable of finding two distinct inputs with the same image except with negligible probability. The formal definition requires talking about families of hash functions~\cite{collisionresistance}. A pseudo-random number generator, or PRNG, is a function $\prng:K\times \mathbb{N}\rightarrow \{0,1\}^*$ which maps $(k,b)$ to a bit-string of length $b$. Both hash functions and PRNGs can be modelled as random oracles i.e. as a uniformly random mapping from their inputs to their outputs.

\myparagraph{Publicly-verifiable covert (PVC) security.} Covert security \cite{DBLP:journals/joc/AumannL10} weakens the malicious security setting by guaranteeing that a cheating party (who may behave arbitrarily) will be caught by the other party with a probability, $p$, referred to as the covert security parameter. The motivation for covert security is that if certain parties have a reputation to preserve, then the risk associated with being caught, outweighs the benefit of cheating. This allows faster protocols than malicious security \cite{DBLP:journals/joc/AumannL10, DBLP:conf/eurocrypt/GoyalMS08, DBLP:conf/tcc/DamgardGN10, DBLP:conf/crypto/Lindell13}. PVC security was introduced by \citet{DBLP:conf/asiacrypt/AsharovO12}. It, with probability $p$, provides a publicly-verifiable proof of cheating, which allows greater reputational harm and possibly legal repercussions for a cheater.

\section{Technical Overview}\label{sec:overview}

\begin{figure}[t!]
  \includegraphics[width=0.9\columnwidth]{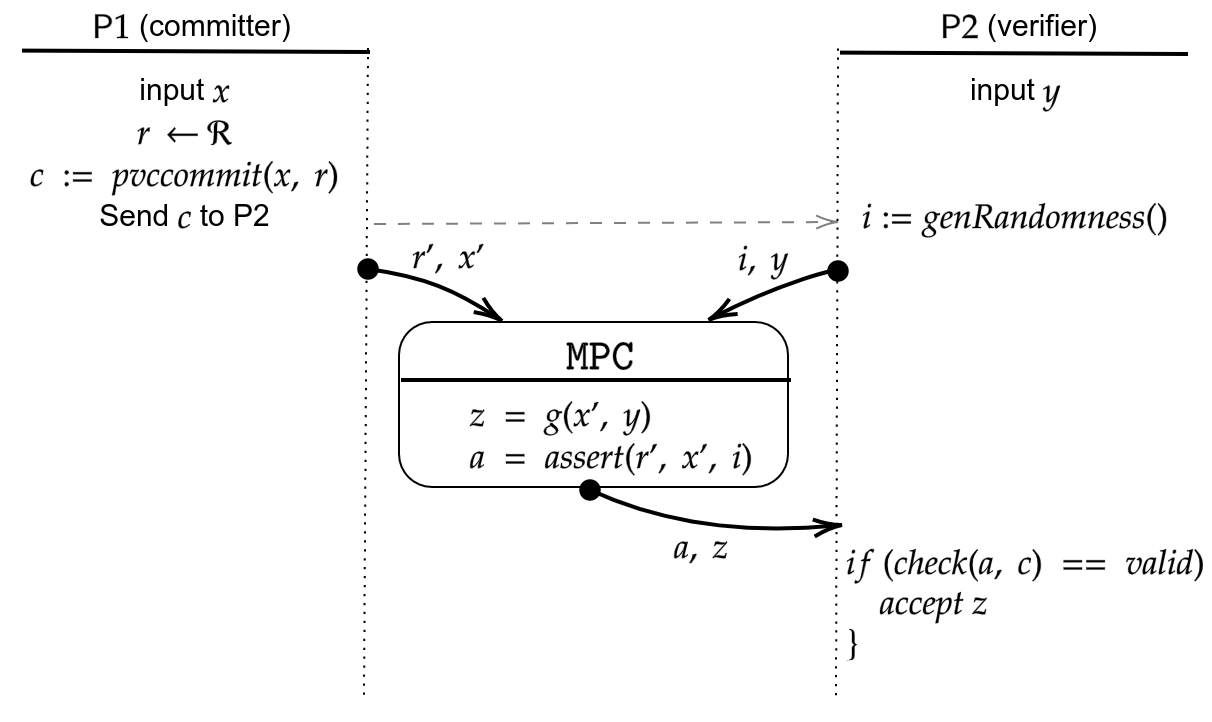}
  \caption{The diagram shows a multi-party computation with a committed input $x$, as enabled by our constructions. Party $1$ (the committer) holds an input $x$, for which it generates a commitment $c$ and sends it to party $2$ (the verifier). The commitment is randomized using $r$ to ensure privacy for $x$, i.e., that the commitment is hiding. Later on, the parties engage in a secure computation of a generic function $g$, where party $1$ inputs $x'$, and party $2$ inputs input $y$. For the purpose of verifying that $x = x'$, party $2$ derives a challenge $i$ from $c$, and the MPC returns a certificate $a$ that can be checked by P2. This guarantees to party $2$ that $g$ is evaluated on the value $x$ to which party $1$ had previously committed.}
\label{fig:basic_diagram}
\end{figure}

Hashing is a useful primitive to implement in MPC, as it enables privacy-preserving consistency checks, and thus {\em MPC on committed data}.
This paper focuses on efficient implementation of this functionality, depicted abstractly in Figure~\ref{fig:basic_diagram}. The high-level goal is to enable a party P1 to commit to a {\em private} value $x$ at some point in time and later, when engaging in a secure computation of a function $g$ with a second party P2, provide the assurance to P2 that P1 inputs the committed value $x$ into the computation and not some other value. This is modelled, analogously to commitment schemes, by three algorithms: $\pvccommit$, $\assert$, and $\checkfun$. As shown in Figure~\ref{fig:basic_diagram}, $\pvccommit$ outputs a commitment to an input, later in the MPC, $\assert$ is run. As we will see next, naively, $\assert$ could simply compute $\pvccommit$, but our constructions will leverage randomness in  $\assert$. Otherwise the algorithms correspond to a standard commitment scheme, in that they should satisfy the standard {\em binding and hiding} properties. The function $\checkfun$ is used to interpret the results and see whether cheating has occurred. The verification is split between $\assert$ and $\checkfun$ and a single commitment allows for arbitrarily many secure computations, in the future.

The functionality of Figure~\ref{fig:basic_diagram} has appeared in previous works on {\em certified predictions}~\cite{segal2020fairness,kilbertus2018blind} where $x$ is a confidential machine learning model owned by P1 that has been checked by a certifying authority to have certain properties, e.g., fairness. In those applications, $g$ corresponds to model evaluation and users receive predictions using the certified model $x$. MPC on committed data directly enables this functionality, with the model owner and the users playing the roles of parties P1 and P2 in Figure~\ref{fig:basic_diagram}, respectively. We describe and motivate this application in Appendix~\ref{app:fairness}.

\myparagraph{Baseline protocol.}To see how a hash function $h$, e.g., SHA3 in practice, can be used to implement MPC on committed data, consider the following instantiation of $\pvccommit, \assert$, and $\checkfun$. P1 can just choose a random $r$ and have $\pvccommit(x,r)$ and $\assert(x,r)$ be $h(x || r)$. Then, $\checkfun$ just checks that they are equal. Both $\pvccommit$ and $\checkfun$ are efficient in this instantiation, so one would want $h$ to have an efficient MPC protocol.

\myparagraph{MPC-friendly hashing.}
The works on fairness certification of \citet{segal2020fairness} and \citet{kilbertus2018blind} propose the above baseline construction.
Segal et. al concretely instantiate $h$ with the Keccak-F function, which is the basis of the SHA3 standard.
That function takes a $1600$ bit input and can be implemented
by a Boolean circuit of $38400$ AND gates i.e. $24$ AND gates per input bit~\cite{segal2020fairness, esat}. Using a Merkle tree for succinctness of $c$, the total number of hashes is twice the number of input blocks, resulting in $48$ AND gates per input bit of $x$. Thus $\assert$ would result in an overhead of $48$ AND gates/bit in this instantiation. Alternatively, using SHA3-256 in sponge mode results in an overhead of roughly $35$ AND gates/bit. On the other hand, using an MPC-optimized hash (but new and susceptible~\cite{dinur2015optimized}) LowMCHash-256~\cite{albrecht2015ciphers} in sponge mode roughly take up 14 AND gates/bit. Note that AND gate counts, and non-linear gate counts in general, are a standard reference for computation time in MPC, and secure computation in general. In this work we propose efficient MPC-friendly commitments schemes based on hashing, with a focus on Publicly Verifiable Covert (PVC) security.

Our starting observation is that executing a collision-resistant hash function such as SHA-256 in a PVC-secure protocol is an overkill: Note that ensuring that commitments are binding, i.e., that P1 in Figure~\ref{fig:basic_diagram} can not generate $x', r'$ such that $x\neq x'$ and the verification passes, up to negligible probability, but then ensuring that the subsequent MPC is secure only up to a constant probability $p$ leaves some potential room for weakening the binding guarantee of the commitment scheme to favor efficiency.
To take advantage of the PVC setting, we observe that $\assert$ must in some way receive randomness from P2, as it is against this randomness that P1 will have probability $p$ of being caught. We design $\assert$ function that leverage this observation, and result in an overhead of the $\assert$ circuit as low as {\em half an AND gate per bit of the input $x$}, even when $p$ is close to $1$, including $1-2^{-\sigma}$ for a statistical security parameter $\sigma$.

\subsection{Contributions}

We introduce the notion of an {\em indexed hash function}. This, roughly speaking, is a function that produces a hash of an input $x$, given a random value $r$ and an index $i$ from a domain $\mathcal{I}$. The index of the hash function plays the role of the randomness chosen by P2 mentioned above. We build indexed hash functions from any collision-resistant hash function $h$ and prove properties related to collision resistance that allow us to construct PVC commitments from indexed hash functions, and use them for achieving the functionality in Figure~\ref{fig:basic_diagram} with PVC security.

Given an indexed hash function $H$, our proposed PVC commitment schemes follow the following high-level structure:
$\pvccommit$ computes $\left( H(j, r, x)\right)_{j\in I}$, i.e. the hash evaluated at all indices, the verifier selects an index $i\in \mathcal{I}$, $\assert$ computes $H(i,r,x)$, and $\checkfun$ checks that this value is correct. This check fails (in the sense of giving a false positive) with a probability upper bounded by  $1-p$ ($p$ is called the covert security parameter). This fact is formalized by reducing an appropriate notion of collision resistance of $H$ to the collision resistance of $h$.

The efficiency of our scheme relies on the fact that our constructions for indexed hash functions
are very efficient to be evaluated in MPC, requiring a very small number of XOR and AND gates
with respect to the input size and thus inducing a very small overhead
when evaluated in MPC. Next, we summarize the organization of the paper,
highlighting the key contributions of each section.

\myparagraph{A construction for Boolean circuits (Section \ref{sec:coverthashes}).}
We give a construction for Boolean circuits that can achieve a covert security parameter arbitrarily close to $1/2$, which asymptotically requires only half an AND gate per bit. Furthermore, in practice, it requires less than one AND gate per bit for moderately sized inputs.

\myparagraph{PVC (Section \ref{sec:integritycheck}).}
We define PVC commitment schemes and their security properties, propose
a secure instantiation based on indexed hash functions, and show how to use them for committed PVC MPC.

\myparagraph{Experimental Evaluation. (Section \ref{sec:experiments})}
We fully implement our most practical construction, and compare it with the
baseline approach (both instantiated with SHA3 and an MPC-friendly hash function LowMCHash). Our experiments show a $60\times$ speed up and $36\times$ less communication in the resulting $\assert$ compared to SHA3. It takes $\sim1$ MB and $<15$ minutes to commit million bit inputs. 

\myparagraph{Lower bounds (Section \ref{sec:lower}).}
We show the optimality of our construction by proving lower bounds for both our approach and the baseline hash based approach. In particular, we show that if the resulting hash is required to be even remotely succinct (to have size $<99x/100$) then (i) the indexed hash function must require at least half an AND gate per bit asymptotically, and (ii) with an ordinary hash function at least one AND gate per bit is required. We also show that even if we remove the assumption on the size of the hash at least $1/5$ of an AND gate per bit would be required.

\myparagraph{A way to amplify security (Section \ref{sec:malicious})}.
We give a way to achieve covert security parameter $1-2^{-\sigma}$, and thus full statistical security, in both of the Boolean and arithmetic cases. While the method relies on repeating our constructions to amplify their probabilistic guarantees, we are able to maintain the asymptotic half a non-linear gate per bit. This result is mostly of theoretical relevance at the moment, as it does not beat the state of the art of MPC-friendly hash functions for input sizes that are currently practical.

\myparagraph{A construction for Arithmetic circuits (Section \ref{sec:arithmetic}).}
We give an analog to our Boolean construction for arithmetic circuits which can achieve a covert security parameter arbitrarily close to $1$ with only half a MULT gate per input element asymptotically. This is also a practical improvement over the best known ordinary hash functions~\cite{albrecht2015ciphers}. However the security parameter can not be taken to be $1$ minus negligible.

\section{Indexed Hash Functions}\label{sec:coverthashes}

\label{sec:defs}

In this section we will introduce the primitive we will use to build our commitments: \emph{indexed hash functions}. We do so informally and then formally, and give some examples of indexed hash functions. The examples will show that secure indexed hash functions have much smaller circuits than ordinary secure hash functions.

Like an ordinary secure hash function, an indexed hash function takes an input from some space $\mathcal{X}$ and produces an output in a space $\mathcal{O}$. 
When working with a specific $x\in \mathcal{X}$ we will denote by $n$ the bitlength of $x$.

The whole idea is to take advantage of the fact that the verifier (P2) can have an input to the indexed hash.
We call this input the {\em index} of the hash function and denote it by $i$, drawn from an index set $\mathcal{I}$. If the wrong input from $\mathcal{X}$ is used during verification, then at least a fixed fraction of indices $i$ will result in an incorrect hash.

We need to ensure that the hash is hiding, otherwise if $x_1,x_2\in \mathcal{X}$ were the possible inputs by the committer, then by computing the hash of $x_1$ and $x_2$ the verifier could learn which was the true input from the output. Thus the committer must provide an extra random input $r$ from some set $\mathcal{R}$ containing enough entropy to hide the true input. Hence we define indexed hash functions
as functions taking an index $i$, a random nonce $r$, and an input $x$.\footnote{We call these hash functions because they have output smaller than their input and, even with the insertion of randomness that is not technically part of the function, a single evaluation of this function would not create a commitment.}

\begin{definition}
An \emph{indexed hash function} is a function $H:\mathcal{I}\times \mathcal{R}\times \mathcal{X} \rightarrow \mathcal{O}$.
\end{definition}

\subsection{Collision resistance}

Next, we formally define two properties for indexed hash function that we require:
(i) the hiding property, i.e., not leaking information about $x$, and (ii) a notion of collision resistance which we call \emph{$q$-collision boundedness}.
For (i) we can use the definition from classical commitment schemes: we say $H$ is \emph{hiding}
if an adversary learns a negligible amount about an input $x$ from learning the value of $H(i,r,x)$
for all $i \in \mathcal{I}$, and a uniformly random (and secret from the adversary) $r\in \mathcal{R}$.

For (ii), we start by defining the concept of a $q$-collision, which denotes
a pair of inputs on which $H$ collides for at least a fraction $q$ of all possible indices $|\mathcal{I}|$.

\begin{definition}
Let $q\in [0,1]$. We call a quadruple $r,x,r',x'$ a \emph{$q$-collision} of $H$ if $x\neq x'$ and
\begin{equation*}
|\{i\in \mathcal{I} | H(i,r,x) = H(i,r',x')\}| \geq q|\mathcal{I}|
\end{equation*}
\end{definition}

We can now define our notion of collision resistance.
Informally, $H$ is \emph{$q$-collision resistant},
if adversaries are unable to find a $q$-collision of $H$ except with negligible probability.
We formalize this using families of indexed hash functions, in turn indexed by a key $k\in K$ generated by a generator $G$ taking a computational security parameter $\lambda$. 
This is similar to the standard definition of a family of collision-resistant hash functions.
Moreover, we say that $H$ is \emph{$q$-collision bounded} if it is $q'$-collision resistant for every $q'>q$.

\myparagraph{The security parameter $\lambda$.} 
We use a single computational security parameter $\lambda$ for all aspects of our constructions. This includes their underlying collision resistant hash function,
as well as the size of the source of randomness $\mathcal{R}$ and the set of indices $\mathcal{I}$.
In particular, $|\mathcal{I}|$ is polynomial in $\lambda$ and  $|\mathcal{R}|$ is exponential in $\lambda$ in all constructions.
Thus our security is formalized in terms of polynomial time adversaries w.r.t. $\lambda$, and whose advantage should be bounded by a negligible function in $\lambda$.
Note that this implies that an attacker is allowed to iterate over $\mathcal{I}$, and in fact in practice we will ensure that $\mathcal{I}$ is as small as possible for efficiency.

\begin{definition}
Given a generator $G$, security parameter $\lambda$, and key $k=G(\lambda)$, a family $\{H_k\}_{k\in K}$ is $q$-collision resistant if, for any probabilistic polynomial time algorithm $A$ we have that 
\begin{equation*}
\PP[A(k)\text{ is a $q$-collision of }H_k]< \text{negl}(\lambda)
\end{equation*}
\end{definition}

Given this, we can define our main notion of collision as follows.

\begin{definition}
A family $\{H_k\}_{k\in K}$ is $q$-collision bounded if it is $q'$-collision resistant for all $q'>q$.
\end{definition}

Any family $\{H_k\}_{k\in K}$ that is $q$-collision resistant is also $q$-\emph{collision bounded}. This is simply due to the definition of $q$-collisions: any $q$-collision is also a $q'$-collision for all $1 \geq q' > q$.

This property will be useful later because by choosing an index uniformly at random we can distinguish between any two inputs $x, x' \in \mathcal{X}, \mbox{s.t.}$ with probability at least $1-q$ by looking at a hash. We can now make the hiding property precise. That $H$ is hiding will be proved under the assumption that $h$ is a random oracle.

\begin{definition}\label{def:hiding}
A family of indexed hash functions $\{H_k\}_{k\in K}$ is hiding if for any polynomial time algorithm $A$ and any $x,x'\in \mathcal{X}$, for a uniformly random choice of $r \in \mathcal{R}$ we have
\begin{equation*}
\begin{split}
\PP(A&(k,(H_k(i,r,x))_{i\in\mathcal{I}})=1)=\\
&\PP(A(k,(H_k(i,r,x'))_{i\in\mathcal{I}})=1)+\text{negl}(\lambda)
\end{split}
\end{equation*}
\end{definition}

As mentioned above, we will construct our indexed hash functions by building them from an ordinary secure hash function $h$. We do so because it will allow us to prove $q$-collision resistance of $H$ via a (ptime) reduction to collision resistance of $h$. To argue about collision boundedness we formalize the notion of a construction.
\begin{definition}
A \emph{construction} of an indexed hash function is a function $\mathcal{C}$ which given a hash $h$ and the security parameter $\lambda$, returns an indexed hash $H$.
\end{definition}

We say that $\mathcal{C}$,
\emph{preserves $q$-collision boundedness} if there is an (efficient) algorithm which, given a $q$-collision of $\mathcal{C}$, returns a collision of $h$.
Thus if a powerful adversary is unable to find a collision in some fixed hash function $h$, then it is reasonable to assume they cannot find a $q'|\mathcal{I}|$-sized collision in $\mathcal{H}$ for any $q' > q$.

\begin{definition}
We say that $\mathcal{C}$ \emph{preserves $q$-collision boundedness} if $\{h_k\}_K$ being collision resistant implies that $\{H_k=\mathcal{C}(h_k)\}_K$ is $q$-collision bounded.

\end{definition}

\subsection{Constructions}

We will now give four constructions of indexed hash functions denoted $\mathcal{C}_0$ through $\mathcal{C}_3$. Our construction 3 is the most practical and the one we would recommend to use (we will use this construction in our experiments in Section~\ref{sec:experiments}), but we include all four to build up ideas incrementally.
Our goal is to derive indexed hash functions that 1. {\em Are efficient to implement in MPC}: this aspect of the constructions in this section is captured in terms of {\em the number of AND and XOR gates} of their corresponding Boolean circuit implementations (in Section~\ref{sec:arithmetic} we give a construction for arithmetic circuits); 2. {\em Have a small index domain $\mathcal{I}$}: this directly corresponds to the commitment size of the PVC commitment schemes that we will build on top of them. For these constructions the value of $q$ (i.e., for which constructions are $q$-collision bounded) is one minus the security parameter of the PVC commitment scheme derived later. We summarize computation, size, and allowed $q$ values in Table~\ref{tab:asymptotics-small}. In this section we have $q = 1/2$ and  $q = 1/2+\epsilon$ (for arbitrarily small $\epsilon > 0$),
but in Section~\ref{sec:malicious} we show how to achieve arbitrarily small $q$.

\begin{table}
\small
\resizebox{0.95\columnwidth}{!}{\
\begin{tabular}{ ccccc } 
\toprule
 \multirow{2}{*}{\bf Construction} &  \multicolumn{1}{p{10ex}}{\bf\centering $\bm{\mathcal{C}_0}$ \\ (baseline)} &  \multirow{2}{*}{$\bm{\mathcal{C}_1}$} &  \multirow{2}{*}{$\bm{\mathcal{C}_2}$}  &  \multicolumn{1}{p{10ex}}{\bf\centering $\bm{\mathcal{C}_3}$ \\ (main)}  \\ \midrule
 \bf \# ANDs & $C_Nn$ & $n+C_N\frac{n}{b}$ & $\frac{n}{2}+C_N\frac{n}{b}$ & $\frac{n}{2}+C_N\frac{n}{b}$  \\[1ex]
 \bf \# XORs & $C_Ln$ & $n+C_L\frac{n}{b}$ & $\frac{3n}{2}+C_L\frac{n}{b}$ & $\frac{3n}{2}+C_L\frac{n}{b}$ \\[1ex]
 $\bm{|\mathcal{I}|}$ & $1$ & $2^b$ & $2^b$ & $\frac{b+\lambda+1}{2\epsilon^2}$ \\[1ex]
 $\mathbf{q}$ & $1$ & $\frac{1}{2}$ & $\frac{1}{2}$ & $\frac{1}{2}+\epsilon$ \\
\bottomrule
\end{tabular}
}
\caption{\label{tab:asymptotics-small} Let $H$ be an indexed hash function resulting from a construction, using $h$ as the underlying collision resistant hash function. This table shows (i) the size of a Boolean circuit for $H$ in terms of number of bits $n$ of the input (omitting lower order $o(n)$ additive terms), where $C_N$ and $C_L$ denote the number of $AND$ and $XOR$ gates of $h$, respectively;  (ii) the size of $\mathcal{I}$, and (iii) the value of $q$ for which $q$-collision resistance holds. The parameter $b$ can be chosen to be any even positive integer and $\epsilon$ can be taken to be any positive value. }

\end{table}

\myparagraph{Blueprint for our constructions.}
Consider an indexed hash function $H$ taking an index $i$, randomness $r$, and input $x$.
All our constructions are parametrized by a {\em block size} $b\in [|x|]$ and a {\em block digest function} $d$.
The latter takes (i) a binary encoding of $i$ and (ii) a bitstring of size $b$,
and outputs a {\em single} bit, i.e. $d : \mathcal{I} \times \{0, 1\}^b\mapsto \{0,1\}$.
The indexed hash function is defined to be the result of
\begin{enumerate}
\item splitting $x$ into $n/b$ consecutive blocks of $b$ bits
(if $|x|$ is not a multiple of $b$, it can be padded with zeros),
\item applying $d(i, \cdot)$ to each block $x_j$,
and
\item outputting the length $n/b$ bitstring resulting from concatenating all digested bits $d(i, x_j)$.
\end{enumerate}

We denote the result of processing an input $i, x$ as in steps $1$-$3$
by $\process_{b, d}(i, x)$. Digest function $d$ determines the size of the index set $\mathcal{I}$.

Then, each construction $\mathcal{C}$ is defined by a block size $b$, digest function $d$, an ordinary collision resistant hash function $h$, and a set of random masks $\mathcal{R}$ (which we always take to be $\{0,1\}^\lambda$) as:
\begin{align}\label{eq:basicform}
\mathcal{C}(h,\lambda)(i,r,x) = h(r||i||\process_{b, d}(i, x))
\end{align}
When presenting the three constructions in this section we will denote the digest function associated with $\mathcal{C}_j$ by $d_j$.
The motivation behind this presentation is simplicity, as it is now enough to define $d_1, d_2, d_3$, and
the arguments in our proofs only need to refer to $d_j$.

\myparagraph{The hiding property.} We can prove the hiding property (Def.~\ref{def:hiding}) without knowing anything about $d$ and thus the size of $\mathcal{I}$, so we do this in generality for all the constructions.

\begin{restatable}[]{theorem}{hidingtheorem}
Suppose $\mathcal{C}$ is given by Equation~\ref{eq:basicform}. If $\{h_k\}_{k\in K}$ is a family of random oracles then $\{\mathcal{C}(h_k,\lambda)\}_{k\in K}$ is hiding.
\end{restatable}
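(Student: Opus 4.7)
The plan is to exploit the random-oracle assumption on $\{h_k\}$ to argue that, from the viewpoint of any polynomial-time adversary $A$ that has not queried $h_k$ on an input whose $\lambda$-bit prefix equals the secret nonce $r$, the sequence $(H_k(i,r,x))_{i\in\mathcal{I}}$ is just a tuple of independent uniformly random strings, and therefore carries no information about $x$. Since the same is true for $x'$, the two views are statistically identical conditioned on $A$ not ``hitting'' $r$, and the probability that $A$ does hit $r$ is negligible in $\lambda$ because $r$ is uniform in $\{0,1\}^\lambda$ and $A$ makes at most polynomially many oracle queries.

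Concretely, I would proceed in three steps. First, observe that for a fixed $(r,x)$ the inputs $r\,\|\,i\,\|\,\process_{b,d}(i,x)$ to $h_k$ are pairwise distinct as $i$ ranges over $\mathcal{I}$, because the index $i$ is included verbatim in the preimage. Under the random-oracle model, the outputs $h_k(r\,\|\,i\,\|\,\process_{b,d}(i,x))$ are therefore mutually independent uniformly random elements of the output space, and this distribution does not depend on $x$ at all; the same holds for $x'$. Second, formalize $A$'s interaction as a random-oracle game and define the event $E$ that at least one of $A$'s queries to $h_k$ begins with the $\lambda$-bit string $r$. Since $r$ is independent of $A$'s view until such a query occurs, a union bound over $A$'s (polynomially many) queries yields $\Pr[E]\le q(\lambda)/2^\lambda=\mathrm{negl}(\lambda)$. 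Third, conditioned on $\neg E$, the tuples $(H_k(i,r,x))_{i\in\mathcal{I}}$ and $(H_k(i,r,x'))_{i\in\mathcal{I}}$ are identically distributed (independent uniform strings drawn from the oracle's ``lazy'' sampling on fresh inputs), so $A$'s output distribution is the same in both games.

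Combining these pieces by a standard hybrid / conditioning argument gives
\begin{equation*}
\bigl|\Pr[A(k,(H_k(i,r,x))_{i})=1]-\Pr[A(k,(H_k(i,r,x'))_{i})=1]\bigr|\le 2\Pr[E]=\mathrm{negl}(\lambda),
\end{equation*}
which is precisely the hiding property of Definition~\ref{def:hiding}. The main subtlety I expect to handle carefully is the bookkeeping of $A$'s oracle queries: because $A$ is given the challenge tuple before making queries, one must ensure that the analysis still treats $r$ as uniform conditioned on $A$'s view, which is standard but requires being explicit that the oracle answers on the ``challenge'' inputs $r\,\|\,i\,\|\,\process_{b,d}(i,\cdot)$ can be sampled up front and revealed to $A$ without leaking $r$ itself. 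Everything else, including the fact that $\process_{b,d}(i,x)$ may occasionally coincide with $\process_{b,d}(i,x')$, is harmless because collisions among the $h_k$-inputs are already accounted for by the presence of $i$ in the preimage.
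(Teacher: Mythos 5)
Your proposal is correct and follows essentially the same route as the paper's proof: argue the oracle outputs on the challenge inputs are independent and uniform regardless of $x$, bound the probability that the adversary ever queries the oracle on an input beginning with the secret $r$ by a union bound over its polynomially many queries against the exponentially large $\mathcal{R}$, and conclude that conditioned on never hitting $r$ the view is independent of $x$. The only cosmetic difference is your explicit remark that the challenge preimages are pairwise distinct because $i$ appears verbatim, which the paper leaves implicit.
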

See Appendix~\ref{app:coverthashes} for the proof of this theorem.

\myparagraph{Collision boundedness.} For each $C_j$ we propose, we will show $q$-collision boundedness of $H=\mathcal{C}_j(h,\lambda)$. This will be done by showing that for at most $q|\mathcal{I}|$ indices $i$ we have $\process_{b, d}(i,x)=\process_{b, d}(i,x')$. That this is a sufficient condition for $q$-collision boundedness is shown in the following theorem.

\begin{theorem}\label{thm:collision}
Let $q\in [0, 1]$. Suppose that for any $x\neq x'$ and for any sufficiently large $\lambda$,
{\small
\begin{equation*}
|\{i \in \mathcal{I} | \process_{b, d}(i,x)=\process_{b, d}(i,x')\}|\leq q|\mathcal{I}|
\end{equation*}}\noindent
then the construction in Equation~\ref{eq:basicform} preserves $q$-collision boundedness.
\end{theorem}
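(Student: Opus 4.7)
The plan is to give a direct reduction: from any probabilistic polynomial time adversary $A$ that finds a $q'$-collision of $H_k = \mathcal{C}(h_k,\lambda)$ for some fixed $q' > q$, construct a polynomial time algorithm $B$ that finds a collision of $h_k$, contradicting collision resistance of the family $\{h_k\}_{k\in K}$.

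First I would unroll the definitions. Suppose $A(k)$ outputs a $q'$-collision $(r,x,r',x')$ of $H_k$. By definition of $q'$-collision there is a set
\begin{equation*}
S_H \;=\; \{\, i \in \mathcal{I} \,\mid\, h_k(r\|i\|\process_{b,d}(i,x)) = h_k(r'\|i\|\process_{b,d}(i,x'))\,\}
\end{equation*}
of size $|S_H| \ge q'|\mathcal{I}|$, and by the hypothesis of the theorem, the set
\begin{equation*}
S_P \;=\; \{\, i \in \mathcal{I} \,\mid\, \process_{b,d}(i,x) = \process_{b,d}(i,x')\,\}
\end{equation*}
has size $|S_P| \le q|\mathcal{I}|$. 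A simple counting argument then gives $|S_H \setminus S_P| \ge (q'-q)|\mathcal{I}| > 0$, so at least one index $i^\star \in S_H \setminus S_P$ exists.

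The algorithm $B$ now proceeds as follows: run $A(k)$ to get $(r,x,r',x')$, then iterate over all $i \in \mathcal{I}$ (which is polynomial in $\lambda$ by the conventions stated in the paper) and return the first $i^\star$ that lies in $S_H$ but not in $S_P$, outputting the pair
\begin{equation*}
\bigl(\, r\|i^\star\|\process_{b,d}(i^\star,x),\ r'\|i^\star\|\process_{b,d}(i^\star,x')\,\bigr).
\end{equation*}
By construction these two preimages hash to the same value under $h_k$, and they are distinct because their $\process$-components differ ($i^\star \notin S_P$). Hence $B$ outputs a genuine collision of $h_k$ whenever $A$ outputs a $q'$-collision of $H_k$, so the success probability of $B$ is at least that of $A$. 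If $A$ succeeded with non-negligible probability, so would $B$, contradicting collision resistance of $\{h_k\}$. Therefore $\{H_k\}$ is $q'$-collision resistant for every $q' > q$, i.e.\ $q$-collision bounded.

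There is no real obstacle here; the proof is essentially a pigeonhole argument combined with the observation that the $\process$ function is deterministic and efficiently computable, so scanning $\mathcal{I}$ is polynomial time. The only subtlety worth flagging in the write-up is that the reduction must run over \emph{every} $q' > q$ uniformly (collision boundedness is a family statement), but since the reduction itself does not depend on $q'$ beyond the strict inequality $q' > q$ used to guarantee $|S_H \setminus S_P| > 0$, the same $B$ works for all such $q'$ simultaneously.
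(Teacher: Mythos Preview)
Your proposal is correct and follows essentially the same reduction as the paper: run $A$, then scan $\mathcal{I}$ for an index where the $H$-values collide but the $\process$-values differ, yielding a collision in $h$. If anything, your write-up is slightly more careful than the paper's, making the counting argument $|S_H\setminus S_P|\ge(q'-q)|\mathcal{I}|>0$ explicit and noting that the reduction is uniform in $q'$.
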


\begin{proof}
Let $\{h_k\}_{k \in K}$ be a collision resistant family of hash functions, and $H_k=\mathcal{C}(h_k,\lambda)$.  Suppose that the hypothesis of the statement holds but there exists a polynomial time algorithm $A$ which finds a $q$-collision in $H_k$ with non-negligible probability. We show next that $\{h_k\}_{k \in K}$ is not collision resistant: a contradiction.

Let $k=G(\lambda)$. For sufficiently large $\lambda$, a probabilistic ptime (in $\lambda$) algorithm B for finding a collision in $h$ with non-negligible probability is given by the following. Given $k$, $B$ computes $(r, x,r',x')=A(k)$. If $(r, x,r',x')$ is a $q'$-collision for some $q'>q$ (this happens with non-negligible probability) then $B$ computes an index $i$ such that $\process_{b, d}(i,x)\neq \process_{b, d}(i,x')$ but $H_k(i,r,x) = H_k(i,r',x')$. Note that such an $i$ must exist with probability $1$ and can be found by exhaustion in time $O(\mathcal{I})$ (and thus $O(\lambda)$.
By Equation~\ref{eq:basicform} we now have that $r||i||\process_{b, d}(i,x)$ and $r'||i||\process_{b, d}(i,x')$ form a collision in $h_k$. The algorithm $B$ outputs this collision. The fact that $B$ succeeds with non-negligible probability contradicts the collision resistance of the family $\{h_k\}_{k \in K}$.
\end{proof}

We now proceed to present each construction $\mathcal{C}_j$ by specifying the length of $\mathcal{I}$ and the digest function $d_j$ to sub into Equation~\ref{eq:basicform}.
Recall that all constructions are summarized in Table~\ref{tab:asymptotics-small}.

\myparagraph{Construction $0$.}
Firstly let us consider a trivial construction. Let $\mathcal{I}$ be the set containing only the empty string $\epsilon$ and let $d_0(i,x) = x$ and the block size be $b_0 = |x|$.
Let
\begin{align*}
h(r || \epsilon || \process_{b_0, d_0}(i,x)) = h(r || x).
\end{align*}

\begin{theorem}
$\mathcal{C}_0$ preserves $0$-collision boundedness.
\end{theorem}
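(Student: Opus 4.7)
The plan is to apply Theorem~\ref{thm:collision} directly with $q = 0$. Since Theorem~\ref{thm:collision} reduces $q$-collision boundedness of the construction to a purely combinatorial condition on $\process_{b, d}$, the task reduces to checking this condition for $d_0$ and $b_0 = |x|$.

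First, I would unfold the definitions: with $b_0 = |x|$ there is a single block, and $d_0(\epsilon, x) = x$, so $\process_{b_0, d_0}(\epsilon, x) = x$ for every input $x$. Then, for any pair $x \neq x'$, the set
\begin{equation*}
\{i \in \mathcal{I} \mid \process_{b_0, d_0}(i, x) = \process_{b_0, d_0}(i, x')\}
\end{equation*}
is empty, since its only candidate element $\epsilon$ would require $x = x'$. Therefore the cardinality is $0 = 0 \cdot |\mathcal{I}|$, so the hypothesis of Theorem~\ref{thm:collision} holds with $q = 0$.

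Invoking Theorem~\ref{thm:collision} then yields that $\mathcal{C}_0$ preserves $0$-collision boundedness, which is exactly the statement. There is no real obstacle here; the construction is the baseline one in which the adversary gets no help from the index, and collision boundedness is equivalent to ordinary collision resistance of $h$ applied to $r \| x$.
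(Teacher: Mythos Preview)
Your proposal is correct and matches the paper's own proof essentially verbatim: the paper simply notes that the identity function has no collisions, so the hypothesis of Theorem~\ref{thm:collision} holds with $q=0$, and the result is immediate.
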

\begin{proof}
By Theorem~\ref{thm:collision} this is immediate as the identity function has no collisions, and thus the condition of that theorem holds for $q = 0$.
\end{proof}

This construction is very simple and $d_0$ is trivial to compute. However,
we can improve over this by making the digest function $d$ compress the input so that $h$ only has to be computed on an input much smaller than $x$, resulting in a more efficient circuit.

\myparagraph{Construction $1$: $n$ AND gates.}
The digest function for construction $1$ is shown in Figure~\ref{fig:constructions-1-2-3} (left). As above, this construction is parameterised by a block size $b$. Let $\mathcal{I}=\{0,1\}^b$ and let $x_j$ be the block containing bits $jb$ through $jb+b-1$, inclusive, of $x$ (padding $x$ with zeros to length a multiple of $b$). We use $\&$ to denote bitwise AND, and let $\parity$ map bit strings to the XOR of all their bits. Then we define $d_1(i, x_j) = \parity(x_j \& i)$.

\tikzset{every picture/.style={line width=0.75pt}} 

\begin{figure*}[th!]
\resizebox{0.23\textwidth}{!}{%
\begin{tikzpicture}[x=0.75pt,y=0.75pt,yscale=-1,xscale=1]

\draw   (100,40) -- (120,40) -- (120,60) -- (100,60) -- cycle ;
\draw  [fill={rgb, 255:red, 0; green, 0; blue, 0 }  ,fill opacity=1 ] (100,60) -- (120,60) -- (120,80) -- (100,80) -- cycle ;
\draw   (100,80) -- (120,80) -- (120,100) -- (100,100) -- cycle ;
\draw   (100,100) -- (120,100) -- (120,120) -- (100,120) -- cycle ;
\draw  [fill={rgb, 255:red, 0; green, 0; blue, 0 }  ,fill opacity=1 ] (100,120) -- (120,120) -- (120,140) -- (100,140) -- cycle ;
\draw   (100,180) -- (120,180) -- (120,200) -- (100,200) -- cycle ;
\draw  [fill={rgb, 255:red, 0; green, 0; blue, 0 }  ,fill opacity=1 ] (108,150.5) .. controls (108,149.12) and (109.12,148) .. (110.5,148) .. controls (111.88,148) and (113,149.12) .. (113,150.5) .. controls (113,151.88) and (111.88,153) .. (110.5,153) .. controls (109.12,153) and (108,151.88) .. (108,150.5) -- cycle ;
\draw  [fill={rgb, 255:red, 0; green, 0; blue, 0 }  ,fill opacity=1 ] (108,160.5) .. controls (108,159.12) and (109.12,158) .. (110.5,158) .. controls (111.88,158) and (113,159.12) .. (113,160.5) .. controls (113,161.88) and (111.88,163) .. (110.5,163) .. controls (109.12,163) and (108,161.88) .. (108,160.5) -- cycle ;
\draw  [fill={rgb, 255:red, 0; green, 0; blue, 0 }  ,fill opacity=1 ] (108,170.5) .. controls (108,169.12) and (109.12,168) .. (110.5,168) .. controls (111.88,168) and (113,169.12) .. (113,170.5) .. controls (113,171.88) and (111.88,173) .. (110.5,173) .. controls (109.12,173) and (108,171.88) .. (108,170.5) -- cycle ;
\draw  [fill={rgb, 255:red, 0; green, 0; blue, 0 }  ,fill opacity=1 ] (100,200) -- (120,200) -- (120,220) -- (100,220) -- cycle ;
\draw  [fill={rgb, 255:red, 0; green, 0; blue, 0 }  ,fill opacity=1 ] (100,220) -- (120,220) -- (120,240) -- (100,240) -- cycle ;
\draw  [fill={rgb, 255:red, 0; green, 0; blue, 0 }  ,fill opacity=1 ] (200,40) -- (220,40) -- (220,60) -- (200,60) -- cycle ;
\draw  [fill={rgb, 255:red, 0; green, 0; blue, 0 }  ,fill opacity=1 ] (200,60) -- (220,60) -- (220,80) -- (200,80) -- cycle ;
\draw   (200,80) -- (220,80) -- (220,100) -- (200,100) -- cycle ;
\draw  [fill={rgb, 255:red, 0; green, 0; blue, 0 }  ,fill opacity=1 ] (200,100) -- (220,100) -- (220,120) -- (200,120) -- cycle ;
\draw   (200,120) -- (220,120) -- (220,140) -- (200,140) -- cycle ;
\draw  [fill={rgb, 255:red, 0; green, 0; blue, 0 }  ,fill opacity=1 ] (200,180) -- (220,180) -- (220,200) -- (200,200) -- cycle ;
\draw  [fill={rgb, 255:red, 0; green, 0; blue, 0 }  ,fill opacity=1 ] (208,150.5) .. controls (208,149.12) and (209.12,148) .. (210.5,148) .. controls (211.88,148) and (213,149.12) .. (213,150.5) .. controls (213,151.88) and (211.88,153) .. (210.5,153) .. controls (209.12,153) and (208,151.88) .. (208,150.5) -- cycle ;
\draw  [fill={rgb, 255:red, 0; green, 0; blue, 0 }  ,fill opacity=1 ] (208,160.5) .. controls (208,159.12) and (209.12,158) .. (210.5,158) .. controls (211.88,158) and (213,159.12) .. (213,160.5) .. controls (213,161.88) and (211.88,163) .. (210.5,163) .. controls (209.12,163) and (208,161.88) .. (208,160.5) -- cycle ;
\draw  [fill={rgb, 255:red, 0; green, 0; blue, 0 }  ,fill opacity=1 ] (208,170.5) .. controls (208,169.12) and (209.12,168) .. (210.5,168) .. controls (211.88,168) and (213,169.12) .. (213,170.5) .. controls (213,171.88) and (211.88,173) .. (210.5,173) .. controls (209.12,173) and (208,171.88) .. (208,170.5) -- cycle ;
\draw   (200,200) -- (220,200) -- (220,220) -- (200,220) -- cycle ;
\draw  [fill={rgb, 255:red, 0; green, 0; blue, 0 }  ,fill opacity=1 ] (200,220) -- (220,220) -- (220,240) -- (200,240) -- cycle ;
\draw  [fill={rgb, 255:red, 255; green, 255; blue, 255 }  ,fill opacity=1 ] (260,40) -- (280,40) -- (280,60) -- (260,60) -- cycle ;
\draw  [fill={rgb, 255:red, 0; green, 0; blue, 0 }  ,fill opacity=1 ] (260,60) -- (280,60) -- (280,80) -- (260,80) -- cycle ;
\draw   (260,80) -- (280,80) -- (280,100) -- (260,100) -- cycle ;
\draw  [fill={rgb, 255:red, 255; green, 255; blue, 255 }  ,fill opacity=1 ] (260,100) -- (280,100) -- (280,120) -- (260,120) -- cycle ;
\draw  [fill={rgb, 255:red, 255; green, 255; blue, 255 }  ,fill opacity=1 ] (260,120) -- (280,120) -- (280,140) -- (260,140) -- cycle ;
\draw  [fill={rgb, 255:red, 255; green, 255; blue, 255 }  ,fill opacity=1 ] (260,180) -- (280,180) -- (280,200) -- (260,200) -- cycle ;
\draw  [fill={rgb, 255:red, 0; green, 0; blue, 0 }  ,fill opacity=1 ] (268,150.5) .. controls (268,149.12) and (269.12,148) .. (270.5,148) .. controls (271.88,148) and (273,149.12) .. (273,150.5) .. controls (273,151.88) and (271.88,153) .. (270.5,153) .. controls (269.12,153) and (268,151.88) .. (268,150.5) -- cycle ;
\draw  [fill={rgb, 255:red, 0; green, 0; blue, 0 }  ,fill opacity=1 ] (268,160.5) .. controls (268,159.12) and (269.12,158) .. (270.5,158) .. controls (271.88,158) and (273,159.12) .. (273,160.5) .. controls (273,161.88) and (271.88,163) .. (270.5,163) .. controls (269.12,163) and (268,161.88) .. (268,160.5) -- cycle ;
\draw  [fill={rgb, 255:red, 0; green, 0; blue, 0 }  ,fill opacity=1 ] (268,170.5) .. controls (268,169.12) and (269.12,168) .. (270.5,168) .. controls (271.88,168) and (273,169.12) .. (273,170.5) .. controls (273,171.88) and (271.88,173) .. (270.5,173) .. controls (269.12,173) and (268,171.88) .. (268,170.5) -- cycle ;
\draw  [fill={rgb, 255:red, 255; green, 255; blue, 255 }  ,fill opacity=1 ] (260,200) -- (280,200) -- (280,220) -- (260,220) -- cycle ;
\draw  [fill={rgb, 255:red, 0; green, 0; blue, 0 }  ,fill opacity=1 ] (260,220) -- (280,220) -- (280,240) -- (260,240) -- cycle ;
\draw    (170,49.98) -- (200,50) ;
\draw    (120,49.98) -- (150,50) ;
\draw    (170,69.98) -- (200,70) ;
\draw    (120,69.98) -- (150,70) ;
\draw    (170,89.98) -- (200,90) ;
\draw    (120,89.98) -- (150,90) ;
\draw    (170,109.98) -- (200,110) ;
\draw    (120,109.98) -- (150,110) ;
\draw    (170,129.98) -- (200,130) ;
\draw    (120,129.98) -- (150,130) ;
\draw    (170,189.98) -- (200,190) ;
\draw    (120,189.98) -- (150,190) ;
\draw    (170,209.98) -- (200,210) ;
\draw    (120,209.98) -- (150,210) ;
\draw    (170,229.98) -- (200,230) ;
\draw    (120,229.98) -- (150,230) ;
\draw    (235,156.98) -- (245,157.07) ;
\draw    (235,161.98) -- (245,162.07) ;
\draw   (295,160) .. controls (295,157.24) and (297.24,155) .. (300,155) .. controls (302.76,155) and (305,157.24) .. (305,160) .. controls (305,162.76) and (302.76,165) .. (300,165) .. controls (297.24,165) and (295,162.76) .. (295,160) -- cycle ;
\draw    (300,155) -- (300,165) ;
\draw    (295,160) -- (305,160) ;
\draw    (280,89.8) -- (300,89.8) ;
\draw    (280,109.8) -- (300,109.8) ;
\draw    (280,129.8) -- (300,129.8) ;
\draw    (280,189.8) -- (300,189.8) ;
\draw    (280,209.8) -- (300,209.8) ;
\draw    (300,50.8) -- (300,150) ;
\draw    (300,170) -- (300,229.8) ;
\draw    (310,157.98) -- (320,158.07) ;
\draw    (310,162.98) -- (320,163.07) ;
\draw   (330,150) -- (350,150) -- (350,170) -- (330,170) -- cycle ;
\draw    (280,69.8) -- (300,69.8) ;
\draw    (280,50.8) -- (300,50.8) ;
\draw    (280,229.8) -- (300,229.8) ;

\draw (175,0) node [anchor=north west][inner sep=0.75pt]   [align=left] {input block};
\draw (203,18) node [anchor=north west][inner sep=0.75pt]    {$x_{j}$};
\draw (92,0) node [anchor=north west][inner sep=0.75pt]   [align=left] {index};
\draw (105,18) node [anchor=north west][inner sep=0.75pt]    {$i$};
\draw (155,45) node [anchor=north west][inner sep=0.75pt]  [font=\scriptsize] [align=left] {$\displaystyle \&$};
\draw (155,65) node [anchor=north west][inner sep=0.75pt]  [font=\scriptsize] [align=left] {$\displaystyle \&$};
\draw (155,105) node [anchor=north west][inner sep=0.75pt]  [font=\scriptsize] [align=left] {$\displaystyle \&$};
\draw (155,85) node [anchor=north west][inner sep=0.75pt]  [font=\scriptsize] [align=left] {$\displaystyle \&$};
\draw (155,125) node [anchor=north west][inner sep=0.75pt]  [font=\scriptsize] [align=left] {$\displaystyle \&$};
\draw (155,185) node [anchor=north west][inner sep=0.75pt]  [font=\scriptsize] [align=left] {$\displaystyle \&$};
\draw (155,205) node [anchor=north west][inner sep=0.75pt]  [font=\scriptsize] [align=left] {$\displaystyle \&$};
\draw (155,225) node [anchor=north west][inner sep=0.75pt]  [font=\scriptsize] [align=left] {$\displaystyle \&$};
\draw (150,-20) node [anchor=north west][inner sep=0.75pt]   [align=left] {\large\textbf{Construction 1}};

\end{tikzpicture}
}%
~~~~~~~~~~~~~~~~~~~~~~~~~~~~~
\resizebox{0.30\textwidth}{!}{%
\begin{tikzpicture}[x=0.75pt,y=0.75pt,yscale=-1,xscale=1]

\draw   (100,40) -- (120,40) -- (120,60) -- (100,60) -- cycle ;
\draw  [fill={rgb, 255:red, 0; green, 0; blue, 0 }  ,fill opacity=1 ] (100,60) -- (120,60) -- (120,80) -- (100,80) -- cycle ;
\draw   (100,80) -- (120,80) -- (120,100) -- (100,100) -- cycle ;
\draw   (100,100) -- (120,100) -- (120,120) -- (100,120) -- cycle ;
\draw  [fill={rgb, 255:red, 0; green, 0; blue, 0 }  ,fill opacity=1 ] (100,120) -- (120,120) -- (120,140) -- (100,140) -- cycle ;
\draw   (100,180) -- (120,180) -- (120,200) -- (100,200) -- cycle ;
\draw  [fill={rgb, 255:red, 0; green, 0; blue, 0 }  ,fill opacity=1 ] (108,150.5) .. controls (108,149.12) and (109.12,148) .. (110.5,148) .. controls (111.88,148) and (113,149.12) .. (113,150.5) .. controls (113,151.88) and (111.88,153) .. (110.5,153) .. controls (109.12,153) and (108,151.88) .. (108,150.5) -- cycle ;
\draw  [fill={rgb, 255:red, 0; green, 0; blue, 0 }  ,fill opacity=1 ] (108,160.5) .. controls (108,159.12) and (109.12,158) .. (110.5,158) .. controls (111.88,158) and (113,159.12) .. (113,160.5) .. controls (113,161.88) and (111.88,163) .. (110.5,163) .. controls (109.12,163) and (108,161.88) .. (108,160.5) -- cycle ;
\draw  [fill={rgb, 255:red, 0; green, 0; blue, 0 }  ,fill opacity=1 ] (108,170.5) .. controls (108,169.12) and (109.12,168) .. (110.5,168) .. controls (111.88,168) and (113,169.12) .. (113,170.5) .. controls (113,171.88) and (111.88,173) .. (110.5,173) .. controls (109.12,173) and (108,171.88) .. (108,170.5) -- cycle ;
\draw  [fill={rgb, 255:red, 0; green, 0; blue, 0 }  ,fill opacity=1 ] (100,200) -- (120,200) -- (120,220) -- (100,220) -- cycle ;
\draw  [fill={rgb, 255:red, 0; green, 0; blue, 0 }  ,fill opacity=1 ] (100,220) -- (120,220) -- (120,240) -- (100,240) -- cycle ;
\draw  [fill={rgb, 255:red, 0; green, 0; blue, 0 }  ,fill opacity=1 ] (200,40) -- (220,40) -- (220,60) -- (200,60) -- cycle ;
\draw  [fill={rgb, 255:red, 0; green, 0; blue, 0 }  ,fill opacity=1 ] (200,60) -- (220,60) -- (220,80) -- (200,80) -- cycle ;
\draw   (200,80) -- (220,80) -- (220,100) -- (200,100) -- cycle ;
\draw  [fill={rgb, 255:red, 0; green, 0; blue, 0 }  ,fill opacity=1 ] (200,100) -- (220,100) -- (220,120) -- (200,120) -- cycle ;
\draw   (200,120) -- (220,120) -- (220,140) -- (200,140) -- cycle ;
\draw  [fill={rgb, 255:red, 0; green, 0; blue, 0 }  ,fill opacity=1 ] (200,180) -- (220,180) -- (220,200) -- (200,200) -- cycle ;
\draw  [fill={rgb, 255:red, 0; green, 0; blue, 0 }  ,fill opacity=1 ] (208,150.5) .. controls (208,149.12) and (209.12,148) .. (210.5,148) .. controls (211.88,148) and (213,149.12) .. (213,150.5) .. controls (213,151.88) and (211.88,153) .. (210.5,153) .. controls (209.12,153) and (208,151.88) .. (208,150.5) -- cycle ;
\draw  [fill={rgb, 255:red, 0; green, 0; blue, 0 }  ,fill opacity=1 ] (208,160.5) .. controls (208,159.12) and (209.12,158) .. (210.5,158) .. controls (211.88,158) and (213,159.12) .. (213,160.5) .. controls (213,161.88) and (211.88,163) .. (210.5,163) .. controls (209.12,163) and (208,161.88) .. (208,160.5) -- cycle ;
\draw  [fill={rgb, 255:red, 0; green, 0; blue, 0 }  ,fill opacity=1 ] (208,170.5) .. controls (208,169.12) and (209.12,168) .. (210.5,168) .. controls (211.88,168) and (213,169.12) .. (213,170.5) .. controls (213,171.88) and (211.88,173) .. (210.5,173) .. controls (209.12,173) and (208,171.88) .. (208,170.5) -- cycle ;
\draw   (200,200) -- (220,200) -- (220,220) -- (200,220) -- cycle ;
\draw  [fill={rgb, 255:red, 0; green, 0; blue, 0 }  ,fill opacity=1 ] (200,220) -- (220,220) -- (220,240) -- (200,240) -- cycle ;
\draw  [fill={rgb, 255:red, 0; green, 0; blue, 0 }  ,fill opacity=1 ] (260,40) -- (280,40) -- (280,60) -- (260,60) -- cycle ;
\draw   (260,60) -- (280,60) -- (280,80) -- (260,80) -- cycle ;
\draw   (260,80) -- (280,80) -- (280,100) -- (260,100) -- cycle ;
\draw  [fill={rgb, 255:red, 0; green, 0; blue, 0 }  ,fill opacity=1 ] (260,100) -- (280,100) -- (280,120) -- (260,120) -- cycle ;
\draw  [fill={rgb, 255:red, 0; green, 0; blue, 0 }  ,fill opacity=1 ] (260,120) -- (280,120) -- (280,140) -- (260,140) -- cycle ;
\draw  [fill={rgb, 255:red, 0; green, 0; blue, 0 }  ,fill opacity=1 ] (260,180) -- (280,180) -- (280,200) -- (260,200) -- cycle ;
\draw  [fill={rgb, 255:red, 0; green, 0; blue, 0 }  ,fill opacity=1 ] (268,150.5) .. controls (268,149.12) and (269.12,148) .. (270.5,148) .. controls (271.88,148) and (273,149.12) .. (273,150.5) .. controls (273,151.88) and (271.88,153) .. (270.5,153) .. controls (269.12,153) and (268,151.88) .. (268,150.5) -- cycle ;
\draw  [fill={rgb, 255:red, 0; green, 0; blue, 0 }  ,fill opacity=1 ] (268,160.5) .. controls (268,159.12) and (269.12,158) .. (270.5,158) .. controls (271.88,158) and (273,159.12) .. (273,160.5) .. controls (273,161.88) and (271.88,163) .. (270.5,163) .. controls (269.12,163) and (268,161.88) .. (268,160.5) -- cycle ;
\draw  [fill={rgb, 255:red, 0; green, 0; blue, 0 }  ,fill opacity=1 ] (268,170.5) .. controls (268,169.12) and (269.12,168) .. (270.5,168) .. controls (271.88,168) and (273,169.12) .. (273,170.5) .. controls (273,171.88) and (271.88,173) .. (270.5,173) .. controls (269.12,173) and (268,171.88) .. (268,170.5) -- cycle ;
\draw  [fill={rgb, 255:red, 0; green, 0; blue, 0 }  ,fill opacity=1 ] (260,200) -- (280,200) -- (280,220) -- (260,220) -- cycle ;
\draw   (260,220) -- (280,220) -- (280,240) -- (260,240) -- cycle ;
\draw   (155,50) .. controls (155,47.24) and (157.24,45) .. (160,45) .. controls (162.76,45) and (165,47.24) .. (165,50) .. controls (165,52.76) and (162.76,55) .. (160,55) .. controls (157.24,55) and (155,52.76) .. (155,50) -- cycle ;
\draw   (155,70) .. controls (155,67.24) and (157.24,65) .. (160,65) .. controls (162.76,65) and (165,67.24) .. (165,70) .. controls (165,72.76) and (162.76,75) .. (160,75) .. controls (157.24,75) and (155,72.76) .. (155,70) -- cycle ;
\draw   (155,90) .. controls (155,87.24) and (157.24,85) .. (160,85) .. controls (162.76,85) and (165,87.24) .. (165,90) .. controls (165,92.76) and (162.76,95) .. (160,95) .. controls (157.24,95) and (155,92.76) .. (155,90) -- cycle ;
\draw   (155,110) .. controls (155,107.24) and (157.24,105) .. (160,105) .. controls (162.76,105) and (165,107.24) .. (165,110) .. controls (165,112.76) and (162.76,115) .. (160,115) .. controls (157.24,115) and (155,112.76) .. (155,110) -- cycle ;
\draw   (155,130) .. controls (155,127.24) and (157.24,125) .. (160,125) .. controls (162.76,125) and (165,127.24) .. (165,130) .. controls (165,132.76) and (162.76,135) .. (160,135) .. controls (157.24,135) and (155,132.76) .. (155,130) -- cycle ;
\draw   (155,190) .. controls (155,187.24) and (157.24,185) .. (160,185) .. controls (162.76,185) and (165,187.24) .. (165,190) .. controls (165,192.76) and (162.76,195) .. (160,195) .. controls (157.24,195) and (155,192.76) .. (155,190) -- cycle ;
\draw   (155,210) .. controls (155,207.24) and (157.24,205) .. (160,205) .. controls (162.76,205) and (165,207.24) .. (165,210) .. controls (165,212.76) and (162.76,215) .. (160,215) .. controls (157.24,215) and (155,212.76) .. (155,210) -- cycle ;
\draw   (155,230) .. controls (155,227.24) and (157.24,225) .. (160,225) .. controls (162.76,225) and (165,227.24) .. (165,230) .. controls (165,232.76) and (162.76,235) .. (160,235) .. controls (157.24,235) and (155,232.76) .. (155,230) -- cycle ;
\draw    (160,125) -- (160,135) ;
\draw    (155,130) -- (165,130) ;
\draw    (160,105) -- (160,115) ;
\draw    (155,110) -- (165,110) ;
\draw    (160,185) -- (160,195) ;
\draw    (155,190) -- (165,190) ;
\draw    (160,205) -- (160,215) ;
\draw    (155,210) -- (165,210) ;
\draw    (160,225) -- (160,235) ;
\draw    (155,230) -- (165,230) ;
\draw    (160,45) -- (160,55) ;
\draw    (155,50) -- (165,50) ;
\draw    (160,65) -- (160,75) ;
\draw    (155,70) -- (165,70) ;
\draw    (160,85) -- (160,95) ;
\draw    (155,90) -- (165,90) ;
\draw    (170,49.98) -- (200,50) ;
\draw    (120,49.98) -- (150,50) ;
\draw    (170,69.98) -- (200,70) ;
\draw    (120,69.98) -- (150,70) ;
\draw    (170,89.98) -- (200,90) ;
\draw    (120,89.98) -- (150,90) ;
\draw    (170,109.98) -- (200,110) ;
\draw    (120,109.98) -- (150,110) ;
\draw    (170,129.98) -- (200,130) ;
\draw    (120,129.98) -- (150,130) ;
\draw    (170,189.98) -- (200,190) ;
\draw    (120,189.98) -- (150,190) ;
\draw    (170,209.98) -- (200,210) ;
\draw    (120,209.98) -- (150,210) ;
\draw    (170,229.98) -- (200,230) ;
\draw    (120,229.98) -- (150,230) ;
\draw    (235,156.98) -- (245,157.07) ;
\draw    (235,161.98) -- (245,162.07) ;
\draw    (280,49.98) -- (300,49.98) ;
\draw    (300,49.98) ;
\draw    (300,49.98) -- (300,55) ;
\draw    (280,69.98) -- (300,69.98) ;
\draw    (300,64.97) -- (300,69.98) ;
\draw    (280,89.98) -- (300,89.98) ;
\draw    (300,89.98) ;
\draw    (300,89.98) -- (300,95) ;
\draw    (280,109.98) -- (300,109.98) ;
\draw    (300,104.97) -- (300,109.98) ;
\draw    (280,129.98) -- (300,129.98) ;
\draw    (300,129.98) ;
\draw    (300,129.98) -- (300,135) ;
\draw    (280,149.98) -- (300,149.98) ;
\draw    (300,144.97) -- (300,149.98) ;
\draw    (280,169.98) -- (300,169.98) ;
\draw    (300,169.98) ;
\draw    (300,169.98) -- (300,175) ;
\draw    (280,189.98) -- (300,189.98) ;
\draw    (300,184.97) -- (300,189.98) ;
\draw    (280,210.38) -- (300,210.38) ;
\draw    (300,210.38) ;
\draw    (300,210.38) -- (300,215.4) ;
\draw    (280,230.38) -- (300,230.38) ;
\draw    (300,225.37) -- (300,230.38) ;
\draw    (315,157.98) -- (325,158.07) ;
\draw    (315,162.98) -- (325,163.07) ;
\draw   (340,80) -- (360,80) -- (360,100) -- (340,100) -- cycle ;
\draw   (340,100) -- (360,100) -- (360,120) -- (340,120) -- cycle ;
\draw  [fill={rgb, 255:red, 0; green, 0; blue, 0 }  ,fill opacity=1 ] (340,120) -- (360,120) -- (360,140) -- (340,140) -- cycle ;
\draw  [fill={rgb, 255:red, 0; green, 0; blue, 0 }  ,fill opacity=1 ] (348,150.5) .. controls (348,149.12) and (349.12,148) .. (350.5,148) .. controls (351.88,148) and (353,149.12) .. (353,150.5) .. controls (353,151.88) and (351.88,153) .. (350.5,153) .. controls (349.12,153) and (348,151.88) .. (348,150.5) -- cycle ;
\draw  [fill={rgb, 255:red, 0; green, 0; blue, 0 }  ,fill opacity=1 ] (348,160.5) .. controls (348,159.12) and (349.12,158) .. (350.5,158) .. controls (351.88,158) and (353,159.12) .. (353,160.5) .. controls (353,161.88) and (351.88,163) .. (350.5,163) .. controls (349.12,163) and (348,161.88) .. (348,160.5) -- cycle ;
\draw  [fill={rgb, 255:red, 0; green, 0; blue, 0 }  ,fill opacity=1 ] (348,170.5) .. controls (348,169.12) and (349.12,168) .. (350.5,168) .. controls (351.88,168) and (353,169.12) .. (353,170.5) .. controls (353,171.88) and (351.88,173) .. (350.5,173) .. controls (349.12,173) and (348,171.88) .. (348,170.5) -- cycle ;
\draw  [fill={rgb, 255:red, 0; green, 0; blue, 0 }  ,fill opacity=1 ] (340,180) -- (360,180) -- (360,200) -- (340,200) -- cycle ;
\draw   (340,200) -- (360,200) -- (360,220) -- (340,220) -- cycle ;
\draw   (375,160) .. controls (375,157.24) and (377.24,155) .. (380,155) .. controls (382.76,155) and (385,157.24) .. (385,160) .. controls (385,162.76) and (382.76,165) .. (380,165) .. controls (377.24,165) and (375,162.76) .. (375,160) -- cycle ;
\draw    (380,155) -- (380,165) ;
\draw    (375,160) -- (385,160) ;
\draw    (360,89.8) -- (380,89.8) ;
\draw    (360,109.8) -- (380,109.8) ;
\draw    (360,129.8) -- (380,129.8) ;
\draw    (360,189.8) -- (380,189.8) ;
\draw    (360,209.8) -- (380,209.8) ;
\draw    (380,89.8) -- (380,150) ;
\draw    (380,170) -- (380,209.8) ;
\draw    (390,157.98) -- (400,158.07) ;
\draw    (390,162.98) -- (400,163.07) ;
\draw   (410,150) -- (430,150) -- (430,170) -- (410,170) -- cycle ;

\draw (295,55) node [anchor=north west][inner sep=0.75pt]  [font=\scriptsize] [align=left] {$\displaystyle \&$};
\draw (295,95) node [anchor=north west][inner sep=0.75pt]  [font=\scriptsize] [align=left] {$\displaystyle \&$};
\draw (295,135) node [anchor=north west][inner sep=0.75pt]  [font=\scriptsize] [align=left] {$\displaystyle \&$};
\draw (295,175) node [anchor=north west][inner sep=0.75pt]  [font=\scriptsize] [align=left] {$\displaystyle \&$};
\draw (295,215) node [anchor=north west][inner sep=0.75pt]  [font=\scriptsize] [align=left] {$\displaystyle \&$};
\draw (175,0) node [anchor=north west][inner sep=0.75pt]   [align=left] {input block};
\draw (203,18) node [anchor=north west][inner sep=0.75pt]    {$x_{j}$};
\draw (92,0) node [anchor=north west][inner sep=0.75pt]   [align=left] {index};
\draw (105,18) node [anchor=north west][inner sep=0.75pt]    {$i$};
\draw (175,-20) node [anchor=north west][inner sep=0.75pt]   [align=left] {\large\textbf{Construction 2}};

\end{tikzpicture}
}%
~~~~~~~~~~~~~~~~~~~~~~~~~~~~~
%
\resizebox{0.31\textwidth}{!}{%
\begin{tikzpicture}[x=0.75pt,y=0.75pt,yscale=-1,xscale=1]

\draw  [fill={rgb, 255:red, 0; green, 0; blue, 0 }  ,fill opacity=1 ] (100,40) -- (120,40) -- (120,60) -- (100,60) -- cycle ;
\draw   (100,60) -- (120,60) -- (120,80) -- (100,80) -- cycle ;
\draw   (100,80) -- (120,80) -- (120,100) -- (100,100) -- cycle ;
\draw  [fill={rgb, 255:red, 0; green, 0; blue, 0 }  ,fill opacity=1 ] (100,100) -- (120,100) -- (120,120) -- (100,120) -- cycle ;
\draw  [fill={rgb, 255:red, 0; green, 0; blue, 0 }  ,fill opacity=1 ] (100,120) -- (120,120) -- (120,140) -- (100,140) -- cycle ;
\draw   (100,180) -- (120,180) -- (120,200) -- (100,200) -- cycle ;
\draw  [fill={rgb, 255:red, 0; green, 0; blue, 0 }  ,fill opacity=1 ] (108,150.5) .. controls (108,149.12) and (109.12,148) .. (110.5,148) .. controls (111.88,148) and (113,149.12) .. (113,150.5) .. controls (113,151.88) and (111.88,153) .. (110.5,153) .. controls (109.12,153) and (108,151.88) .. (108,150.5) -- cycle ;
\draw  [fill={rgb, 255:red, 0; green, 0; blue, 0 }  ,fill opacity=1 ] (108,160.5) .. controls (108,159.12) and (109.12,158) .. (110.5,158) .. controls (111.88,158) and (113,159.12) .. (113,160.5) .. controls (113,161.88) and (111.88,163) .. (110.5,163) .. controls (109.12,163) and (108,161.88) .. (108,160.5) -- cycle ;
\draw  [fill={rgb, 255:red, 0; green, 0; blue, 0 }  ,fill opacity=1 ] (108,170.5) .. controls (108,169.12) and (109.12,168) .. (110.5,168) .. controls (111.88,168) and (113,169.12) .. (113,170.5) .. controls (113,171.88) and (111.88,173) .. (110.5,173) .. controls (109.12,173) and (108,171.88) .. (108,170.5) -- cycle ;
\draw  (100,200) -- (120,200) -- (120,220) -- (100,220) -- cycle ;
\draw  [fill={rgb, 255:red, 0; green, 0; blue, 0 }  ,fill opacity=1 ] (100,220) -- (120,220) -- (120,240) -- (100,240) -- cycle ;
\draw  [fill={rgb, 255:red, 0; green, 0; blue, 0 }  ,fill opacity=1 ] (200,40) -- (220,40) -- (220,60) -- (200,60) -- cycle ;
\draw  [fill={rgb, 255:red, 0; green, 0; blue, 0 }  ,fill opacity=1 ] (200,60) -- (220,60) -- (220,80) -- (200,80) -- cycle ;
\draw   (200,80) -- (220,80) -- (220,100) -- (200,100) -- cycle ;
\draw  [fill={rgb, 255:red, 0; green, 0; blue, 0 }  ,fill opacity=1 ] (200,100) -- (220,100) -- (220,120) -- (200,120) -- cycle ;
\draw   (200,120) -- (220,120) -- (220,140) -- (200,140) -- cycle ;
\draw  [fill={rgb, 255:red, 0; green, 0; blue, 0 }  ,fill opacity=1 ] (200,180) -- (220,180) -- (220,200) -- (200,200) -- cycle ;
\draw  [fill={rgb, 255:red, 0; green, 0; blue, 0 }  ,fill opacity=1 ] (208,150.5) .. controls (208,149.12) and (209.12,148) .. (210.5,148) .. controls (211.88,148) and (213,149.12) .. (213,150.5) .. controls (213,151.88) and (211.88,153) .. (210.5,153) .. controls (209.12,153) and (208,151.88) .. (208,150.5) -- cycle ;
\draw  [fill={rgb, 255:red, 0; green, 0; blue, 0 }  ,fill opacity=1 ] (208,160.5) .. controls (208,159.12) and (209.12,158) .. (210.5,158) .. controls (211.88,158) and (213,159.12) .. (213,160.5) .. controls (213,161.88) and (211.88,163) .. (210.5,163) .. controls (209.12,163) and (208,161.88) .. (208,160.5) -- cycle ;
\draw  [fill={rgb, 255:red, 0; green, 0; blue, 0 }  ,fill opacity=1 ] (208,170.5) .. controls (208,169.12) and (209.12,168) .. (210.5,168) .. controls (211.88,168) and (213,169.12) .. (213,170.5) .. controls (213,171.88) and (211.88,173) .. (210.5,173) .. controls (209.12,173) and (208,171.88) .. (208,170.5) -- cycle ;
\draw   (200,200) -- (220,200) -- (220,220) -- (200,220) -- cycle ;
\draw  [fill={rgb, 255:red, 0; green, 0; blue, 0 }  ,fill opacity=1 ] (200,220) -- (220,220) -- (220,240) -- (200,240) -- cycle ;
\draw   (260,40) -- (280,40) -- (280,60) -- (260,60) -- cycle ;
\draw   [fill={rgb, 255:red, 0; green, 0; blue, 0 }  ,fill opacity=1 ] (260,60) -- (280,60) -- (280,80) -- (260,80) -- cycle ;
\draw   (260,80) -- (280,80) -- (280,100) -- (260,100) -- cycle ;
\draw   (260,100) -- (280,100) -- (280,120) -- (260,120) -- cycle ;
\draw  [fill={rgb, 255:red, 0; green, 0; blue, 0 }  ,fill opacity=1 ] (260,120) -- (280,120) -- (280,140) -- (260,140) -- cycle ;
\draw  [fill={rgb, 255:red, 0; green, 0; blue, 0 }  ,fill opacity=1 ] (260,180) -- (280,180) -- (280,200) -- (260,200) -- cycle ;
\draw  [fill={rgb, 255:red, 0; green, 0; blue, 0 }  ,fill opacity=1 ] (268,150.5) .. controls (268,149.12) and (269.12,148) .. (270.5,148) .. controls (271.88,148) and (273,149.12) .. (273,150.5) .. controls (273,151.88) and (271.88,153) .. (270.5,153) .. controls (269.12,153) and (268,151.88) .. (268,150.5) -- cycle ;
\draw  [fill={rgb, 255:red, 0; green, 0; blue, 0 }  ,fill opacity=1 ] (268,160.5) .. controls (268,159.12) and (269.12,158) .. (270.5,158) .. controls (271.88,158) and (273,159.12) .. (273,160.5) .. controls (273,161.88) and (271.88,163) .. (270.5,163) .. controls (269.12,163) and (268,161.88) .. (268,160.5) -- cycle ;
\draw  [fill={rgb, 255:red, 0; green, 0; blue, 0 }  ,fill opacity=1 ] (268,170.5) .. controls (268,169.12) and (269.12,168) .. (270.5,168) .. controls (271.88,168) and (273,169.12) .. (273,170.5) .. controls (273,171.88) and (271.88,173) .. (270.5,173) .. controls (269.12,173) and (268,171.88) .. (268,170.5) -- cycle ;
\draw   (260,200) -- (280,200) -- (280,220) -- (260,220) -- cycle ;
\draw   (260,220) -- (280,220) -- (280,240) -- (260,240) -- cycle ;
\draw   (155,50) .. controls (155,47.24) and (157.24,45) .. (160,45) .. controls (162.76,45) and (165,47.24) .. (165,50) .. controls (165,52.76) and (162.76,55) .. (160,55) .. controls (157.24,55) and (155,52.76) .. (155,50) -- cycle ;
\draw   (155,70) .. controls (155,67.24) and (157.24,65) .. (160,65) .. controls (162.76,65) and (165,67.24) .. (165,70) .. controls (165,72.76) and (162.76,75) .. (160,75) .. controls (157.24,75) and (155,72.76) .. (155,70) -- cycle ;
\draw   (155,90) .. controls (155,87.24) and (157.24,85) .. (160,85) .. controls (162.76,85) and (165,87.24) .. (165,90) .. controls (165,92.76) and (162.76,95) .. (160,95) .. controls (157.24,95) and (155,92.76) .. (155,90) -- cycle ;
\draw   (155,110) .. controls (155,107.24) and (157.24,105) .. (160,105) .. controls (162.76,105) and (165,107.24) .. (165,110) .. controls (165,112.76) and (162.76,115) .. (160,115) .. controls (157.24,115) and (155,112.76) .. (155,110) -- cycle ;
\draw   (155,130) .. controls (155,127.24) and (157.24,125) .. (160,125) .. controls (162.76,125) and (165,127.24) .. (165,130) .. controls (165,132.76) and (162.76,135) .. (160,135) .. controls (157.24,135) and (155,132.76) .. (155,130) -- cycle ;
\draw   (155,190) .. controls (155,187.24) and (157.24,185) .. (160,185) .. controls (162.76,185) and (165,187.24) .. (165,190) .. controls (165,192.76) and (162.76,195) .. (160,195) .. controls (157.24,195) and (155,192.76) .. (155,190) -- cycle ;
\draw   (155,210) .. controls (155,207.24) and (157.24,205) .. (160,205) .. controls (162.76,205) and (165,207.24) .. (165,210) .. controls (165,212.76) and (162.76,215) .. (160,215) .. controls (157.24,215) and (155,212.76) .. (155,210) -- cycle ;
\draw   (155,230) .. controls (155,227.24) and (157.24,225) .. (160,225) .. controls (162.76,225) and (165,227.24) .. (165,230) .. controls (165,232.76) and (162.76,235) .. (160,235) .. controls (157.24,235) and (155,232.76) .. (155,230) -- cycle ;
\draw    (160,125) -- (160,135) ;
\draw    (155,130) -- (165,130) ;
\draw    (160,105) -- (160,115) ;
\draw    (155,110) -- (165,110) ;
\draw    (160,185) -- (160,195) ;
\draw    (155,190) -- (165,190) ;
\draw    (160,205) -- (160,215) ;
\draw    (155,210) -- (165,210) ;
\draw    (160,225) -- (160,235) ;
\draw    (155,230) -- (165,230) ;
\draw    (160,45) -- (160,55) ;
\draw    (155,50) -- (165,50) ;
\draw    (160,65) -- (160,75) ;
\draw    (155,70) -- (165,70) ;
\draw    (160,85) -- (160,95) ;
\draw    (155,90) -- (165,90) ;
\draw    (170,49.98) -- (200,50) ;
\draw    (120,49.98) -- (150,50) ;
\draw    (170,69.98) -- (200,70) ;
\draw    (120,69.98) -- (150,70) ;
\draw    (170,89.98) -- (200,90) ;
\draw    (120,89.98) -- (150,90) ;
\draw    (170,109.98) -- (200,110) ;
\draw    (120,109.98) -- (150,110) ;
\draw    (170,129.98) -- (200,130) ;
\draw    (120,129.98) -- (150,130) ;
\draw    (170,189.98) -- (200,190) ;
\draw    (120,189.98) -- (150,190) ;
\draw    (170,209.98) -- (200,210) ;
\draw    (120,209.98) -- (150,210) ;
\draw    (170,229.98) -- (200,230) ;
\draw    (120,229.98) -- (150,230) ;
\draw    (235,156.98) -- (245,157.07) ;
\draw    (235,161.98) -- (245,162.07) ;
\draw    (280,49.98) -- (300,49.98) ;
\draw    (300,49.98) ;
\draw    (300,49.98) -- (300,55) ;
\draw    (280,69.98) -- (300,69.98) ;
\draw    (300,64.97) -- (300,69.98) ;
\draw    (280,89.98) -- (300,89.98) ;
\draw    (300,89.98) ;
\draw    (300,89.98) -- (300,95) ;
\draw    (280,109.98) -- (300,109.98) ;
\draw    (300,104.97) -- (300,109.98) ;
\draw    (280,129.98) -- (300,129.98) ;
\draw    (300,129.98) ;
\draw    (300,129.98) -- (300,135) ;
\draw    (280,149.98) -- (300,149.98) ;
\draw    (300,144.97) -- (300,149.98) ;
\draw    (280,169.98) -- (300,169.98) ;
\draw    (300,169.98) ;
\draw    (300,169.98) -- (300,175) ;
\draw    (280,189.98) -- (300,189.98) ;
\draw    (300,184.97) -- (300,189.98) ;
\draw    (280,210.38) -- (300,210.38) ;
\draw    (300,210.38) ;
\draw    (300,210.38) -- (300,215.4) ;
\draw    (280,230.38) -- (300,230.38) ;
\draw    (300,225.37) -- (300,230.38) ;
\draw    (315,157.98) -- (325,158.07) ;
\draw    (315,162.98) -- (325,163.07) ;
\draw   (340,80) -- (360,80) -- (360,100) -- (340,100) -- cycle ;
\draw   (340,100) -- (360,100) -- (360,120) -- (340,120) -- cycle ;
\draw  [fill={rgb, 255:red, 0; green, 0; blue, 0 }  ,fill opacity=1 ] (340,120) -- (360,120) -- (360,140) -- (340,140) -- cycle ;
\draw  [fill={rgb, 255:red, 0; green, 0; blue, 0 }  ,fill opacity=1 ] (348,150.5) .. controls (348,149.12) and (349.12,148) .. (350.5,148) .. controls (351.88,148) and (353,149.12) .. (353,150.5) .. controls (353,151.88) and (351.88,153) .. (350.5,153) .. controls (349.12,153) and (348,151.88) .. (348,150.5) -- cycle ;
\draw  [fill={rgb, 255:red, 0; green, 0; blue, 0 }  ,fill opacity=1 ] (348,160.5) .. controls (348,159.12) and (349.12,158) .. (350.5,158) .. controls (351.88,158) and (353,159.12) .. (353,160.5) .. controls (353,161.88) and (351.88,163) .. (350.5,163) .. controls (349.12,163) and (348,161.88) .. (348,160.5) -- cycle ;
\draw  [fill={rgb, 255:red, 0; green, 0; blue, 0 }  ,fill opacity=1 ] (348,170.5) .. controls (348,169.12) and (349.12,168) .. (350.5,168) .. controls (351.88,168) and (353,169.12) .. (353,170.5) .. controls (353,171.88) and (351.88,173) .. (350.5,173) .. controls (349.12,173) and (348,171.88) .. (348,170.5) -- cycle ;
\draw  [fill={rgb, 255:red, 0; green, 0; blue, 0 }  ,fill opacity=1 ] (340,180) -- (360,180) -- (360,200) -- (340,200) -- cycle ;
\draw   (340,200) -- (360,200) -- (360,220) -- (340,220) -- cycle ;
\draw   (375,160) .. controls (375,157.24) and (377.24,155) .. (380,155) .. controls (382.76,155) and (385,157.24) .. (385,160) .. controls (385,162.76) and (382.76,165) .. (380,165) .. controls (377.24,165) and (375,162.76) .. (375,160) -- cycle ;
\draw    (380,155) -- (380,165) ;
\draw    (375,160) -- (385,160) ;
\draw    (360,89.8) -- (380,89.8) ;
\draw    (360,109.8) -- (380,109.8) ;
\draw    (360,129.8) -- (380,129.8) ;
\draw    (360,189.8) -- (380,189.8) ;
\draw    (360,209.8) -- (380,209.8) ;
\draw    (380,89.8) -- (380,150) ;
\draw    (380,170) -- (380,209.8) ;
\draw    (390,157.98) -- (400,158.07) ;
\draw    (390,162.98) -- (400,163.07) ;
\draw   (410,150) -- (430,150) -- (430,170) -- (410,170) -- cycle ;

\draw (295,55) node [anchor=north west][inner sep=0.75pt]  [font=\scriptsize] [align=left] {$\displaystyle \&$};
\draw (295,95) node [anchor=north west][inner sep=0.75pt]  [font=\scriptsize] [align=left] {$\displaystyle \&$};
\draw (295,135) node [anchor=north west][inner sep=0.75pt]  [font=\scriptsize] [align=left] {$\displaystyle \&$};
\draw (295,175) node [anchor=north west][inner sep=0.75pt]  [font=\scriptsize] [align=left] {$\displaystyle \&$};
\draw (295,215) node [anchor=north west][inner sep=0.75pt]  [font=\scriptsize] [align=left] {$\displaystyle \&$};
\draw (175,0) node [anchor=north west][inner sep=0.75pt]   [align=left] {input block};
\draw (203,18) node [anchor=north west][inner sep=0.75pt]    {$x_{j}$};
\draw (85,0) node [anchor=north west][inner sep=0.75pt]   [align=left] {random};
\draw (82,16) node [anchor=north west][inner sep=0.75pt]    {$\mathtt{prng}( i,b)$};
\draw (175,-20) node [anchor=north west][inner sep=0.75pt]   [align=left] {\large\textbf{Construction 3}};

\end{tikzpicture}
}%

\caption{A depiction of digest functions $d_1$ (left), $d_2$ (center), and $d_3$ (right).}
\label{fig:constructions-1-2-3}
\end{figure*}

\begin{theorem}
$\mathcal{C}_1$ preserves $1/2$-collision boundedness.
\end{theorem}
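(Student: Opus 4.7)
The plan is to apply Theorem~\ref{thm:collision}, so it suffices to show that for every pair $x \neq x'$ we have
\begin{equation*}
|\{i \in \mathcal{I} \mid \process_{b,d_1}(i,x) = \process_{b,d_1}(i,x')\}| \leq \tfrac{1}{2}|\mathcal{I}|.
\end{equation*}
Since $\process_{b,d_1}(i,x)$ is the concatenation of the per-block digests $d_1(i, x_j) = \parity(x_j \& i)$, equality of the two processed outputs requires in particular equality on every block. Hence it is enough to produce a single block index $j^*$ on which the digests disagree for at least half of the $i \in \mathcal{I}$.

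First I would pick $j^*$ such that $x_{j^*} \neq x'_{j^*}$, which exists since $x \neq x'$ (padding blocks of zeros match automatically, so the disagreement must be within a genuine block). Let $\Delta = x_{j^*} \oplus x'_{j^*} \in \{0,1\}^b \setminus \{0\}$. The key observation is that $\parity$ is $\mathbb{F}_2$-linear, so
\begin{equation*}
d_1(i, x_{j^*}) \oplus d_1(i, x'_{j^*}) = \parity\bigl((x_{j^*} \oplus x'_{j^*}) \& i\bigr) = \parity(\Delta \& i) = \langle \Delta, i \rangle,
\end{equation*}
i.e.\ the $\mathbb{F}_2$ inner product of $\Delta$ and $i$.

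Thus the block-$j^*$ digests agree exactly when $i$ lies in the kernel of the nonzero linear functional $i \mapsto \langle \Delta, i \rangle$ on $\mathbb{F}_2^b$. Since $\Delta \neq 0$, this kernel has dimension $b-1$ and hence exactly $2^{b-1} = |\mathcal{I}|/2$ elements. Consequently at most $|\mathcal{I}|/2$ indices can give equal processed outputs, which is the hypothesis of Theorem~\ref{thm:collision} with $q = 1/2$.

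There is no real obstacle here: the whole argument hinges on the linearity of $\parity$ over $\mathbb{F}_2$ and the standard fact that a nonzero linear functional on $\mathbb{F}_2^b$ has a kernel of size $2^{b-1}$. The only care needed is to note that agreement on one coordinate ($j^*$) is necessary (though not sufficient) for agreement of the full processed string, which is exactly what lets us upper bound the number of colliding indices by $|\mathcal{I}|/2$.
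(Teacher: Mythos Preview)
Your proof is correct and follows essentially the same approach as the paper's: pick a block where $x$ and $x'$ differ, use the $\mathbb{F}_2$-bilinearity of $(\Delta,i)\mapsto \parity(\Delta\,\&\,i)$ to reduce to a nonzero linear functional on $\{0,1\}^b$, and conclude via Theorem~\ref{thm:collision} that exactly half the indices cause a collision on that block. The only cosmetic difference is that you phrase the counting via the kernel of a linear functional, whereas the paper states directly that $\parity((x_j\xor x'_j)\,\&\,i)=0$ for exactly half of the $i$.
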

\begin{proof}
If $x$ and $x'$ differ, they differ in some block $j$. Let $\process_{b, d}(i,x)_j$ be the $j$th bit of $\process_{b, d_1}(i,x)$.  Note that $\process_{b, d_1}(i,x)_j=\process_{b, d_1}(i,x)_j$ if and only if $d_1(i, x_j) = d_1(i, x_j)$, which happens if and only if $\parity((x_j \xor x'_j) \& i)=0$ (as conjunction distributes over exclusive or). As $x_j \xor x'_j\neq 0$, we have that $\process_{b, d_1}(i,x)=\process_{b, d_1}(i,x)$ holds for exactly half of the possible values of $i$. The result is then immediate from Theorem~\ref{thm:collision}.
\end{proof}

Note that $d_1$ (and thus $\process_{b, d_1}(i,x)$) has only one AND gate per bit of input and the output of $\process_{b, d_1}(i,x)$ has size $\lceil n/b \rceil$ (see Figure~\ref{fig:constructions-1-2-3}). Thus for large $n$ and $b$ this construction can asymptotically be computed with a number of AND gates arbitrarily close to $n$ (and $n$ XOR gates). However, as we will see in the lower bounds section (Section~\ref{sec:lower}) only $n/2$ AND gates are needed for any collision resistant indexed hash function. This motivates looking for the following construction which closes this gap.

\myparagraph{Construction $2$: $\frac{n}{2}$ AND gates.}
We wish to avoid using an AND gate for each bit of the input but still need some nonlinearity in $d$. So the idea is to combine two bits of the input together using a single AND gate. If the raw input bits went directly into the AND gate then the adversary would sometimes be able to change them in a way that definitely would not change the output of the gate. Instead we will first XOR each bit with a bit from $i$. Now the adversary cannot tell whether changing a certain bit will change the output.

Let $\xor$ denote bitwise XOR. Let $y$ be a bitstring of length $2m$ and let $y_j$ be the $j$th bit of $y$, we define $\andreduce(y)$, as the concatenation of $y_{2j} \& y_{2j+1}$ for all $j \in (0,...,m-1)$. Then we define $d_2(i,x_j)$ to be $\parity(\andreduce(x_j\xor i))$, see Figure~\ref{fig:constructions-1-2-3} (center).

A \emph{bent function} has the property that for a fixed linear change to its input, the output of the function would change for exactly half of all starting inputs~\cite{ROTHAUS1976bent}. The following theorem boils down to showing that $\parity \circ \andreduce$ is a bent function.

\begin{theorem}\label{thm:construction2}
$\mathcal{C}_2$ preserves $1/2$-collision boundedness.
\end{theorem}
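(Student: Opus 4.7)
The plan is to apply Theorem~\ref{thm:collision}, so it suffices to show that for every pair of distinct inputs $x \neq x'$, at most half of the indices $i \in \mathcal{I} = \{0,1\}^b$ satisfy $\process_{b, d_2}(i,x) = \process_{b, d_2}(i,x')$. Since $x \neq x'$, there must be at least one block $j$ with $x_j \neq x'_j$; equality of the $\process$ outputs requires equality in every block, so in particular it requires $d_2(i, x_j) = d_2(i, x'_j)$. Therefore the whole claim reduces to showing that for every block pair with $x_j \neq x'_j$, the digest $d_2(\cdot, x_j)$ and $d_2(\cdot, x'_j)$ agree on at most $|\mathcal{I}|/2$ values of $i$.

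Next I would change variables. Set $\Delta = x_j \xor x'_j \in \{0,1\}^b \setminus \{0\}$ and $z = x_j \xor i$; as $i$ ranges uniformly over $\{0,1\}^b$, so does $z$. Writing $f = \parity \circ \andreduce$, the condition $d_2(i, x_j) = d_2(i, x'_j)$ becomes $f(z) = f(z \xor \Delta)$. Thus the task is exactly to show that the discrete derivative $D_\Delta f(z) := f(z) \xor f(z \xor \Delta)$ is balanced on $\{0,1\}^b$ whenever $\Delta \neq 0$, which is the defining property of a bent function.

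I would then verify this directly by expanding $f(y) = \bigoplus_{j=0}^{m-1} y_{2j}\, y_{2j+1}$ (with $b = 2m$). A straightforward expansion of $(z_{2j} \xor \Delta_{2j})(z_{2j+1} \xor \Delta_{2j+1})$ and cancellation of the $z_{2j} z_{2j+1}$ terms gives
\begin{equation*}
D_\Delta f(z) = \bigoplus_{j=0}^{m-1} \bigl( \Delta_{2j+1}\, z_{2j} \xor \Delta_{2j}\, z_{2j+1} \bigr) \xor \bigoplus_{j=0}^{m-1} \Delta_{2j}\, \Delta_{2j+1},
\end{equation*}
which is an affine function of $z$ whose linear part has coefficient vector equal to the bitwise-swap of $\Delta$. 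Since $\Delta \neq 0$, this linear part is nonzero, so $D_\Delta f$ is a non-constant affine function and is therefore balanced: exactly $2^{b-1}$ values of $z$ make it zero. Translating back, exactly half of the indices $i$ satisfy the agreement condition for block $j$, and so at most $|\mathcal{I}|/2$ satisfy it across all blocks, which together with Theorem~\ref{thm:collision} yields $1/2$-collision boundedness.

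The only nontrivial step is the derivative calculation and its balancedness conclusion; everything else is bookkeeping via Theorem~\ref{thm:collision} and the block-wise structure of $\process$. I would expect the main obstacle (if any) to be presenting the affine-form expansion cleanly enough to make the nonzero linear part, and hence balancedness, immediately visible.
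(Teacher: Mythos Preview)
Your proposal is correct and follows essentially the same approach as the paper: both reduce via Theorem~\ref{thm:collision} to showing that $d_2(i,x_j)\xor d_2(i,x'_j)$ is balanced over $i$ whenever $x_j\neq x'_j$, i.e., that $\parity\circ\andreduce$ is bent. The only difference is in execution: the paper isolates a single coordinate pair where $x_j$ and $x'_j$ differ and argues probabilistically that the corresponding $\andreduce$ bits differ with probability exactly $1/2$ independently of the remaining bits, whereas you compute the full discrete derivative $D_\Delta f$ algebraically and read off that its linear part is the coordinate-swap of $\Delta$, hence nonzero; both arguments are standard verifications that this Maiorana--McFarland quadratic is bent.
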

\begin{proof}
By Theorem~\ref{thm:collision} it suffices to show that if $x_j\neq x'_j$, i.e. $x$ and $x'$ differ in the $j$th block, then $e_j := d_2(x_j) \xor d_2(x_j') = \parity(\andreduce(x_j\xor i)) \xor \parity(\andreduce(x'_j\xor i))$ is a uniformly random bit for a randomly chosen $i\in \mathcal{I}$.

Given that $x_j\neq x'_j$ we assume WLOG that they differ in at least one of the first two bits. The first bit of $\andreduce(x_j\xor i)$ is $1$ if and only if the first two bits of $i$ are the bitwise not of the first two bits of $x_j$, therefore it is $1$ with probability $1/4$. Similarly the first bit of $\andreduce(x'_j\xor i)$ is $1$ with probability $1/4$ and they can not both be $1$ at once. Therefore they differ with probability $1/2$. Further they are independent of all but the first two bits of $i$ and thus of the rest of $\andreduce(x_j\xor i)$ and $\andreduce(x'_j\xor i)$. It follows that $e_j$ is uniformly random.
\end{proof}

This construction can approach arbitrarily close to half an AND gate per bit of input by
choosing $b$ appropriately (see Table~\ref{tab:asymptotics-small}).
However, suppose $h$ is SHA3-256, which requires $\approx\! 35$ AND gates per bit of input (and thus $C_N \!\approx \!35$ in Table~\ref{tab:asymptotics-small}). Then for $H$ to have less than one AND gate per bit we would have to take $b\!\geq\! 70$. This would give $|\mathcal{I}|\!\geq\! 2^{70}$.
Recall that $|\mathcal{I}|$ directly corresponds to the size of our commitments, and thus $|\mathcal{I}|\!=\!2^b$ is impractical in computation and communication/storage.
Construction $3$ sacrifices a small amount of collision resistance in order to reduce the size of $\mathcal{I}$.

\myparagraph{Construction $3$: main result.}
The digest function $d_3$ is shown in Figure~\ref{fig:constructions-1-2-3} (right), and is analogous to $d_2$, but instead of using $i\in \mathcal{I}=\{0,1\}^b$ directly, we
use a pseudo-random number generator $\prng$ to expand $i$ into length $b$ strings to xor with the blocks of $x$. This corresponds to re-interpreting the set of indices $\mathcal{I}$
as the set of seeds of $\prng$. This replaces the need for $|\mathcal{I}|=2^b$ from construction $2$ by a much smaller $\mathcal{I}$ of size linear in $b$ and $\lambda$.

Concretely, let $\prng: \mathcal{Q} \to \{0,1\}^b$ be a pseudo-random number generator, with an arbitrarily large keyspace $\mathcal{Q}$.
Let an evaluation of $\prng$ on key $i$ as $\prng(i, b)$ denote that the $\prng$ stretches the input to length $b$.
For practical purposes we can think of $\mathcal{Q} = \{0,1\}^{128}$ if, for example, we instantiate $\prng$ with AES (in counter mode) with $128$-bit keys.
Let $\mathcal{I}$ be a subset of $\mathcal{Q}$. 
For construction $3$ we define $d_3(i,x)$ to be $\parity(\andreduce(x_i\xor \prng(i,b)))$.

Because we do not use a uniformly distributed mask on each block we do not get $1/2$-collision boundedness. However, we can get arbitrarily close to that by increasing the size of $\mathcal{I}$.
In particular $|\mathcal{I}|$ need only grow linearly in $b$, as shown in the following theorem.
\begin{theorem}\label{thm:construction3}
If $\prng$ is a random oracle, then given any $q>1/2$, there exists a choice of $|\mathcal{I}|$ such that with probability $1-2^{\sigma}$ (over the randomness of $\prng$), construction $3$ preserves $q$-collision boundedness. Specifically, it suffices to take
\begin{equation*}
|\mathcal{I}|\geq \frac{1}{2(q-1/2)^2}(\sigma + b + 1).
\end{equation*}
\end{theorem}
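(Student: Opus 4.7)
The plan is to apply Theorem~\ref{thm:collision}: it suffices to show that, with probability at least $1 - 2^{-\sigma}$ over the draw of $\prng$ (viewed as a random oracle), every $x \neq x'$ satisfies
\begin{equation*}
|\{i \in \mathcal{I} : \process_{b, d_3}(i,x) = \process_{b, d_3}(i,x')\}| \leq q|\mathcal{I}|.
\end{equation*}
Conditional on this event, Theorem~\ref{thm:collision} transfers collision resistance of $h$ to $q$-collision boundedness of $H = \mathcal{C}_3(h,\lambda)$.

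First I would reduce from input pairs to single-block pairs. If $x \neq x'$, pick a block $j$ where $x_j \neq x'_j$; agreement of the two processed strings forces $d_3(i, x_j) = d_3(i, x'_j)$ at position $j$, so the agreement set is contained in $\{i : d_3(i, x_j) = d_3(i, x'_j)\}$. Writing $f(m) = \parity(\andreduce(m))$, $\Delta = x_j \oplus x'_j$, $y = x_j$, and $m_i = \prng(i,b)$, this event becomes $f_\Delta(y \oplus m_i) = 0$, where $f_\Delta(m) := f(m) \oplus f(m \oplus \Delta)$. The key observation is that $f$ is a sum of disjoint ANDs of pairs of input bits (i.e., a quadratic form over $\mathbb{F}_2$), so $f_\Delta$ is affine in $m$; hence its zero set $S_\Delta$ is an affine hyperplane of size $2^{b-1}$, and as $y$ varies, $y \oplus S_\Delta$ takes only two distinct values, namely $S_\Delta$ and its complement $\overline{S_\Delta}$.

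For each $\Delta \neq 0$ it therefore suffices to control the two events $|\{i : m_i \in T\}| > q|\mathcal{I}|$ for $T \in \{S_\Delta, \overline{S_\Delta}\}$. Since $\prng$ is a random oracle, the masks $\{m_i\}_{i \in \mathcal{I}}$ are i.i.d.\ uniform on $\{0,1\}^b$; because $|T| = 2^{b-1}$, each indicator $\mathbf{1}[m_i \in T]$ is $\mathrm{Bernoulli}(1/2)$, and Hoeffding's inequality gives
\begin{equation*}
\PP\bigl[|\{i : m_i \in T\}| > q|\mathcal{I}|\bigr] \leq \exp\bigl(-2(q-1/2)^2 |\mathcal{I}|\bigr).
\end{equation*}

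Finally, a union bound over all at most $2(2^b - 1) < 2^{b+1}$ such events yields a failure probability at most $2^{b+1}\exp(-2(q-1/2)^2|\mathcal{I}|)$. Plugging in $|\mathcal{I}| \geq (\sigma + b + 1)/(2(q-1/2)^2)$ and using $\exp(-(\sigma+b+1)) < 2^{-(\sigma+b+1)}$ (since $e > 2$), this total is at most $2^{b+1} \cdot 2^{-(\sigma+b+1)} = 2^{-\sigma}$. The main obstacle in the argument is recognizing that $f_\Delta$ is affine, which collapses the union bound from the naive $\Theta(2^{2b})$ block pairs down to $O(2^b)$ events and delivers the linear-in-$b$ scaling claimed by the statement.
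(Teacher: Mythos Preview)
Your proposal is correct and follows essentially the same route as the paper. Both arguments observe that the difference bit $d_3(i,x_j)\oplus d_3(i,x'_j)$ is an affine function of the mask $m_i=\prng(i,b)$ whose linear part is determined by $\Delta=x_j\oplus x'_j$ and whose constant part contributes one extra bit, then apply a Hoeffding/Chernoff bound for $\mathrm{Bernoulli}(1/2)$ variables and a union bound over the resulting $\le 2^{b+1}$ cases; your phrasing via the quadratic form $f$ and its ``derivative'' $f_\Delta$ is a slightly cleaner packaging of the same computation the paper carries out coordinate-wise.
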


\begin{proof}
Unlike in the proof of Theorem~\ref{thm:construction2} we will make use here of the fact that the hypothesis in Theorem~\ref{thm:collision} is only required to hold for sufficiently large $\lambda$.

Let $m=\prng(i,b)$. Here we must show that with all but negligible probability, for sufficiently large $\lambda$, $x_j \neq x'_j$ implies $e_j:= d_2(x_j) \xor d_2(x_j') = \parity(\andreduce(x_j \xor m)) \xor \parity(\andreduce(x'_j \xor m))$ is equal to one with probability at least $1-q$.

We will show that with the choice of $|\mathcal{I}|$ given in the statement, the above will hold with probability $2^{-\sigma}$.

Given $x_j\neq x'_j$, let $y=x_j$ and $y'=x'_j$ to avoid extra subscripts. Let $y_l$ and $m_l$ be the $l$th bit, with \emph{one} indexing, of $y$ and $m$ respectively. Leaving AND implict (like multiplication) and using $\sum$ to denote XOR, we can rearrange the definition of $e_j$ as follows.
\begin{equation*}
\begin{split}
e_j = \sum_{l=1}^{b/2}\big(&m_{2l-1}\left(y_{2l}\xor y'_{2l}\right)\xor \\[-.1ex]
&m_{2l}\left(y_{2l-1}\xor y'_{2l-1}\right)\xor 
y_{2l}y'_{2l-1}\xor y_{2l-1}y'_{2l}\big)
\end{split}
\end{equation*}
Let $v(y,y')$ be the vector with entries $y_{l}\xor y'_{l}$ for all $l\in (1,...,b)$ plus one entry containing $\sum_{l=1}^{b/2} y_{2l}y_{2l-1}\xor y_{2l-1}y_{2l}$. Note that there are only $2^{b+1}$ possible values for $v$.

Now $e_j$ is a function of $v$ and $m$, we write it as $e(m,v)$. For a fixed value of $v$ let $q(v)$ be the fraction of the key space for which $e(m,v)=1$.
Note that this value has distribution $\text{Bin}(|\mathcal{I}|,1/2)$ with respect to the randomness of $\prng$. Therefore by a Chernoff bound we have that
$\PP(q(v) < p) \leq e^{-2(q-1/2)^2|\mathcal{R}|}$.
As there are $2^{b+1}$ possible values of $v$, a union bound over $v$ yields
$\PP(\exists v \textrm{ s.t. } q(v)<p)\leq 2^{b+1}e^{-2(q-1/2)^2|\mathcal{R}|}$.
Rearranging, it follows that it suffices to take
  $|\mathcal{I}|\geq \frac{1}{2(q-1/2)^2}(\sigma + b + 1)$.
\end{proof}

One should think of the $\sigma$ as a statistical security parameter, thus $40$ would be a standard choice. 

The prng can be thought of as a fixed function on the set $\mathcal{I}$. We need this function to have the property that any lie $v$ will be caught with probability at least $p$. This might not need to be a random oracle, but we can not prove any fixed function works, thus we instead show that a randomly selected function works with high probability.

However, it is important here that the randomness for the $\prng$ and the parameter $b$ are not chosen adversarially. If they are then the result could still be recovered by increasing $\sigma$ by however many bits of information about $b$ and the output of $\prng$ the adversary was able to control. We will use $\sigma=40$ when presenting our results.

It would be convenient if given a specific $b$ and $\prng$ we could check whether the resulting construction preserves $q$-collision boundedness. Unfortunately, the problem of determining whether this is the case is as hard as the learning parity with noise problem~\cite{LPN}, which is conjectured to be hard.

This does not rule out the idea of replacing the $\prng$ with a process that generates an output that is specially structured to guarantee preservation of $q$-collision boundedness. Indeed this is done in the analogous construction 4 of Section~\ref{sec:arithmetic} over large fields. However we were unable to find such a construction in the binary case.

The expansion of $\prng$ requires $O(b)$ gates, the evaluation of $h$ requires $O(n/b)$ gates, and $f$ requires $n/2$ AND gates and $3n/2$ XOR gates as in construction $2$. Thus by taking $b\approx \sqrt{n}$ the total cost is $n/2+O(\sqrt{n})$ AND gates and $3n/2+O(\sqrt{n})$ XOR gates.

However, in practice, as $O(n/b)$ is small compared to $n/2$ once $b$ is moderately large we advise taking $b\approx \min(\sqrt{n},1024)$ so that for large $n$ the size of $|\mathcal{I}|=O(\sigma+b)$ does not become prohibitive.

The choice of $q$ is somewhat arbitrary but it is a trade-off between wanting something close to $1/2$ whilst not wanting $|\mathcal{I}|$ to be too large. Taking $q=5/8$ is the compromise we work with.

With $\sigma=40$, $q=5/8$ and $b=1024$ we have $|\mathcal{I}|=34080$ indices. We will explore these values more in Section~\ref{sec:experiments}.

\section{PVC Committed MPC From Indexed \\ Hashes}\label{sec:integritycheck}

In this section we introduce PVC commitments and the required properties for them to be secure, instantiate them using indexed hash functions, and propose a protocol for committed MPC with PVC security that directly leverages PVC commitments.
We will start by defining what a PVC commitment scheme is, then we will explain how to construct one using a collision bounded indexed hash function. We will express the guarantees provided in a theorem and assess how the computational cost of the scheme depends on the indexed hash function. Throughout $\sk,\pk$ is a public key pair belonging to the committing party that can be thought of as the identity of the input, it should only be used by one input. It is important this public key is associated to the committer (possibly by being signed with another key) by anyone to whom the verifier wishes to prove cheating, e.g. a regulatory authority. The values $i\in \mathcal{I}$ and $r \in \mathcal{R}$ will be randomly chosen as inputs to provide security. For simplicity, we omit the security parameter $\lambda$ in some of our statements, and when we say that an adversary can not succeed at a task, we mean that they stand a negligible chance of doing so.

\subsection{Definitions}
We now define PVC commitments in terms of the three functionalities mentioned above.

\begin{definition}\label{def:secure-pvc-commitment}
A PVC commitment scheme with covert security parameter $p\in [0,1]$ consists of three functions $\pvccommit$, $\assert$, $\checkfun$, the last of which is deterministic, satisfying four security properties defined below (correctness, general binding property with parameter $p$, hiding property, and defamation freeness).
\end{definition}

Let us first describe the form of the three functions $\pvccommit$, $\assert$, and $\checkfun$.
A commitment function which commits to a value $x$,
\begin{equation*}
c=\pvccommit(x,\sk,r).
\end{equation*}

An assertion function which is applied to the $x$ we later wish to check was committed to,
\begin{equation*}
a=\assert(x,\sk,r;i,\pk).
\end{equation*}

And a checking function, which interprets the output from the other two functions,
\begin{equation*}
\out = \checkfun(c,a,\pk).
\end{equation*}
With output satisfying $\out \in \{\valid, \cheated, \inconclusive\}$.

Intuitively, $\out = \valid$ means that the commitment opened to the expected value,
$\out = \cheated$ means that the check did not pass because the committed and asserted values do not match,
and $\out = \inconclusive$ denotes situations where the result of the verification is inconclusive because
of a malformed message, or more generally an abort by the committer. This latter situation
can not be avoided in general when evaluating PVC commitments in MPC, as a corrupted committer could send invalid
messages or stop responding, similar to the role of aborts in MPC security with aborts. The first of the properties is correctness.
\begin{definition}[Property 1: Correctness]
For any $i,r,x$ and valid key pair $\sk,\pk$, if $c=\pvccommit(x,\sk,r)$ and $a=\assert(x,\sk,r;i,\pk)$ then
$\checkfun(c,a,\pk)=\valid$.
\end{definition}
The second is binding, a guarantee that a cheating committer will be caught with reasonable probability. P1 can avoid being caught cheating by refusing to sign anything, this is fine so long as they can not possibly get a $\valid$ result either. Thus we require that they be caught with probability $p$ only conditioned on the result not being inconclusive. A simple version of this is the following.

\begin{definition}[Honest Binding]
No polynomial time adversary can find $x,\sk,r,x',\sk',r',\pk$ such that (i) $x\neq x'$ and (ii) if $i \leftarrow \mathcal{I}$, $c=\pvccommit(x,\sk,r)$, $a=\assert(x',\sk',r';i,\pk)$ and
$\out=\checkfun(c,a,\pk)$
then $\PP(\out=\inconclusive)< 1$ and
\begin{equation*}
\PP(\out=\cheated|\out\neq \inconclusive)<p
\end{equation*}
\end{definition}

The above allows us to prove PVC security with parameter $p$ only if the commitment is made honestly. If the commitment might be arbitrarily generated then we need the following strictly stronger version of binding. As this version is stronger it is the only one we include in the definition of a PVC commitment scheme, the previous definition will be referenced later in proofs though.

\begin{definition}[Property 2: General Binding]\label{def:generalbinding}
No polynomial time adversary can find $x,\sk,r,x',\sk',r',\pk$ and $c$ such that (i) $x\neq x'$ and (ii) if $i \leftarrow \mathcal{I}$, $a=\assert(x,\sk,r;i,\pk)$, $a'=\assert(x',\sk',r';i,\pk)$,
$\out=\checkfun(c,a,\pk)$, and $\out'=\checkfun(c,a',\pk)$
then \newline $\PP(\out=\inconclusive)< 1$, $\PP(\out'=\inconclusive)< 1$ and
\startcompact{small}
\begin{align}
&\PP(\out=\cheated|\out\neq\inconclusive) \nonumber \\
+&\PP(\out'=\cheated|\out'\neq\inconclusive)<p \label{eqn:generalbinding}
\end{align}
\stopcompact{small}
\end{definition}
To see this is stronger, note that if a scheme is not honestly binding the same counterexample but with $c=\pvccommit(x,\sk,r)$ will show it is not generally binding.

The final two properties prevent the verifier from cheating, so consider $\sk,\pk$ to be fixed. It is useful to define an oracle $\mathcal{O}_\sk(x)$ which when called samples $r \leftarrow \mathcal{R}$ and returns
\begin{equation*}
\pvccommit(x,\sk,r) \nonumber
\end{equation*}
and
\begin{equation*}
\left(\assert(x,\sk,r;i,\pk)\right)_{i\in \mathcal{I}}.
\end{equation*}

The third property is the hiding property which guarantees the verifier can not learn anything about $x$ from the outputs of $\pvccommit$ or $\assert$.
\begin{definition} [Property 3: Hiding]\label{def:schemehiding}
For any $x,x'$ and polynomial time adversary $\adv$
\begin{equation*}
\PP(\adv(\mathcal{O}_\sk(x))=1)=\PP(\adv(\mathcal{O}_\sk(x'))=1)+\text{negl}(\lambda).
\end{equation*}
\end{definition}

The final property is defamation freeness which guarantees the verifier can not frame an honest committer.
\begin{definition}[Property 4: Defamation Freeness]
No polynomial time adversary can choose an $x$ and then when given $\mathcal{O}_\sk(x)$ find $c$ and $a$ such that
\begin{equation*}
\checkfun(c,a,\pk)=\cheated
\end{equation*}
\end{definition}

Note it is important that each secret key is only used for one choice of $x,r$. This could be achieved by deriving the secret key from $(x,r)$ by a one way function (possibly with extra randomness).

\subsection{PVC commitment from indexed hashes}
\label{sec:pvcindexed}
Let $H$ be an indexed hash function with index space $\mathcal{I}$ and randomness space $\mathcal{R}$. Let $m_{\sgn(\sk)}$ denote $m$ together with a signature of $m$ by secret key $\sk$. Consider the following three functions.

\begin{equation*}
\pvccommit(x,\sk,r)= ((H(i,r,x))_{i\in \mathcal{I}})_{\sgn(\sk)}
\end{equation*}

\begin{equation*}
\assert(x,\sk,r;i,\pk)= 
\begin{cases}
(i, H(i,r,x))_{\sgn(\sk)} & \text{if } (\sk,\pk) \text{ is a valid keypair} \\
\bot & \text{otherwise}
\end{cases}
\end{equation*}
For $\checkfun$ let $G$ be the event that the signatures are valid.
\begin{equation*}
\checkfun(c,a,\pk)=
\begin{cases}
\valid & \text{if $G$ and } c[a[0]]=a[1] \\
\cheated & \text{if $G$ and } c[a[0]]\neq a[1] \\
\inconclusive & \text{Otherwise}
\end{cases}
\end{equation*}

We require one slightly unusual property of the signature scheme. This is a technicality, as (a) lots of schemes have this property and (b) in the next subsection we will introduce a computational optimization which has the side effect of guaranteeing this property from any scheme.
\begin{definition}
Call a signature scheme \emph{discrimination resistant} if no polynomial time adversary can find $m,m',\sk$ and $\pk$, ($(\sk,\pk)$ not necessarily a valid key pair), such that $m_{\sgn(\sk)}$ and $m'_{\sgn(\sk)}$ are valid with non-negligibly different probabilities.
\end{definition}
We also require that the signature scheme has a deterministic verification function. This could be lifted at the expense of complicating the definitions with extra negligible terms. However, whilst not implied by the definition of a signature scheme, all the most popular schemes satisfy this assumption so we will make it for simplicity.

\begin{restatable}[]{theorem}{secpvccom}\label{theorem:secure-pvc-commitment}
If $H$ is hiding and $q$-collision bounded and the signature scheme has deterministic verification and is discrimination resistant, then the above functions form a PVC commitment scheme with covert security parameter $p=1-q$ (Definition~\ref{def:secure-pvc-commitment}).
\end{restatable}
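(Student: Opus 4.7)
The plan is to verify each of the four properties in Definition~\ref{def:secure-pvc-commitment} in turn. Correctness is immediate from the form of the functions: an honestly generated commitment and assertion carry valid signatures under the assumed valid keypair, and $c[a[0]] = H(i,r,x) = a[1]$, so $\checkfun$ returns $\valid$. The hiding property reduces to the hiding of $H$ (Definition~\ref{def:hiding}): the oracle $\mathcal{O}_\sk(x)$ reveals $(H(i,r,x))_{i \in \mathcal{I}}$ wrapped with indices and signatures under $\sk$. A distinguisher between $\mathcal{O}_\sk(x)$ and $\mathcal{O}_\sk(x')$ can be simulated by sampling a fresh keypair and signing the supplied hash outputs, since everything outside the hashes is independent of $x$; any non-negligible advantage would then yield one against the hiding of $H$.

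The main content of the proof is the general binding property (Definition~\ref{def:generalbinding}), which I would reduce to $q$-collision boundedness of $H$. First, the premises $\PP(\out = \inconclusive) < 1$ and $\PP(\out' = \inconclusive) < 1$ force $(\sk,\pk)$ and $(\sk',\pk)$ to be valid keypairs (otherwise $\assert$ returns $\bot$ always) and $c$ to carry a verifying signature; whether the final output is $\inconclusive$ then depends only on whether the fresh signature on $a$ (respectively $a'$) verifies. Let $A$ denote the event $c[i] = H(i,r,x)$ and $A'$ the event $c[i] = H(i,r',x')$, so that, conditioned on signature validity, $\out = \cheated$ coincides with $\bar A$ and likewise $\out' = \cheated$ with $\bar{A'}$. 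Because $c[i]$ is a single value, $A \cap A'$ implies $H(i,r,x) = H(i,r',x')$, so
\[
\PP(A \cap A') \;\leq\; \frac{|\{i \in \mathcal{I} : H(i,r,x) = H(i,r',x')\}|}{|\mathcal{I}|}.
\]
If an adversary achieved the forbidden inequality in Equation~\ref{eqn:generalbinding} with non-negligible probability, then (absent the conditioning) $\PP(A) + \PP(A') > 1 + q$; combining with the identity $\PP(A) + \PP(A') = \PP(A \cap A') + \PP(A \cup A')$ and $\PP(A \cup A') \leq 1$ would force $\PP(A \cap A') > q$, exhibiting a $q'$-collision of $H$ for some $q' > q$ and contradicting $q$-collision boundedness.

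The delicate step is translating between the conditional probabilities $\PP(\out = \cheated \mid \out \neq \inconclusive)$ in Equation~\ref{eqn:generalbinding} and the unconditional $\PP(A), \PP(A')$ above. This is precisely where discrimination resistance and deterministic verification are used: these together ensure that the probability that a signature on any adversary-chosen message verifies is, up to a negligible term, the same across messages, and in particular is essentially independent of whether $A$ holds. Conditioning on signature validity therefore shifts $\PP(A)$ and $\PP(A')$ by at most a negligible amount, which the contradiction above absorbs. I expect this bookkeeping to be the main technical obstacle, and I would isolate it as a short lemma on near-independence of $A$ and the signature-validity event.

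Finally, defamation freeness follows from standard unforgeability of the signature scheme. To obtain $\checkfun(c, a, \pk) = \cheated$ the adversary must present a $c$ and $a$ whose signatures both verify under $\pk$ but with $c[a[0]] \neq a[1]$. The oracle $\mathcal{O}_\sk$ only signs commitment/assertion pairs that are mutually consistent, so any $(c, a)$ assembled purely from oracle outputs satisfies $c[a[0]] = a[1]$ and would yield $\valid$, not $\cheated$. Producing $\cheated$ therefore requires forging a signature on a message never signed by the oracle, contradicting unforgeability.
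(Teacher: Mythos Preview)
Your proposal is correct and follows essentially the same route as the paper's proof: correctness is immediate, hiding reduces to hiding of $H$ by wrapping the hash values with signatures, general binding reduces to $q$-collision boundedness via the inclusion--exclusion step $\PP(A)+\PP(A')>1+q \Rightarrow \PP(A\cap A')>q$, and defamation freeness is unforgeability plus the observation that oracle-produced pairs always satisfy $c[a[0]]=a[1]$. The paper handles the conditioning in general binding in one line (``discrimination resistance $\Rightarrow$ $i$ remains uniform conditioned on $G$''), whereas you spell out the near-independence argument more carefully; your treatment is if anything more explicit. One minor point: in the hiding reduction the paper simply hard-codes $\sk$ rather than sampling a fresh keypair, which avoids having to argue that the scheme-hiding adversary is insensitive to the choice of keypair.
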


The proof of this theorem is given in Appendix~\ref{app:proofcommitmentscheme}.
\subsection{PVC Committed MPC from a PVC \\ commitment scheme}
\label{sec:pvcmpc}
In this section we define formally PVC committed MPC, for the two party case,
and propose protocols to efficiently realize this functionality,
which corresponds to the intuitive idea from Figure~\ref{fig:basic_diagram}.

We follow the definitions by \citet{DBLP:conf/asiacrypt/AsharovO12} to prove PVC security of our protocols.
This involves proving (i) simulatability (in the ideal vs. real worlds framework) for the covert security part,
along with (ii) accountability and (iii) defamation freeness for the public verifiability.
For (ii) and (iii) we use the definitions by Asharov and Orlandi and for (i) our ideal world is
presented in detail in Appendix~\ref{sec:ideal} as an extension of theirs, to handle the commitment phase.
Without loss of generality, we describe our ideal world for only two parties $\A$ and $\B$.
Moreover, as in our protocols, the first party gets malicious security, while the second party gets PVC security. This matches the guarantee in the generic PVC protocol by Hong et al~\cite{pvc} that we use in the experimental evaluation.

Our ideal world is parameterized by two
values $p_{\texttt{exec}}, p_{\texttt{commit}}\in[0,1]$
denoting lower bounds on the probabilities with which $\A$
can get caught when (i) cheating in the protocol execution
and (ii) breaking the commitment, respectively.
Note that Asharov and Orlandi only formalize (i), 
and they denote $p_{\texttt{exec}}$ as $\epsilon$.
Moreover, our ideal world is parametrized by an arbitrary distribution
$\mathcal{E}$ with we refer to as \emph{the environment} (this is similar to the notion
used in the UC framework).
A sample from the environment is included in the parties' view as an
auxiliary input that is received only {\em after} the commitment phase has finished.
This limits the ability of the ideal world adversary (the simulator) to
rewind the adversary beyond the commitment phase (similar to the role of the environment in UC),
and models information that the adversary might get after committing.

We summarize the ideal world execution next. First, $\A$ receives its
prescribed input and commits to it (if honest)
or an arbitrary value (if corrupted) by sending it to 
the trusted party. This constitutes
the commitment phase, and captures the situation
where $\A$ commits to using an input, e.g., an ML model,
to be used at an undetermined time in the future
in a secure computation with a second party $\B$.
Then, party $\A$ receives an input from the environment,
in the form of a sample from $\mathcal{E}$, which is also
given to $\A$ in the real world, as explained above.
This determines the beginning of the secure computation phase, which starts with $\B$
receiving its prescribed input and with $\A$ 
notifying the trusted party of their desire to cheat in the execution. This attempt will succeed with probability
$1-p_{\texttt{exec}}$, in which case $\A$
gets to completely break the protocol, i.e. learn $\B$'s input and choose their output. If $\A$ fails, $\B$ receives output $\texttt{corrupted}$. If a corrupted $\A$
decided to not cheat in this way, they still get a 
chance to cheat in switching the input of the 
secure computation from the committed value $w$ to a different one.
If this attempt fails (which happens with probability at least $p_{\texttt{commit}}$),
$\B$ gets notified.
For simplicity in the presentation we allow $\A$ to abort after receiving their output, and before $\B$ gets to
observe theirs, but this assumption can be lifted by ensuring that in the underlying PVC protocol $\B$ gets the output first.

\begin{figure}[h]
\small
\begin{framed}
\small
\raggedright
\hspace{-0.1cm}\noindent {\bf Public Parameters:} A PVC commitment scheme (Definition~\ref{def:secure-pvc-commitment}) and a public key $\pk$.

 \hspace{-0.1cm}\noindent {\bf Inputs:} input $x$ and secret key $\sk$ matching $\pk$.

 \hspace{-0.1cm}\noindent {\bf Outputs:} Commitment $c$.
 
\vspace{0.2cm}
 \hspace{-0.1cm}\noindent {\bf Algorithm:}
 \begin{enumerate}[leftmargin=3ex]
   \item Sample $r\gets \mathcal{R}$.
   \item Compute $c=\pvccommit(x,\sk,r)$.
   \item Store $r$ as a secret and return $c$.
\end{enumerate}
\end{framed}
\caption{PVC Committed 2PC (commitment algorithm).}
\label{fig:pvc-mpc-commit}
\end{figure}

\begin{figure}[h]
\begin{framed}
\small
\raggedright
 \hspace{-0.1cm}\noindent {\bf Parties:} $\A$, $\B$.

 \hspace{-0.1cm}\noindent {\bf Public Parameters:} A PVC commitment scheme (Definition~\ref{def:secure-pvc-commitment}), a commitment $c$, and a public key $\pk$.\\[1ex]
 The protocol uses a PVC secure protocol $\Pi$ offering PVC security to $\B$ and malicious security to $\A$.
 
 \hspace{-0.1cm}\noindent {\bf Inputs:} $\A$: x, r; $\B$: y.

 \hspace{-0.1cm}\noindent {\bf Outputs:} $\A: g_1(x, y)$; $\B: g_2(x, y)$, or a proof of cheating $a$.
 
\vspace{0.2cm}
 \hspace{-0.1cm}\noindent {\bf Protocol:}
 \begin{enumerate}[leftmargin=3ex]
 \item $\B$ samples $i\gets \mathcal{I}$.
 \item $\A, \B$ run $\Pi$ to compute $(o_1;o_2,a)=(g_1(x,y);g_2(x,y),\assert(x,\sk,r;i,\pk))$.
 \item $\B$ computes $\out=\checkfun(c, a,\pk)$ and\\\noindent
   If $\out = \valid$ $\longrightarrow$ accepts $o_2$ as $g_2(x, y)$.\\\noindent
   If $\out = \cheated$ $\longrightarrow$ accepts $a$ as proof of cheating.\\\noindent
   Otherwise $\longrightarrow$ aborts and sets result to $\emph{inconclusive}$.
   
\end{enumerate}
\end{framed}
\caption{PVC Committed 2PC for functionality $g(x, y) = (g_1(x, y), g_2(x, y))$ (integrity check).}
\label{fig:pvc-mpc-generic}
\end{figure}

\begin{figure}[h]
\begin{framed}
\small
\raggedright

Parties, inputs, outputs, and public parameters are as in Figure~\ref{fig:pvc-mpc-generic},
and the PVC commitment scheme is instantiated by an indexed hash function $H$ (as in Theorem~\ref{theorem:secure-pvc-commitment}).\\
\vspace{0.2cm}
 \hspace{-0.1cm}\noindent {\bf Protocol:}
 \begin{enumerate}[leftmargin=3ex]
 \item $\B$ samples $i\gets \mathcal{I}$ and $\tilde{r}\gets \mathcal{R}$.
 \item $\A, \B$ run $\Pi$ to compute $(g_1(x,y),h(m|\tilde{r});g_2(x,y),m)$, where $m = (i,H(i,r,x))$.
 \item $\A$ computes $s=\sign(h(m|\tilde{r}),\sk)$ and sends it to $\B$.
 \item $\B$ aborts if $s$ is not the valid signature of $h(m|\tilde{r})$.
 \item $\B$ computes $\out=\checkfun(c, a,\pk)$ and\\\noindent
   If $\out = \valid$ $\longrightarrow$ accepts $o_2$ as $g_2(x, y)$.\\\noindent
   If $\out = \cheated$ $\longrightarrow$ accepts $a$ as proof of cheating.\\\noindent
   Otherwise $\longrightarrow$ aborts and sets result to $\emph{inconclusive}$.
   
\end{enumerate}
\end{framed}
\caption{PVC Committed 2PC for functionality $g(x, y) = (g_1(x, y), g_2(x, y))$ (Optimized integrity check).}
\label{fig:pvc-mpc-optimized}
\end{figure}

Let $(\pvccommit,\assert,\checkfun)$ be a PVC commitment scheme with parameter $p$. Let $\texttt{Blame}_{\texttt{commit}}$ be the function which when given a view of $\B$ (honestly) running the protocol in Fig.~\ref{fig:pvc-mpc-generic}, in which $\out=\cheated$ returns the commitment $c$ and the resulting $a$ and otherwise returns $\bot$. Let $\texttt{Judgement}_{\texttt{commit}}$ be the function $\checkfun$ with the public key of $\A$ hard coded. Let $\texttt{Commit}$ be the commitment algorithm in Fig.~\ref{fig:pvc-mpc-commit} and $\mathcal{P}$ be the protocol in Fig.~\ref{fig:pvc-mpc-generic}, with $\Pi$ instantiated with the protocol of Hong et al~\cite{pvc}. Finally, let
$\texttt{Blame}_{\texttt{exec}}$ and $\texttt{Judgement}_{\texttt{exec}}$ be the blame and judgement functions from $\Pi$,
and define $\texttt{Blame}(x)$ to be $\cheated$ if either $\texttt{Blame}_{\texttt{exec}}(x)$
or $\texttt{Blame}_{\texttt{commits}}(x)$ equals $\cheated$, and analogously for
a function $\texttt{Judgement}$. We are now ready to state our main result.

\begin{restatable}[]{theorem}{mainthm}\label{thm:main}
The quadruple $\left(\texttt{Commit},\mathcal{P},\texttt{Blame},\texttt{Judgement}\right)$ securely computes $g$ with committed first input in the presence of a malicious $\A$ or a covert $\B$ with $p/2$-deterrent and public verifiability.
\\[1ex]
If $\pvccommit$ and $\checkfun$ are used as given in the previous section then we can replace the $\mathcal{P}$  with the protocol in Fig.~\ref{fig:pvc-mpc-optimized} and still have the same security guarantee.
\\[1ex]
Furthermore, if in either case it can be guaranteed that $\A$ is honest in running the commitment algorithm in Fig.~\ref{fig:pvc-mpc-commit}, then the deterrent factor improves from $p/2$ to $p$.
\end{restatable}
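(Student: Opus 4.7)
My plan is to establish the three PVC security requirements---simulatability, accountability, and defamation freeness---within the Asharov--Orlandi framework as extended to two-phase protocols in Appendix~\ref{sec:ideal}. The argument is modular: the commit phase is analyzed via the PVC commitment properties of Theorem~\ref{theorem:secure-pvc-commitment}, while the execution phase inherits simulatability and accountability from the PVC protocol $\Pi$ of Hong et al.~\cite{pvc}, with the two cheating surfaces combined by a case analysis that yields the $p/2$ bound.

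For simulatability against a corrupted malicious $\A$, the simulator receives $c$ in the commit phase and extracts a committed value $w$ as follows. For each candidate input $x$ it estimates $P(x)=\PP_{i\gets\mathcal{I}}(\checkfun(c,\assert(x,\sk,r;i,\pk),\pk)=\cheated \mid \out\neq\inconclusive)$; by general binding (Def.~\ref{def:generalbinding}), at most one $x$ can satisfy $P(x)<p/2$, and this becomes the simulator's $w$ (or $\bot$ if none exists). It submits $w$ to the trusted party and proceeds to the execution phase, invoking the $\Pi$-simulator, which handles in-$\Pi$ deviations with deterrent $p$. If instead $\A$ executes $\Pi$ faithfully with input $x'\neq w$, the subsequent $\checkfun$ check rejects with probability $\geq p/2$ by the definition of $w$. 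Thus, whichever path $\A$ takes, the detection probability is at least $\min(p,p/2)=p/2$; real/ideal indistinguishability reduces to that of the $\Pi$ simulation plus a negligible extraction-failure event.

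For a corrupted covert $\B$, the simulator commits to a dummy value; the hiding property (Def.~\ref{def:schemehiding}) guarantees indistinguishability, and the execution simulation is inherited from $\Pi$'s malicious-security simulator against $\B$. Accountability and defamation freeness decompose modularly: any detected cheating is flagged either by $\Pi$'s $\texttt{Blame}_{\texttt{exec}}$ (handled by $\Pi$'s judgement) or by $\checkfun$ (handled by $\texttt{Judgement}_{\texttt{commit}}$, which is literally $\checkfun$ with a hardcoded key); no honest $\A$ can be framed because neither component admits defamation and the signature is unforgeable. For the optimized protocol (Fig.~\ref{fig:pvc-mpc-optimized}), a hybrid argument moves the signature outside $\Pi$: $\B$'s fresh $\tilde r$ prevents replay, collision resistance of $h$ prevents substituting $m$ after the fact, and discrimination-resistance of the signature scheme ensures that out-of-$\Pi$ signing of $h(m|\tilde r)$ leaks no information about $m$ beyond what $\B$ already learned from $\Pi$'s output. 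For the third part, if the commit phase is guaranteed honest, the stronger honest binding bound applies in place of general binding, upgrading the commitment-side detection probability from $p/2$ to $p$ and hence $\min(p,p)=p$.

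The main obstacle is the extraction of $w$: we must ensure it is polynomial time and compatible with the auxiliary-input restriction of Appendix~\ref{sec:ideal} that forbids rewinding past the commitment/environment boundary. The polynomial size of $\mathcal{I}$ makes estimating $P(\cdot)$ efficient, but enumerating $\mathcal{R}$ is not, so we actually realize $w$ via $\Pi$'s input-extractor (available from $\Pi$'s malicious security against $\A$): the simulator runs the execution phase, recovers $\A$'s effective input $x^{\Pi}$, and retroactively identifies $w=x^{\Pi}$, which is consistent with $c$ whenever the check passes and otherwise triggers detection with probability $\geq p/2$ by binding. A secondary subtlety in the optimized protocol is that the simulator cannot forge the external signature, so it must treat an $\A$ that aborts after seeing $\tilde r$ as producing an $\inconclusive$ output, which the ideal world permits.
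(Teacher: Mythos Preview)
Your high-level decomposition (hiding for corrupted $\B$, modular accountability and defamation freeness, the $p/2$ vs.\ $p$ distinction via general vs.\ honest binding) matches the paper, but the simulation of a corrupted $\A$ has a genuine gap centred on the environment $\mathcal{E}$. In the ideal world of Appendix~\ref{sec:ideal} the simulator must hand $w$ to the trusted party \emph{before} the sample $e\gets\mathcal{E}$ is drawn, yet $\adv$'s choice of input $x'$ in the execution phase may depend on $e$. Your ``retroactive'' proposal---run the execution phase, extract $x^{\Pi}$ via $\Pi$'s extractor, and set $w=x^{\Pi}$---therefore violates exactly the ordering constraint you yourself flag: the execution phase only exists after $e$ is revealed, so $x^{\Pi}$ is a function of $e$ and cannot be committed beforehand. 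Your first idea (pick the unique $x$ with $P(x)<p/2$) is closer in spirit but is stated for a fixed $(x,\sk,r)$ and does not say how the simulator obtains such a triple, nor does it address that $\adv$ may play different $(x',r')$ for different $e$.

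The paper resolves this by dropping the polynomial-time requirement on the simulator (their Definition in Appendix~\ref{sec:icdefs} explicitly allows a ``logically omniscient'' ideal adversary). The simulator fixes $\adv$'s random tape, then \emph{enumerates all} $e$ in the support of $\mathcal{E}$, for each one runs $\Pi$'s extractor to obtain the $(x'(e),r'(e))$ that $\adv$ would use, computes the catch fraction over $\mathcal{I}$, and commits to the value $\tilde{x}$ that occurs with highest $\mathcal{E}$-weight among those $x'(e)$ with catch fraction $<p/2$. The only problematic event is that the realised $e$ yields some $x'(e)\neq\tilde{x}$ also with catch fraction $<p/2$; the paper shows this has negligible probability by a reduction to general binding: if it had non-negligible mass, one could sample $e\gets\mathcal{E}$ repeatedly, extract, and in expected polynomial time find two distinct such pairs, contradicting Definition~\ref{def:generalbinding}. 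This enumeration-and-majority step, together with the accompanying reduction, is the missing idea in your argument; without it the commitment phase of the simulation is not well-defined.
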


\myparagraph{Non-committed output at no risk.}
P1 can in the above ideal world, and thus in the protocol, get $g_1(x',y)$ for a non-committed $x'$ at no risk by aborting afterwards. This could be avoided by opening up the PVC blackbox and holding back this output until P2 has checked the result of assert (or optimized equivalent).

\myparagraph{Computational costs.} The cost of the commit operation in the clear is computing $H$, $|\mathcal{I}|$ times. The cost of the $\assert$ is dominated asymptotically by the cost of computing $H$ once i.e. requires $n/2+o(n)$ AND gates. The $\checkfun$ are $O(1)$ and relatively very cheap.

\section{Evaluation}\label{sec:experiments}

\label{sec:experiment_results}

\begin{table*}[]
\resizebox{0.75\textwidth}{!}{
\small
\begin{tabular}{@{}ccc|ccc@{}}
\toprule
\multicolumn{1}{c}{}            & \multicolumn{1}{c}{SHA3-256}      & \multicolumn{1}{c|}{LowMCHash-256}      & \multicolumn{3}{c}{Ours}                                                                                                                                                                                     \\ \midrule
\multicolumn{1}{c}{No. of bits} & \multicolumn{1}{c}{\# of ANDs} & \multicolumn{1}{c|}{\# of ANDs} & \multicolumn{1}{c}{\# of ANDs} & \multicolumn{1}{c}{\begin{tabular}[c]{@{}c@{}}Improvement\\ over SHA3-256\end{tabular}} & \multicolumn{1}{c}{\begin{tabular}[c]{@{}c@{}}Improvement over\\ LowMCHash-256\end{tabular}} \\ \midrule
$2^{14}$                        & $6.14\times10^5$                        & $2.32\times10^5$                         & $5.17\times10^4$                         & $12\times$                                                                           & $4\times$                                                                            \\
$2^{18}$                        & $9.29\times10^6	$                      & $3.65\times10^6$                        & $1.90\times10^5$                       & $49\times$                                                                           & $19\times$                                                                           \\
$2^{22}$                        & $1.48\times10^8$                    & $5.84\times10^7$                       & $2.29\times10^6$                        & $65\times$                                                                           & $25\times$                                                                           \\
$2^{26}$                        & $2.37\times10^9$                   & $9.34\times10^8$                      & $3.42\times10^7$                       & $69\times$                                                                           & $27\times$                                                                           \\
$2^{30}$                        & $3.79\times10^{10}$                   & $1.49\times10^{10}$                    & $5.39\times10^8$                      & $70\times$                                                                           & $28\times$                                                                           \\ \bottomrule
\end{tabular}}
\caption{Analytical comparison of the number of AND gates (circuit size $|\mathcal{C}|$) for the $\assert$ functionality using LowMCHash-256 and SHA3-256 with our scheme. These values are for a single call to $\assert$ i.e. using a single index for our scheme. Here $p_c = 1/2$}
\label{tab:and}
\end{table*}

Here we compare our method for committed MPC to the baseline using SHA3-256. We evaluate both computation time and communication for the $\assert$ functionality  as the size of the input $n$ increases. We also analytically compare our method against the hash function based on LowMCHash-256, an MPC friendly hash \cite{albrecht2015ciphers}. Finally, we evaluate the practicality of our proposed scheme in terms of the compute requirement for the committer performing the commitment using the $\pvccommit$ functionality and the size of the commitment. As a result, we show, for our scheme: (a) Verification ($\assert$ functionality) in MPC is significantly faster than optimized standards such as SHA3-256 as well as MPC optimized hashes such as LowMCHash-256; (b) The size of the commitment is practical; (c) The computation required from the committer ($\pvccommit$ functionality) is practical. We use the circuit sizes and real experimental data for (a). Similarly, we analyze the size of the commitment to prove (b) and use actual computation time data to show (c). We begin by describing the experimental and implementation details.

\myparagraph{Experimental Settings.}
The experiments were executed on two Azure D32s v3 machines running Ubuntu 16.04, equipped with Intel Xeon E5-2673 v4 2.3GHz processors and 128 GB RAM. The machines were hosted in the same region with a bandwidth of 1.7 GB/s and an avg. latency of 0.9ms, representative of a LAN setting.

\myparagraph{Implementation.}
We use the EMP-toolkit \cite{emp-toolkit} to implement our secure protocols as well as the baselines. In particular, we use the PVC framework of Hong et al. \cite{hong2019covert}, which makes use of garbled circuits. We set the covert security parameter $p_c$ of this underlying implementation to $1/2$. Note this is different from the covert security parameter $p$ used in our scheme. Since $p\leq 1/2$, $p_c$ could be set to  $1/2$. As one could infer, the effective covert security parameter for our scheme with this implementation would be $min(p,p_c)$.

\myparagraph{Baselines.}
We use two baselines for comparison: SHA3-256 and LowMCHash-256.
For SHA3-256, we use the sponge framework \cite{bertoni2008indifferentiability} with an  input block size of $1600$. Using the standard security parameters we get the rate as 1088 and the capacity as $512$. This results in a computation cost of $\sim35$ AND gates per input bit. For LowMCHash-256, we use LowMC permutations together with the sponge framework using an input block size of 512. We reserve 256 bits for the rate and another 256 bits for the capacity (128 bit security). This results in $\sim14$ AND gates per input bit. LowMC is relatively new and has been shown to be susceptible to attacks \cite{dinur2015optimized}. However, we include it in this comparison, it being one of the most MPC optimized hashing schemes for Boolean circuits. Both these baselines are implemented on the top of the EMP-toolkit's PVC framework.

\myparagraph{Our scheme.} For our scheme we implement the idea around PVC commitment from indexed hashes as described previously. We use Construction $3$ in section~\ref{sec:defs}. In particular we implement the $\assert$ functionality in MPC (and $\pvccommit$, $\checkfun$ in the clear). Our scheme costs $\sim0.5$ AND gate per input bit. In order to effect signed public verifiability during the $\assert$ phase, we use SHA3-256 to commit the hash corresponding to the input and the index. We summarise the parameters for our scheme in Table~\ref{tab:params}. Note that the table reports the covert security parameter of the commitment scheme for the honest committer case. In the general case this parameter's value would be $p/2 = 3/16$. Similar to the baselines, our scheme is also implemented on top of the EMP-toolkit's PVC framework.

\begin{table}[]
\small
\resizebox{0.95\columnwidth}{!}{
\begin{tabular}{@{}ccc@{}}
\toprule
Description                         & Symbol   & Value                 \\ \midrule
\multicolumn{3}{c}{Our Indexed Hash Function}                          \\\midrule
Length of the input $(|x|)$           & $n$ & no. of bits (variable)\\
Pseudorandom number generator       & $prng$   & AES (counter mode)    \\
Underlying collision resistant hash & $h$      & SHA3-256              \\
Statistical security parameter      & $\sigma$ & 40                    \\
Collision boundedness parameter     & $q$      & 5/8                   \\
Block size                          & $b$      & $min(\sqrt{n}, 1024)$ \\\midrule
\multicolumn{3}{c}{Our PVC Commitment}                                     \\\midrule
Covert security parameter           & $p$      & $1-q = 3/8$             \\\midrule
\multicolumn{3}{c}{Underlying PVC 2PC Protocol (EMP-PVC) For $\assert$}         \\\midrule
Covert security parameter           & $p_c$    & $1/2$                 \\ \bottomrule
\end{tabular}}
\caption{Parameters used in our experiments}
\label{tab:params}
\end{table}

\subsection{Analytical Performance}

Table~\ref{tab:and} compares the circuit size $|\mathcal{C}|$ (no. of AND gates) for the $\assert$ functionality for the LowMCHash-256 and SHA3-256 baselines with our scheme. As we increase the size of the input, the scheme starts to show its full potential. For a small input size, the initial overhead of signing the commitments and the index tends to shadow the improvement. But as we increase the size of the input, we can see a marked $70\times$ improvement over SHA3-256 and $28\times$ improvement over LowMCHash-256. We show that these improvements directly translate into real world experiments, when compared against the actual implementation of SHA3-256, in section~\ref{sec:exp}.

\subsection{Experimental Performance}
\label{sec:exp}

\myparagraph{Running time for $\assert$.} Table~\ref{tab:time} shows the running time for executing the $\assert$ functionality to verify the commitments using SHA3-256 and our scheme. As we increase the size of the input to practical sizes, we observe that our scheme is $60\times$ faster than the SHA3-256 baseline. This is directly correlated with the $70\times$ improvement in the circuit sizes above. We do not perform actual experiments with LowMCHash-256, but it is similarly expected to be around $25\times$ slower than our scheme as indicated by the circuit sizes. Also, in practice, nothing prohibits us from replacing our underlying collision resistant hash $h$ with LowMCHash to amplify this improvement. We compare the 

\myparagraph{Communication for $\assert$.} Table~\ref{tab:comm} (Appendix \ref{app:commassert}) shows the amount of communication needed for executing the $\assert$ functionality using SHA3-256 and our scheme. We observe that our scheme requires $36\times$ less communication for the committer and the verifier than the SHA3-256 based baseline.

\begin{table}
\centering
\small
\resizebox{0.95\columnwidth}{!}{
\begin{tabular}{@{}cccc@{}}
\toprule
\multicolumn{1}{c}{No. of bits} & \multicolumn{1}{l}{Ours (s)} & \multicolumn{1}{l}{SHA3-256 (s)} & \multicolumn{1}{l}{Improvement}   \\ \midrule

$2^{14}$                        & 0.07                         & 0.57                          & $8\times$                       \\

$2^{18}$                        & 0.22                         & 8.16                          & $36\times$                      \\

$2^{22}$                        & 2.67                         & 133.23                        & $50\times$                      \\

$2^{26}$                        & 39.14                        & 2200*                       & $56\times$                      \\

$2^{30}$                        & 590.70                       & 35500*                      & $60\times$                      \\  \bottomrule
\end{tabular}}
\caption{Comparison of running time for SHA3-256 baseline and our scheme executing the $\assert$ functionality. Here $p_c = 1/2$. * means estimated via extrapolation}
\label{tab:time}
\end{table}

\myparagraph{Computation load for $\pvccommit$.} 
In Table~\ref{tab:comp}, we show the number of indices $|\mathcal{I}|$ for the commitment that needs to be computed alongside the size of the entire commitment that a committer needs to prepare in order to commit its input. In these experiments $p=3/8$ ($q=5/8$) and block size $b = min(\sqrt{n}, 1024)$. The size of the commitment results in a very limited communication and space requirement. Block size limit of 1024 bits, limits the size of the commitment to just 1.09 MB. We use the formulation, upon ceiling to the next nearest integer, defined in Theorem~\ref{thm:construction3} to compute $|\mathcal{I}|$. In Figure~\ref{fig:indices} (Appendix~\ref{app:numind}), we plot this formulation for $\sigma = 40$, $b = 1024$ and different values of q (and the covert security parameter $p$ i.e $1-q$) to show how the number times $|\mathcal{I}|$ that the committer needs to compute $H$ varies with the security parameters.

In Table~\ref{tab:comp} we also show the computation load of the committer for committing its input. In particular, we evaluate the time needed to perform the $\pvccommit$ functionality. This only needs to be performed once for a given input, in the clear. We see that the computation load is very limited even for large input sizes. For these results we use only a single process, however this computation is trivially parallelizable. Several hashes can be computed in parallel. For example a 128 threaded implementation should enable the committer to commit $2^{30}$ bits in less than 3 minutes. Furthermore, we perform these computations in Python using standard libraries and there is scope for further significant optimization by using a low-level language.

\begin{table}[]
\small
\resizebox{0.95\columnwidth}{!}{
\begin{tabular}{@{}cccc@{}}
\toprule
\multicolumn{1}{c}{No. of bits} & \multicolumn{1}{c}{\# of Hashes}  & \multicolumn{1}{c}{\begin{tabular}[c]{@{}c@{}}Size of the\\ Commitment (MB)\end{tabular}}  & Time   \\ \midrule
$2^{14}$                        & 5408                     & 0.17             & 1.61s  \\
$2^{18}$                        & 17696                      & 0.57         & 28.21s \\
$2^{22}$                        & 34080                       & 1.09         & 5.06m  \\
$2^{26}$                        & 34080                        & 1.09       & 36.92m  \\
$2^{30}$                        & 34080                       & 1.09       & 5.91h \\ \bottomrule
\end{tabular}}
\caption{Computation time and size of the commitment using our scheme for executing the $\pvccommit$ functionality. Here $p = 3/8$, statistical security parameter $\sigma = 40$ and block size $b=min(\sqrt{n},1024)$ where $n$ is the size of input.}
\label{tab:comp}
\end{table}

\section{Lower Bounds}\label{sec:lower}
In this section we provide lower bounds on how many AND gates are required for a collision resistant indexed hash function and an ordinary hash function. Recall that our construction $3$ from Section~\ref{sec:coverthashes} requires half an AND gate per bit of input.
In this section we show that
\begin{enumerate}
\item Construction $3$ is optimal amongst hash functions whose output's size is sublinear w.r.t. their input's size (Corollary~\ref{cor:lowercovert}).\label{item:lb1}
\item  For ordinary hash functions we  show that every collision resistant hash function requires at least one AND gate per input bit (Proposition~\ref{prop:lowerordinary}).\label{item:lb2}
\item Assuming that we want our hash functions to be hiding, we show, both for indexed and ordinary hash functions, that allowing their output to be large does not help much to reduce the number of required nonlinear gates (Proposition~\ref{prop:lowerhiding}).\label{item:lb3}
\end{enumerate}

Moreover, although we state the above results in terms of Boolean circuits, it is not hard to see that the arguments extend to any field. The following lemma and corollary correspond to item~\ref{item:lb1} above. The proof, given in Appendix~\ref{app:lb}, constructs an algorithm to find a $1$-collision on any $H$ with small set of nonlinear gates by casting that problem as that of solving a linear system $S$ on $\mathbb{F}_2$, and showing that $S$ always has a solution. Recall that indexed hash functions have three inputs $i, r, x$, in the statement by {\em main input} we mean $x$.

\begin{restatable}[]{proposition}{lowerpropone}\label{prop:lowercovert}
  Given any non-trivially collision bounded family of indexed hash functions $\{H_k\}_{k\in K}$ with $H_k$ given by the (polynomial size) circuit $C_k$ with $n$-bit main input, and $m$-bit output. With all but negligible probability over the generation of $k=G(\lambda)$, the circuit $C_k$ must have at least $\lceil (n-m)/2 \rceil$ nonlinear gates.
\end{restatable}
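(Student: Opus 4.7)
The plan is to argue by contrapositive: I will show that if $C_k$ has $t < \lceil (n-m)/2 \rceil$ nonlinear (i.e.\ AND) gates, then there is a polynomial-time algorithm that, given $C_k$, outputs a $1$-collision of $H_k$. Since a $1$-collision is a $q$-collision for every $q \leq 1$, the existence of such an algorithm succeeding on a non-negligible fraction of $k \gets G(\lambda)$ would contradict $q$-collision boundedness for every $q < 1$, and hence contradict non-trivial collision boundedness.

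The construction will reduce finding a $1$-collision to solving a homogeneous $\mathbb{F}_2$-linear system $S$ in the difference variables $(\Delta r, \Delta x) \in \mathbb{F}_2^{|r|+n}$. After topologically ordering the $t$ AND gates, I will write each gate input $u_j, v_j$ as an affine function of $(i, r, x, a_1, \ldots, a_{j-1})$ and the circuit output as an affine function of $(i, r, x, a_1, \ldots, a_t)$. A straightforward induction on $j$ will show that whenever $(\Delta r, \Delta x)$ zeros out the $(r,x)$-linear parts of every $u_j$ and $v_j$, every gate value $a_j$ agrees on $(i, r, x)$ and on $(i, r + \Delta r, x + \Delta x)$ uniformly in $i, r, x$. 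Adjoining the $m$ equations that zero out the $(r,x)$-linear part of the output map then forces $C_k(i, r, x) = C_k(i, r+\Delta r, x+\Delta x)$ identically in $i$. In total $S$ consists of $2t + m$ equations.

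The heart of the argument is a rank--nullity count on $S$: the solution space has dimension at least $|r| + n - (m + 2t)$, which is at least $|r| + 1$ precisely when $t < \lceil (n-m)/2 \rceil$. An $(|r|+1)$-dimensional subspace of $\mathbb{F}_2^{|r|+n}$ cannot be contained in the $|r|$-dimensional subspace $\{(\Delta r, 0)\}$, so $S$ must have a solution with $\Delta x \neq 0$. Choosing any base $(r, x)$ and outputting $(r, x, r + \Delta r, x + \Delta x)$ then yields a $1$-collision; Gaussian elimination over $\mathbb{F}_2$ makes the whole procedure polynomial-time. A standard averaging argument over $k \gets G(\lambda)$ then upgrades ``every key with too few nonlinear gates admits an efficient collision-finder'' to ``only a negligible fraction of keys can have too few nonlinear gates''.

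The main difficulty I anticipate is the inductive bookkeeping that justifies the reduction to a purely $\mathbb{F}_2$-linear system: one must verify that, once the earlier $a_k$'s are forced to coincide across the two evaluations, the $a_k$-terms in the affine expressions for $u_j$ and $v_j$ cancel exactly, leaving a constraint that depends only on $(\Delta r, \Delta x)$ and is independent of the running values $i, r, x$. Once this cancellation is made precise, the remainder of the proof is routine linear algebra together with the standard security reduction.
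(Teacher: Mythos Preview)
Your proposal is correct and follows essentially the same approach as the paper: reduce to a homogeneous $\mathbb{F}_2$-linear system coming from the gate-input wires and the output wires, then count dimensions to find a nontrivial solution via Gaussian elimination. The paper streamlines one step by treating $r$ together with $i$ as a single auxiliary input and solving only for $\Delta x \in \mathbb{F}_2^n$ (so $2t+m<n$ constraints in $n$ unknowns immediately yields a nonzero $\Delta x$), which saves you the extra argument that the solution space is not contained in $\{(\Delta r,0)\}$.
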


Note that in practice the lower bound on the nonlinear gate count will apply (with all but negligible probability) for any $\lambda$ large enough to be considered secure.
In particular we have the following corollary which says that, in order to beat our constructions asymptotically, an indexed hash function must have large output.

\begin{corollary}\label{cor:lowercovert}
  Any family of covertly collision resistant hash function circuits, indexed by $n$, with main input in $\{0,1\}^n$ must either have at least $n/2+o(n)$ nonlinear gates or must have output size that is not $o(n)$.
\end{corollary}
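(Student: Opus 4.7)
The plan is to derive the corollary as a direct contrapositive consequence of Proposition~\ref{prop:lowercovert}, since the proposition already supplies the quantitative lower bound $\lceil (n-m)/2\rceil$ as a function of both the input size $n$ and the output size $m$; the corollary just expresses the asymptotic consequence of taking $m$ to be sublinear in $n$.

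Concretely, I would proceed as follows. Suppose, for contraposition, that we have a family of covertly collision resistant indexed hash function circuits $\{C_n\}_{n}$ with $n$-bit main input and $m(n)$-bit output such that $m(n)=o(n)$. I then invoke Proposition~\ref{prop:lowercovert}, which guarantees that, with all but negligible probability over key generation (hence for every sufficiently large $n$ that is considered secure), the circuit $C_n$ contains at least $\lceil (n-m(n))/2\rceil$ nonlinear gates. Using the simple bound $\lceil (n-m(n))/2\rceil \geq (n-m(n))/2 = n/2 - m(n)/2$, and absorbing the sublinear term into $o(n)$ (since $m(n)/2 = o(n)$ implies $-m(n)/2 = o(n)$), the lower bound becomes $n/2 + o(n)$. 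Hence, whenever the output size is $o(n)$, the circuit must have at least $n/2 + o(n)$ nonlinear gates, which is exactly the contrapositive of the stated alternative.

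Since Proposition~\ref{prop:lowercovert} is already proved (the corollary is just an asymptotic rewording), there is essentially no obstacle: the only thing to check is the elementary algebra that $(n-m)/2 = n/2 + o(n)$ whenever $m=o(n)$, and the observation that the proposition's ``with all but negligible probability'' clause translates, for each sufficiently large secure value of $\lambda$, into a lower bound that holds for the circuit families one would actually care about.
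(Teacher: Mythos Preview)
Your proposal is correct and matches the paper's approach: the paper presents this as an immediate corollary of Proposition~\ref{prop:lowercovert} with no separate proof, and your derivation---plugging $m(n)=o(n)$ into the bound $\lceil (n-m(n))/2\rceil$ and absorbing the $-m(n)/2$ term into the $o(n)$---is exactly the intended one-line justification.
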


A stronger result can be achieved in the case of an ordinary secure hash function, by relying on the fact that they do not take auxiliary inputs. The idea of the proof is similar to that of Proposition~\ref{prop:lowercovert}, and is given in Appendix~\ref{app:lb}.

\begin{restatable}[]{proposition}{lowerproptwo}\label{prop:lowerordinary}
  Let $\{h_k\}_{k\in K}$ be a collision resistant family of hash functions with $h_k$ given by the circuit $C_k$ with $n$-bit input and $m$-bit output. With all but negligible probability with respect to the generation of $k=G(\lambda)$, the circuit $C_k$ must have at least $n-m$ nonlinear gates.
\end{restatable}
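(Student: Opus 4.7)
The plan is to adapt the linear-system strategy from the proof of Proposition~\ref{prop:lowercovert}, but to push the bound from $(n-m)/2$ up to $n-m$ by exploiting the fact that an ordinary hash collision need only be produced at a single input pair, rather than uniformly across an index set. Assume for contradiction that with non-negligible probability over $k=G(\lambda)$, the circuit $C_k$ has $t<n-m$ AND gates; I will build a PPT adversary that outputs a collision of $h_k$.

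First, from $C_k$ I would extract a symbolic representation $h_k(x)=A_kx+B_k\,g(x)+d_k$, where $A_k\in\mathbb{F}_2^{m\times n}$ captures the linear dependence of the output on $x$, $B_k\in\mathbb{F}_2^{m\times t}$ captures the dependence on the AND-gate outputs $g(x)=(g_1(x),\ldots,g_t(x))$, and $d_k\in\mathbb{F}_2^m$ is a constant. This extraction runs in polynomial time by propagating linear coefficients through the XOR skeleton while treating each AND output as a free symbol. Since $\mathrm{rank}(A_k)\leq m$ we have $\dim\ker A_k\geq n-m>t$, so a nonzero $\delta\in\ker A_k$ can be produced by Gaussian elimination.

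Second, for any such $\delta$, $h_k(x)=h_k(x+\delta)$ reduces to $B_k(g(x+\delta)-g(x))=0$. I would aim to find $x$ such that $g(x+\delta)=g(x)$ gate by gate, using the following inductive observation: conditional on the first $i-1$ gate outputs being invariant under $x\mapsto x+\delta$, the input of gate $i$ shifts by the constant $\alpha_i^\top\delta$ (respectively $\gamma_i^\top\delta$) on its two affine wires, so the invariance of $g_i$ becomes the single affine-in-$x$ equation $u_i(x)\gamma_i^\top\delta+v_i(x)\alpha_i^\top\delta+\alpha_i^\top\delta\,\gamma_i^\top\delta=0$ after substituting in the already-fixed values of earlier gates. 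Collecting these over all $t$ gates produces a system of at most $t$ affine equations on $x\in\mathbb{F}_2^n$; since $t<n$, by Gaussian elimination a solution $x$ exists whenever the system is consistent, and $(x,x+\delta)$ is then a collision because both $A_k\delta=0$ and $B_k(g(x+\delta)-g(x))=0$.

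The main obstacle is making the inductive substitution in the second step fully rigorous and arguing consistency: the equations for later gates reference wire values that symbolically depend on earlier gate outputs, and one must be careful that after fixing the gate outputs recursively one still obtains a linear system in the remaining freedom of $x$. Here the gap between $(n-m)/2$ in Proposition~\ref{prop:lowercovert} and the sharper $n-m$ comes into play: in the covert case, invariance had to hold for every index and hence the $2t$ constraints were forced on $\delta$ alone (giving $m+2t<n$), whereas in the ordinary case the $t$ invariance constraints are shared with $x$, leaving only the $m$-dimensional constraint $A_k\delta=0$ on $\delta$, which is exactly the counting that yields $t<n-m$. If consistency fails for a given $\delta$, I would retry with another element of $\ker A_k$, using a union bound argument over the exponentially many choices to show that a good $\delta$ is found with overwhelming probability.
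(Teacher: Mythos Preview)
Your proposal has a genuine gap in the second step. You want to ensure $g_i(x+\delta)=g_i(x)$ for each gate $i$, and you correctly observe that (assuming invariance of gates $1,\dots,i-1$) this becomes the equation $u_i(x)\gamma_i^\top\delta+v_i(x)\alpha_i^\top\delta+\alpha_i^\top\delta\,\gamma_i^\top\delta=0$. But that equation is affine in the wire values $u_i(x),v_i(x)$, which are themselves affine in $x$ \emph{and in the earlier gate outputs} $g_1(x),\dots,g_{i-1}(x)$. Your inductive hypothesis only gives invariance of those earlier outputs under $x\mapsto x+\delta$, not that they take fixed constant values; so $u_i(x),v_i(x)$ remain genuinely nonlinear in $x$, and the resulting system is not one that Gaussian elimination can solve. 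The phrase ``after substituting in the already-fixed values of earlier gates'' is exactly where the argument slips: invariance does not fix the value. And even if you try to additionally pin down gate outputs as you go, the invariance constraint for gate $i$ can be outright inconsistent with the current affine slice (e.g.\ when $\alpha_i^\top\delta=1$, $\gamma_i^\top\delta=0$ and $v_i$ is already forced to the constant $1$, every $x$ in the slice has $g_i(x+\delta)\neq g_i(x)$). The proposed fallback of retrying with another $\delta\in\ker A_k$ comes with no argument that some $\delta$ must succeed.

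The paper's proof sidesteps all of this by not choosing a $\delta$ at all. It processes the AND gates in topological order and, for each gate, adds a single affine constraint on $x$ that forces one of the two input wires to a constant (set the first input to $0$ if that is still consistent; otherwise the first input is already determined, so fix the second). Because earlier gate outputs are now constants on the running solution set, each new wire really is affine in $x$, and by construction the system is always consistent. After $d$ such constraints all gate outputs are fixed; the circuit output is then affine in $x$, and $m$ further constraints pin it down. With $d+m<n$ the solution set has at least two points, and any two of them collide. This is precisely the counting you describe---the $m$ output constraints play the role of your condition $\delta\in\ker A_k$---but the $d$ gate constraints are spent on \emph{fixing} outputs rather than on invariance, and that is what keeps the system affine throughout.
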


These results show that our constructions have asymptotically half the verification cost of the baseline with any ordinary secure hash function. However, recall that we designed our construction $3$ for the output of $H$ to be small, i.e. $o(n)$, for {\em efficiency and not security reasons}. One may thus wonder whether dropping this requirement allows to significantly overcome the above lower bounds. Next, we show that the answer is negative by leveraging the fact that we do require a hiding property {\em for security}, which we show implies a linear lower bound on the required number of AND gates.

The proof of the following result can be found in Appendix~\ref{app:lb}. It relies on the fact that if you have a small number of AND gates then only a small amount of the entropy in the randomness can affect their inputs. The rest of the randomness can not be used for hiding the output without giving too much leeway for finding collisions. Thus only a small amount of randomness and a small number of output wires from AND gates can hide the output. Thus the output must be effectively small and the above propositions can be applied.

\begin{restatable}[]{proposition}{lowerpropthree}\label{prop:lowerhiding}
  Suppose that $\{H_k\}_{k\in K}$ is a non-trivially collision bounded and hiding family of hash functions. Let $H_k$ be given by $C_k$ with an $n$-bit main input and $d$ nonlinear gates, then with all but negligible probability, $d\geq n/5$.
  Further if $|\mathcal{I}|=1$, then $d\geq n/3$.
\end{restatable}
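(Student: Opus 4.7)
The plan is to write the output of the circuit as $y = Ax + Br + Ci + Gg$ with $g \in \mathbb{F}_2^d$ the column of AND-gate outputs, and then to peel off the ``free'' randomness that bypasses every nonlinear gate. Since the $2d$ gate-inputs are affine forms in $(x,r,i)$, their restrictions to $r$ span at most $2d$ dimensions; the complementary subspace $L^\perp \subseteq \mathbb{F}_2^{|r|}$, of dimension at least $|r|-2d$, shifts $r$ without affecting any gate, so such shifts move $y$ only inside $U := B(L^\perp) \subseteq \mathbb{F}_2^m$. Thus $U$ is the subspace that carries a purely uniform linear mask, independent of $x$ and of $g$.

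Next I would argue that hiding forces $\mathrm{col}(A) \subseteq \mathrm{col}(B) + \mathrm{col}(G)$: the output for fixed $x,i$ is supported in the coset $Ax + Ci + \mathrm{col}(B) + \mathrm{col}(G)$, and if this coset changed with $x$ then a coset-membership test (efficient given the circuit description) would distinguish the two distributions, breaking hiding. Splitting $B = B_1 \oplus B_2$ with $\mathrm{col}(B_2) = U$ and $\mathrm{col}(B_1)$ of dimension $k \leq 2d$, and quotienting by $U$, the inclusion becomes $\mathrm{col}(\bar A) \subseteq \mathrm{col}(\bar B_1) + \mathrm{col}(\bar G)$, a subspace of dimension at most $k + d \leq 3d$. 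Hence the portion of the output that depends on $x$ at all lives, modulo $U$, in a subspace $W$ of dimension at most $3d$.

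From here the strategy is to reduce to the bounds already proved. Let $\tilde H$ denote the hash obtained by projecting $H$ onto $W$; by construction $\tilde H$ has $d$ nonlinear gates, $n$-bit main input, at most $k \leq 2d$ bits of randomness $r_1$, and output size at most $3d$. Any $x \neq x'$ collision of $\tilde H$ lifts to a collision of $H$, because the residual difference lies in $U$ and can be cancelled by a compensating shift in $L^\perp$. For $|\mathcal{I}|=1$, the $x$-collision refinement of Proposition~\ref{prop:lowerordinary} applied to $\tilde H$ yields the bound: pigeonhole over the $2^{n+k}$ inputs gives some output value with $\geq 2^{n+k-3d}$ preimages, and once $n > 3d$ these cannot all share the same $x$-coordinate, producing an $x \neq x'$ collision findable by the linear-algebra-within-a-gate-partition-part strategy of the earlier proofs. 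This forces $d \geq n/3$. For general $\mathcal{I}$, the analogous application of Proposition~\ref{prop:lowercovert} to the $\leq 3d$-bit output of $\tilde H$ gives $d \geq (n-3d)/2$, i.e., $d \geq n/5$.

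The hard part will be making the collision lifting rigorous in the indexed case. A $q$-collision of $H$ requires a single pair $(r,r',x,x')$ colliding for a $q$-fraction of indices, whereas the cancelling shift in $L^\perp$ that zeros the $U$-component of the difference could a priori depend on $i$. Avoiding this is exactly why the reduction proceeds in two steps: first modding out $U$, then restricting to $W$, so that the coordinates we discard are genuinely $x$-independent and can be matched by taking $r_1 = r_1'$ in the collision we construct, rather than by an $i$-dependent compensation.
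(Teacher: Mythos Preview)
Your approach is essentially the paper's: write the output affinely in $(x,r,i,g)$, split off the randomness $L^\perp$ that never reaches a gate input, quotient by $U=B(L^\perp)$, use hiding to force $\mathrm{col}(\bar A)\subseteq\mathrm{col}(\bar B_1)+\mathrm{col}(\bar G)$ so the reduced output has dimension at most $3d$, then invoke Propositions~\ref{prop:lowercovert} and~\ref{prop:lowerordinary}. The paper does exactly this with its linear maps $L_1,\dots,L_9$; your matrix notation is cosmetic.

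You are right to flag the collision-lifting as the delicate step, but your proposed resolution does not work as written. The $U$-coordinates you discard are \emph{not} $x$-independent: $\pi_U(Ax+Br+Ci+Gg)$ depends on $x$ both directly through $A$ and through $g$, and taking $r_1=r_1'$ does not stop $g(i,r_1,x)-g(i,r_1,0)$ from varying with $i$, so a single $i$-independent compensation in $L^\perp$ is not guaranteed by the shape of $\tilde H$ alone. The actual fix comes from the specific \emph{algorithm} of Proposition~\ref{prop:lowercovert}, not merely its statement: that algorithm selects $x$ so that every wire feeding a nonlinear gate agrees with its value at $x=0$ for \emph{every} auxiliary input, whence $g(i,r_1,x)=g(i,r_1,0)$ identically, the difference $H(i,r,x)-H(i,r,0)$ collapses to a fixed linear form in $x$ alone, and a single shift in $L^\perp$ works for all $i$ simultaneously. (The paper's own lifting paragraph also treats only a single $i$ and tacitly leans on this.) Separately, your pigeonhole argument for $|\mathcal{I}|=1$ establishes only \emph{existence} of an $x\neq x'$ collision, not an efficient procedure to find one, which is what contradicting collision-boundedness requires; applying Proposition~\ref{prop:lowerordinary} to $\tilde H$ with $r_1$ fixed is the clean route, though note that this yields $d\geq n/4$ directly rather than $n/3$.
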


\section{From Covert to Malicious Security}\label{sec:malicious}
A natural idea is to amplify the statistical guarantee of an indexed hash function $H$
by computing it at several indices. This would in turn lead to a PVC commitment scheme
with improved parameters
where $H$ is run on several indices.
More concretely, given a collision resistant indexed hash function $H$ we can compute an indexed hash function $H^{\kappa}$ with stronger security by computing $H$ $\kappa$ times with different indices. Formally, with $i_j\in \mathcal{I}$ for $j\in\{1,...,\kappa\}$
\begin{equation}
H^{\kappa}((i_j)_{j=1}^{\kappa},r,x)=\left(H(i_j,r,x)\right)_{i=1}^{\kappa}.
\end{equation}
This new function requires no more hashes to be prepared by the committer and, if $\{H_k\}_{k\in K}$ is $q$-collision bounded then $\{H_k^\kappa\}_{k\in K}$ is $q^\kappa$-collision bounded. However, it also requires $\kappa$ times as many AND gates (and XOR gates) to compute it. In this section, we present a construction that asymptotically requires no more AND gates than $H$ (and fewer XOR gates than $H^{\kappa}$) to achieve this higher security.

Let $E:\{0,1\}^w\rightarrow \{0,1\}^l$ be the encoding function of a $(\kappa-1)$ error detecting code.
All we require from $E$ is that if two messages $m,m'\in\{0,1\}^w$ then their codes, i.e. $E(m), E(m')\in \{0,1\}^l$ differ in at least $\kappa$ positions.
Split $x$ into $w$ words, $x_1,...,x_w$ each of length $\lceil n/w\rceil$, zero-padding $x$ as required. Let $x^1,...,x^{\lceil n/w\rceil}$ be the columns of the matrix whose rows are given by the $x_j$. Let $\tilde{x}_1,...,\tilde{x}_l$ be the rows of the matrix whose columns are given by $E(x^1),...,E(x^{\lceil n/w\rceil})$. Finally let
\begin{equation}
H^E\left((i_j)_{j=1}^{l},r,x\right)=\left(H(i_j,r,\tilde{x}_j)\right)_{j=1}^{l}.
\end{equation}

The following theorem follows from the structure of $H^E$ and the property of the error detecting code (proof in Appendix~\ref{app:malicious}).
\begin{restatable}[]{theorem}{maliciousone}
If $\{H_k\}_{k\in K}$ is $q$-collision bounded then $\{H_k^E\}_{k\in K}$ is $q^\kappa$-collision bounded.

Furthermore, the number of AND and XOR gates required to compute $H_E$ is $l\lceil n/w \rceil$ times the number of gates required per bit by $H$ plus $\lceil n/w \rceil$ times the number of gates required by $E$.
\end{restatable}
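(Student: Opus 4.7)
The plan is to handle the theorem's two claims separately. The gate-count claim is essentially a direct counting argument from the definition of $H^E$: the construction makes $\lceil n/w\rceil$ applications of $E$, one per column of the word-matrix, and $l$ applications of $H$ on inputs of length $\lceil n/w\rceil$ bits, one per row $\tilde{x}_j$. Multiplying the latter by $H$'s per-bit gate count immediately yields the stated expression, so this part can be dispatched in a line or two after the main argument.

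For the $q^\kappa$-collision boundedness claim I proceed by reduction. Suppose, toward contradiction, that for some fixed $q' > q^\kappa$ there is a PPT adversary $A$ that finds a $q'$-collision of $H_k^E$ with non-negligible probability; I will build a PPT adversary $B$ that outputs a $q''$-collision of $H_k$ for a fixed $q''>q$, contradicting $q$-collision boundedness of $\{H_k\}_{k\in K}$. Given a quadruple $(r,x,r',x')$ output by $A$, the key observation is that because the indices $i_1,\ldots,i_l$ are independent and uniform over $\mathcal{I}$, the fraction of tuples $(i_1,\ldots,i_l)$ witnessing an $H^E$-collision factorises as $\prod_{j=1}^{l}\alpha_j$, where $\alpha_j = |\{i\in\mathcal{I} : H(i,r,\tilde{x}_j)=H(i,r',\tilde{x}'_j)\}|/|\mathcal{I}|$. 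Being a $q'$-collision of $H_k^E$ then forces $\prod_j \alpha_j \geq q'$.

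The error-detecting code $E$ drives the amplification: since $x\neq x'$, at least one column of the word-matrix differs, and the $(\kappa-1)$-error-detecting property of $E$ (equivalently, minimum distance $\kappa$) guarantees $\tilde{x}_j\neq\tilde{x}'_j$ for at least $\kappa$ values of $j$. Picking any such set $S'$ of size exactly $\kappa$ and using $\alpha_j\le 1$ for the remaining indices, we retain $\prod_{j\in S'} \alpha_j \geq q' > q^\kappa$ as a product of $\kappa$ factors, so by pigeonhole (equivalently AM-GM in logarithms) at least one $j^*\in S'$ satisfies $\alpha_{j^*}\geq (q')^{1/\kappa}$. Adversary $B$ runs $A$, identifies $S'$ via $E$, computes each corresponding $\alpha_j$ by brute-force iteration over $\mathcal{I}$ (which is polynomial in $\lambda$), and outputs the quadruple $(r,\tilde{x}_{j^*},r',\tilde{x}'_{j^*})$ attaining the maximum. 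Fixing $q''=(q')^{1/\kappa}$, a constant strictly greater than $q$, $B$ outputs a $q''$-collision of $H_k$ whenever $A$ succeeds, delivering the required contradiction.

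The technical points to watch are: (i) pinning $q''$ down as a fixed constant in advance, so that the contradiction lands against a single instance of the $q'$-collision-resistance guarantee subsumed by $q$-collision boundedness; (ii) verifying that $B$ runs in polynomial time, which relies on $|\mathcal{I}|$ being polynomial in $\lambda$ so that each $\alpha_j$ can be evaluated exactly; and (iii) applying the minimum-distance property of $E$ correctly -- this is precisely the step that converts a single differing column into $\kappa$ differing rows of the row-decomposition, and is the whole purpose of the encoding layer. I expect (iii) to be where the proof's real content lies; the rest is a standard factorisation-plus-pigeonhole reduction.
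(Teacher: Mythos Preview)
Your proposal is correct and follows essentially the same route as the paper. The paper's proof is terser---it notes that from a $q'^\kappa$-collision of $H_k^E$ with $q'>q$, at least one of the $\kappa$ rows $j_m$ yields a $q'$-collision of $H_k$, and finds it by testing each $m$ in turn---but this is exactly your factorisation-plus-pigeonhole argument with the reparameterisation $q''=(q')^{1/\kappa}$; your write-up simply makes the product structure over $\mathcal{I}^l$ and the bookkeeping around fixing $q''$ and verifying polynomial time explicit, which the paper leaves implicit.
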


To make use of the above result we need an error detecting code $E$ that works on fairly large codewords and is easy to compute. We want it to be linear to keep the number of AND gates low, but we also do not want to introduce too many XOR gates. The following lemma provides such an encoding.

\begin{lemma}\label{lem:E}
Given $\rho,d \in \mathbb{Z}_+$, there exists a linear $2^d-1$ error detecting encoding $E:\{0,1\}^{\rho^d}\rightarrow \{0,1\}^{(\rho+1)^d}$ requiring $(\rho-1)((\rho+1)^d-\rho^d)$ XOR gates to compute.
\end{lemma}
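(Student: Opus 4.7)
The plan is to construct $E$ as the $d$-fold iteration of a single-parity-check code, i.e., the standard tensor-product construction. First I would define the base code $E_1 : \{0,1\}^\rho \to \{0,1\}^{\rho+1}$ by
\begin{equation*}
E_1(b_1, \ldots, b_\rho) = (b_1, \ldots, b_\rho,\ b_1 \xor b_2 \xor \cdots \xor b_\rho).
\end{equation*}
This $E_1$ is linear, has minimum distance $2$, and its circuit requires exactly $\rho - 1$ XOR gates.

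Next I would build $E$ by iterating $E_1$ along each of $d$ axes. View the input $\{0,1\}^{\rho^d}$ as a $d$-dimensional array of shape $\rho \times \cdots \times \rho$, and for $k = 1, \ldots, d$ in sequence apply $E_1$ to every fiber along the $k$-th axis, extending that axis from $\rho$ to $\rho+1$. After all $d$ stages the output is an array of shape $(\rho+1) \times \cdots \times (\rho+1)$, i.e., a bitstring of length $(\rho+1)^d$; linearity of $E$ is immediate since every stage is a linear map.

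For the distance bound, I would verify that in the final array every fiber along every axis is an $E_1$-codeword (a short direct computation shows that applying $E_1$ along axis $k'$ later does not disturb the codeword property established along axis $k < k'$, because $E_1$ appends the XOR of the preceding bits and XOR is linear). Thus $E$ coincides with the $d$-fold tensor product code $E_1^{\otimes d}$, and by the standard fact that the minimum distance of a tensor product of linear codes is the product of their minimum distances, $E$ has distance $2^d$ and therefore detects up to $2^d - 1$ errors.

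Finally, I would count gates stage by stage. At stage $k$ the code $E_1$ is applied to each fiber along axis $k$; the number of such fibers is $(\rho+1)^{k-1}\rho^{d-k}$, since axes $1,\ldots,k-1$ have already been expanded while axes $k+1,\ldots,d$ still have length $\rho$. Each application costs $\rho-1$ XORs, so the total is
\begin{equation*}
\sum_{k=1}^d (\rho-1)(\rho+1)^{k-1}\rho^{d-k} \;=\; (\rho-1)\cdot\frac{(\rho+1)^d - \rho^d}{(\rho+1)-\rho} \;=\; (\rho-1)\bigl((\rho+1)^d - \rho^d\bigr),
\end{equation*}
which matches the claim. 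The only subtle step is checking that the iterative axis-by-axis construction really yields the tensor product code; once this is in hand, the distance bound and the gate count both reduce to routine calculations.
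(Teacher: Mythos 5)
Your construction is exactly the paper's: view the message as a $d$-dimensional $\rho$-cube, extend it by a single parity check along each axis in turn, and count $(\rho-1)(\rho+1)^{k-1}\rho^{d-k}$ XOR gates at stage $k$, which telescopes to the stated total. The only difference is that the paper establishes the distance bound $2^d$ by a short direct induction on the number of extended axes, whereas you identify the code with $E_1^{\otimes d}$ and invoke the standard tensor-product minimum-distance fact; these are the same argument in different packaging, so the proposal is correct and essentially matches the paper.
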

\begin{proof}
Given a message $m\in \{0,1\}^{\rho^d}$, arrange the bits of $m$ in a $d$-dimensional cube. We index into $m$ with the notation $m[i_1,...,i_d]$. We extend $m$ by one in each dimension in turn by the following method. To extend $m$ by one in the dimension $j$, let $m[i_1,...,i_{j-1},\rho,i_{j+1},i_d]$ be the XOR of $m[i_1,...,i_{j-1},0,i_{j+1},i_d]$ through $m[i_1,...,i_{j-1},\rho-1,i_{j+1},i_d]$. The output of $E$ is just the contents of the resulting cube.

Let $m'$ be a different message, then for some choices of $i_j$ we have that $m[i_1,...,i_d]\neq m'[i_1,...,i_d]$. We can then deduce by induction that after $j$ dimensions have been extended there are at least $2^j$ points in the cuboids with final co-ordinates $i_{j+1},...,i_d$ on which $m$ and $m'$ differ. Thus once all directions have been extended the arrays $m$ and $m'$ differ in at least $2^d$ places and we have a $2^d-1$ error detecting code.

The $j$th extension requires $(\rho-1)(\rho+1)^{j-1}\rho^{d-j}$ XOR gates. Summing over all $j$ gives the result.
\end{proof}

Putting the above together we get a corollary which says there exists an asymptotically efficient protocol for maliciously secure commitment. Note that $\log 1/q$ is a statistical security parameter so can be thought of as a small constant, independent of $n$ and $\lambda$, in practice $\log_2 \log_2 1/q = 6$ should suffice.

\begin{corollary}\label{cor:maliciousworks}
Assume the existence of a collision resistant family of hash functions $\{h_k\}_{k\in K}$ with run time linear in input size and a random oracle $\prng$. Then there exists a $q$-collision bounded indexed hash function family with the following two properties. For a fixed security parameter, it can be computed with $n/2+o(n\log \log 1/q)$ AND gates and $(5/2+\lceil\log_2 \log_2 1/q\rceil)n+o(n\log \log 1/q)$ XOR gates. It requires $o(n\log_2 \log_2 1/q)$ information to be stored in order to be able to check any result.

Furthermore, if $\{h_k\}_{k\in K}$ is replaced by a family of random oracles then the resulting indexed hash function family is hiding.
\end{corollary}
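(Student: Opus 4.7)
\noindent
The plan is to instantiate $H$ by Construction~3 of Section~\ref{sec:coverthashes} and then amplify via the $H^E$ construction of this section, plugging in the error-detecting code of Lemma~\ref{lem:E}. First, take $H$ to be $q_0$-collision bounded for some fixed $q_0\in(1/2,1)$, say $q_0 = 5/8$, with a block size $b=\omega(1)$; by Theorem~\ref{thm:construction3} and the assumption that $\prng$ is a random oracle, such $H$ exists and requires, per bit of input, $1/2 + o(1)$ AND gates and $3/2 + o(1)$ XOR gates. Next set $\kappa = \lceil \log q / \log q_0 \rceil = \Theta(\log 1/q)$ so that $q_0^{\kappa}\le q$, set $d = \lceil \log_2 \kappa \rceil = \lceil \log_2 \log_2 1/q\rceil + O(1)$, and choose $\rho$ to grow with $d$ fast enough that $(1+1/\rho)^d = 1 + o(1)$ (for concreteness $\rho = d^2$). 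Invoke Lemma~\ref{lem:E} to obtain a $(2^d-1)$-error-detecting linear code $E:\{0,1\}^{\rho^d}\to\{0,1\}^{(\rho+1)^d}$, and build $H^E$ from $E$ and $H$ exactly as defined above.

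\noindent
Second, verify the claimed properties. By the preceding theorem on $H^E$, the family is $q_0^{\kappa}$-collision bounded and hence $q$-collision bounded. Writing $w=\rho^d$ and $l=(\rho+1)^d$, the AND count is $l\cdot\lceil n/w\rceil\cdot(1/2+o(1)) = (1+1/\rho)^d\bigl(n/2 + o(n)\bigr) = n/2 + o(n)$ by the choice of $\rho$, which fits within the claimed $n/2 + o(n\log\log 1/q)$. The XOR count splits in two: the $l$ copies of $H$ contribute $(1+1/\rho)^d(3n/2 + o(n)) = 3n/2 + o(n)$, and by Lemma~\ref{lem:E} the $\lceil n/w\rceil$ column encodings of $E$ contribute $\lceil n/w\rceil\cdot(\rho-1)\bigl((\rho+1)^d - \rho^d\bigr) = n(\rho-1)\bigl((1+1/\rho)^d - 1\bigr) = nd + o(nd)$, for a total of $(3/2+d)n + o(nd)$ XOR gates, within the stated $(5/2+d)n + o(nd)$ bound (the small slack absorbs any constant-per-bit overhead from indexing or packing). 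Commitment-wise, Merkle-tree compression of all hash outputs collapses the stored material to a single $\lambda$-bit root plus logarithmic opening paths, comfortably $o(nd)$.

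\noindent
Third, for the hiding claim, when $h$ is a family of random oracles, Theorem~\ref{thm:construction3}'s hiding bound (Appendix~\ref{app:coverthashes}) applies to each of the $l$ rows of $H^E$ individually. Since each $\tilde x_j$ is a deterministic linear function of $x$ and the indices $i_j$ are chosen independently of $x$, a standard hybrid argument across the $l$ rows lifts hiding from $H$ to $H^E$ using the same shared randomness $r$, whose entropy remains exponential in $\lambda$; intuitively, all the $h$-queries made inside $H^E$ are distinct, so the random oracle returns independent uniform values that carry no information about $x$.

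\noindent
The principal technical obstacle is the simultaneous balancing of $\rho$ and $d$: the error-detection requirement fixes $d = \Theta(\log\log 1/q)$, the AND-gate budget forces $(1+1/\rho)^d = 1+o(1)$ and hence $\rho = \omega(d)$, while the encoding cost per bit $(\rho-1)\bigl((1+1/\rho)^d-1\bigr)$ must remain $O(d)$ to hit the XOR target. Verifying that a choice such as $\rho = d^2$ makes all three constraints — AND count, XOR count, and stored commitment size — simultaneously fit within the corollary's bounds is the technical crux of the proof.
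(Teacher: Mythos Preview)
Your overall plan—take $H$ to be Construction~3 and amplify via $H^E$ using the code of Lemma~\ref{lem:E}—is exactly the paper's approach, and your accounting for $d$ and for the XOR cost of $E$ is essentially right. The gap is in your choice of $\rho$.

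You set $\rho = d^2$, a quantity depending only on $q$ and not on $n$. Then for any \emph{fixed} $q$ (hence fixed $d$), the blow-up factor $(1+1/\rho)^d$ is a constant strictly greater than $1$, so the AND count is $(1+1/\rho)^d\cdot n/2 + o(n)$, which exceeds $n/2$ by $\Theta(n)$. The corollary's bound $n/2 + o(n\log\log 1/q)$ specializes, for fixed $q$, to $n/2 + o(n)$, and your construction does not meet that. Your step ``$(1+1/\rho)^d(n/2+o(n)) = n/2 + o(n)$ by the choice of $\rho$'' silently sends $d\to\infty$, but $d$ is determined by $q$, not by $n$; once $q$ is held fixed there is no asymptotic that drives $(1+1/\rho)^d\to 1$.

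The paper fixes this by letting $\rho$ grow with $n$: it takes $\rho = \lceil n^{1/(3d)}\rceil$ (and $q_0 = \sqrt{1/2}$, so that $d = 1 + \lceil\log_2\log_2 1/q\rceil$ on the nose). Then for fixed $q$ one has $\rho\to\infty$ as $n\to\infty$, hence $(1+1/\rho)^d = 1+o(1)$ and the AND overhead is genuinely $o(n)$; simultaneously $(\rho-1)\bigl((1+1/\rho)^d-1\bigr)\to d$, yielding the $dn$ leading term in the encoding XOR cost and hence the stated $(5/2+\lceil\log_2\log_2 1/q\rceil)n$. So the ``simultaneous balancing'' you correctly flag as the crux is resolved by scaling $\rho$ with $n$, not with $d$.

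Two smaller points. The storage bound is not obtained via a Merkle tree (which would change what $\assert$ must compute); the commitment is just the $l\cdot|\mathcal{I}|$ hash outputs, and with $w\approx n^{1/3}$ and block size $b\approx\sqrt{n/w}$ inside $H$ this is already $o(n)$. For hiding, no hybrid is needed: every coordinate of $H^E$ has the form $h(r\|\cdots)$ with the same $r$, so the random-oracle hiding argument of Appendix~\ref{app:coverthashes} applies directly.
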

\begin{proof}
Let $E$ be the encoding function given in Lemma~\ref{lem:E} with $d=1+\lceil\log_2 \log_2 1/q\rceil$ and $\rho=\lceil n^{1/3d} \rceil$. Let $H=\mathcal{C}_3(h_k)$ with $|\mathcal{I}|$ chosen to give collision resistance with parameter $1-\sqrt{1/2}$. Then $\{H_k^E\}_{k\in K}$ has all the required properties.
\end{proof}

We have not done any experiments with this idea, however from preliminary estimates of AND gate counts (with $q=2^{-2^6}$) we are confident that it offers no improvement for inputs of $10^6$ bits. If the choices of parameters were optimized we believe it would beat the baselines for $n=10^9$, though the cross over point depends on the baseline and choice of $h$ (and $\prng$).

This effectively recovers malicious security in the setting where the commitment is honestly generated, by the results of Section \ref{sec:integritycheck}. In fact, however, this method can recover malicious security in the presence of arbitrarily generated commitments too. As on all but at most one input (decided at commitment time) $H$ will catch cheating with probability $p/2$, it can be guaranteed that $H^E$ will catch cheating with all but probability $(1-p/2)^\kappa$. Thus for the not honestly committed case we need to only increase the choice of $d$ by one in the proof of Corollary \ref{cor:maliciousworks}.

\section{Arithmetic Circuits}\label{sec:arithmetic}
We have mainly focused on binary circuits because they are more flexible and there are more reasonably fast hash functions for them. However our main idea will also work to construct indexed hash functions to be computed in arithmetic circuits. As before our constructions are in terms of a secure hash function $h$ which could be implemented using MiMC~\cite{albrecht2016mimc} or any other arithmetic circuit hash function. We will assume this arithmetic is in a field $\mathbb{F}$.

Analogues of constructions $2$ and $3$ would work in this setting with XOR and AND gates replaced by ADD and MUL gates. Indeed, these would also work, with worse parameters, over arbitrary rings. These can be analysed analogously and relevant theorems deduced. However we will not detail these changes here and will instead provide a further development that was not possible in the binary case.

The idea of construction $4$ presented in this section is much like the analogue of construction $3$, however instead of using a $\prng$ to generate the random masks to be added to index,  we will generate them in a more structured fashion. Hence, construction $4$ still follows the blueprint given in Equation~\ref{eq:basicform}. The index space $\mathcal{I}$ will be a subset of $\mathbb{F}$, this requires the field to be moderately large and rules out this construction in the binary case.

As in Section~\ref{sec:coverthashes}, we have an even block size parameter $b$, and define the indexed hash by means of a digest function $d_4$ that takes an index $i$ and $b$ field elements as input and returns a single field element. Given $y$ a fixed block of $b$ elements denoted by $y_1,...,y_b$,
\begin{equation}
  d_4(i,y)=\sum_{j=1}^{b/2}(i^{2j-1}+y_{2j-1})(i^{2j}+y_{2j})
\end{equation}
The value of the hash is given by $\mathcal{C}_4(h,\lambda)(i,r,x) = h(r||i||\process_{b, d_4)}(i, x))$ given functions $\process$ and $h$, as described in Equation~\ref{eq:basicform} and Section~\ref{sec:coverthashes}.

The following Theorem states the guarantee of construction $4$. While its full proof is given in Appendix~\ref{app:arithmetic}, the basic idea is that there will be a collision so long as some vector determined from $x$ and $x'$ is not perpendicular to $(1,i,i^2,...,i^{b})$. The powers of $i$ come from the definition of $d_4$ and have been chosen (to replace the $\prng$) so that these vectors form Vandermonde matrices, thus any $b+1$ of them span and so at most $b$ are perpendicular to any given vector.

\begin{restatable}[]{theorem}{arithmeticthm}
  Construction $4$ is $b/|\mathcal{I}|$-collision bounded.
\end{restatable}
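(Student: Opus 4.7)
The plan is to invoke Theorem~\ref{thm:collision} by showing that whenever $x\neq x'$, at most $b$ indices $i\in\mathcal{I}$ satisfy $\process_{b,d_4}(i,x) = \process_{b,d_4}(i,x')$. Since $x$ and $x'$ must differ in at least one block, this reduces to proving that for any two blocks $y\neq y'$ of length $b$ over $\mathbb{F}$, the equation $d_4(i,y)=d_4(i,y')$ has at most $b$ solutions $i\in\mathbb{F}$.

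The main step will be to view $P(i) := d_4(i,y) - d_4(i,y')$ as a univariate polynomial in $i$ over $\mathbb{F}$ and analyze its coefficients by direct expansion. In the $k$-th summand of $d_4$, the top-degree term $(i^{2k-1})(i^{2k}) = i^{4k-1}$ does not depend on $y$ and therefore cancels in the difference, leaving
\begin{equation*}
(y_{2k}-y'_{2k})\,i^{2k-1} + (y_{2k-1}-y'_{2k-1})\,i^{2k} + (y_{2k-1}y_{2k} - y'_{2k-1}y'_{2k}).
\end{equation*}
Summing over $k=1,\ldots,b/2$, I will read off that $P$ has degree at most $b$ and that for each $1\leq\ell\leq b$, the coefficient of $i^{\ell}$ is a single difference $y_{\ell'}-y'_{\ell'}$ (with $\ell'$ of opposite parity to $\ell$). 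Hence $y\neq y'$ forces at least one of the coefficients of $i^{1},\ldots,i^{b}$ to be nonzero, so $P$ is a nonzero polynomial.

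A nonzero polynomial of degree at most $b$ over a field has at most $b$ roots, which gives the required bound on $|\{i\in\mathcal{I} : d_4(i,y)=d_4(i,y')\}|$, and hence on $|\{i\in\mathcal{I} : \process_{b,d_4}(i,x)=\process_{b,d_4}(i,x')\}|$. Theorem~\ref{thm:collision} then delivers $(b/|\mathcal{I}|)$-collision boundedness, as desired. I do not expect a serious obstacle beyond writing the expansion carefully; the one point meriting attention is verifying that the specific exponent pattern $i^{2k-1},i^{2k}$ chosen in $d_4$ makes the nonconstant coefficients of $P$ stand in bijection with the entries of $y-y'$ — this is precisely the structural feature that replaces the $\prng$ used in construction~3, and is the arithmetic analogue of the Vandermonde spanning argument hinted at before the statement.
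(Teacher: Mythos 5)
Your proposal is correct and follows essentially the same route as the paper: both reduce via Theorem~\ref{thm:collision} to a single differing block, expand $d_4(i,y)-d_4(i,y')$ to see that the coefficient of $i^{\ell}$ for $1\leq\ell\leq b$ is the parity-swapped entry $y_{s(\ell)}-y'_{s(\ell)}$, and conclude that at most $b$ indices can collide. The only cosmetic difference is that you finish by counting roots of a nonzero polynomial of degree at most $b$, whereas the paper phrases the same fact as at most $b$ of the Vandermonde vectors $(1,i,\dots,i^{b})$ being perpendicular to the fixed nonzero vector $v(y,y')$.
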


Note that this construction only works for fields larger than the block size $b$, but this is the case for a lot of standard hashes based in field arithmetic. If the field is very large then the covert security parameter can be made $\approx 1$ by taking $|\mathcal{I}|$ to be big. However this would be very impractical to prepare the hashes, and thus in that case it would be more practical to combine construction $4$ presented in this section with the amplification ideas from section~\ref{sec:malicious}.

\section{Conclusion}\label{sec:conclusion}
The standard simulation-based security definitions used in MPC allow a malicious adversary controlling one of the parties to provide arbitrary inputs. This leaves concerns related with input validity. In this paper, we introduced a method for securely committing an input in 2PC publicly verifiable covert (PVC) model for Boolean circuits. PVC security is valuable when the reputation of the committing party is at stake. Our methods are based on our introduction of indexed hashes and $q-$collision resistance and make use of the covert security guarantees and interactivity in MPC. Our work improves upon ordinary hash functions both in speed and communication. Our work is the first we are aware of to enable commitments in MPC for PVC security. We also extend our methods to the maliciously secure model and arithmetic circuits. 

Future work could evaluate these methods for certified prediction and for the maliciously secure variant with optimized parameters. There is also a gap to be closed between constructions and lower bounds if we allow large commitments. The requirements on $\prng$ are slightly inconvenient and a deterministic way to find vectors for construction 3, like in construction 4, would be useful. One could investigate if the commitment size could be reduced to $O(1)$ while maintaining half an AND gate per bit cost.

\begin{acks}
NA was supported by University of Oxford and Callsign. This work was done when AG was at The Alan Turing Institute (ATI) and Warwick University. AG and JB were supported by ATI under the EPSRC grant EP/N510129/1, and the UK Government’s Defence \& Security Programme. We also acknowledge ATI's support and generous provision of Azure cloud computing resources.
\end{acks}

\bibliographystyle{ACM-Reference-Format}

\ifthenelse{\equal{\nsubmission}{T}}
{\bibliography{Documents/references}}
{\bibliography{../Documents/references}}


\begin{thebibliography}{49}


\ifx \showCODEN    \undefined \def \showCODEN     #1{\unskip}     \fi
\ifx \showDOI      \undefined \def \showDOI       #1{#1}\fi
\ifx \showISBNx    \undefined \def \showISBNx     #1{\unskip}     \fi
\ifx \showISBNxiii \undefined \def \showISBNxiii  #1{\unskip}     \fi
\ifx \showISSN     \undefined \def \showISSN      #1{\unskip}     \fi
\ifx \showLCCN     \undefined \def \showLCCN      #1{\unskip}     \fi
\ifx \shownote     \undefined \def \shownote      #1{#1}          \fi
\ifx \showarticletitle \undefined \def \showarticletitle #1{#1}   \fi
\ifx \showURL      \undefined \def \showURL       {\relax}        \fi
\providecommand\bibfield[2]{#2}
\providecommand\bibinfo[2]{#2}
\providecommand\natexlab[1]{#1}
\providecommand\showeprint[2][]{arXiv:#2}

\bibitem[\protect\citeauthoryear{Agrawal, Shahin~Shamsabadi, Kusner, and
  Gasc{\'o}n}{Agrawal et~al\mbox{.}}{2019}]%
        {agrawal2019quotient}
\bibfield{author}{\bibinfo{person}{Nitin Agrawal}, \bibinfo{person}{Ali
  Shahin~Shamsabadi}, \bibinfo{person}{Matt~J Kusner}, {and}
  \bibinfo{person}{Adri{\`a} Gasc{\'o}n}.} \bibinfo{year}{2019}\natexlab{}.
\newblock \showarticletitle{QUOTIENT: two-party secure neural network training
  and prediction}. In \bibinfo{booktitle}{\emph{Proceedings of the 2019 ACM
  SIGSAC Conference on Computer and Communications Security}}.
  \bibinfo{pages}{1231--1247}.
\newblock


\bibitem[\protect\citeauthoryear{Albrecht, Grassi, Rechberger, Roy, and
  Tiessen}{Albrecht et~al\mbox{.}}{2016}]%
        {albrecht2016mimc}
\bibfield{author}{\bibinfo{person}{Martin Albrecht}, \bibinfo{person}{Lorenzo
  Grassi}, \bibinfo{person}{Christian Rechberger}, \bibinfo{person}{Arnab Roy},
  {and} \bibinfo{person}{Tyge Tiessen}.} \bibinfo{year}{2016}\natexlab{}.
\newblock \showarticletitle{MiMC: Efficient encryption and cryptographic
  hashing with minimal multiplicative complexity}. In
  \bibinfo{booktitle}{\emph{International Conference on the Theory and
  Application of Cryptology and Information Security}}. Springer,
  \bibinfo{pages}{191--219}.
\newblock


\bibitem[\protect\citeauthoryear{Albrecht, Rechberger, Schneider, Tiessen, and
  Zohner}{Albrecht et~al\mbox{.}}{2015}]%
        {albrecht2015ciphers}
\bibfield{author}{\bibinfo{person}{Martin~R Albrecht},
  \bibinfo{person}{Christian Rechberger}, \bibinfo{person}{Thomas Schneider},
  \bibinfo{person}{Tyge Tiessen}, {and} \bibinfo{person}{Michael Zohner}.}
  \bibinfo{year}{2015}\natexlab{}.
\newblock \showarticletitle{Ciphers for MPC and FHE}. In
  \bibinfo{booktitle}{\emph{Annual International Conference on the Theory and
  Applications of Cryptographic Techniques}}. Springer,
  \bibinfo{pages}{430--454}.
\newblock


\bibitem[\protect\citeauthoryear{Amodei, Olah, Steinhardt, Christiano,
  Schulman, and Man{\'e}}{Amodei et~al\mbox{.}}{2016}]%
        {amodei2016concrete}
\bibfield{author}{\bibinfo{person}{Dario Amodei}, \bibinfo{person}{Chris Olah},
  \bibinfo{person}{Jacob Steinhardt}, \bibinfo{person}{Paul Christiano},
  \bibinfo{person}{John Schulman}, {and} \bibinfo{person}{Dan Man{\'e}}.}
  \bibinfo{year}{2016}\natexlab{}.
\newblock \showarticletitle{Concrete problems in AI safety}.
\newblock \bibinfo{journal}{\emph{arXiv preprint arXiv:1606.06565}}
  (\bibinfo{year}{2016}).
\newblock


\bibitem[\protect\citeauthoryear{Archer, Abril, Lu, Maene, Mertens, Sijacic,
  and Smart}{Archer et~al\mbox{.}}{2018}]%
        {esat}
\bibfield{author}{\bibinfo{person}{David Archer},
  \bibinfo{person}{Victor~Arribas Abril}, \bibinfo{person}{Steve Lu},
  \bibinfo{person}{Pieter Maene}, \bibinfo{person}{Nele Mertens},
  \bibinfo{person}{Danilo Sijacic}, {and} \bibinfo{person}{Nigel Smart}.}
  \bibinfo{year}{2018}\natexlab{}.
\newblock \bibinfo{title}{'Bristol Fashion' MPC Circuits}.
\newblock
  \bibinfo{howpublished}{\url{https://homes.esat.kuleuven.be/~nsmart/MPC/}}.
\newblock


\bibitem[\protect\citeauthoryear{Asharov, Lindell, Schneider, and
  Zohner}{Asharov et~al\mbox{.}}{2013}]%
        {asharov2013more}
\bibfield{author}{\bibinfo{person}{Gilad Asharov}, \bibinfo{person}{Yehuda
  Lindell}, \bibinfo{person}{Thomas Schneider}, {and} \bibinfo{person}{Michael
  Zohner}.} \bibinfo{year}{2013}\natexlab{}.
\newblock \showarticletitle{More efficient oblivious transfer and extensions
  for faster secure computation}. In \bibinfo{booktitle}{\emph{Proceedings of
  the 2013 ACM SIGSAC conference on Computer \& communications security}}. ACM,
  \bibinfo{pages}{535--548}.
\newblock


\bibitem[\protect\citeauthoryear{Asharov and Orlandi}{Asharov and
  Orlandi}{2012}]%
        {DBLP:conf/asiacrypt/AsharovO12}
\bibfield{author}{\bibinfo{person}{Gilad Asharov} {and}
  \bibinfo{person}{Claudio Orlandi}.} \bibinfo{year}{2012}\natexlab{}.
\newblock \showarticletitle{Calling Out Cheaters: Covert Security with Public
  Verifiability}. In \bibinfo{booktitle}{\emph{{ASIACRYPT}}}
  \emph{(\bibinfo{series}{Lecture Notes in Computer Science},
  Vol.~\bibinfo{volume}{7658})}. \bibinfo{publisher}{Springer},
  \bibinfo{pages}{681--698}.
\newblock


\bibitem[\protect\citeauthoryear{Aumann and Lindell}{Aumann and
  Lindell}{2010}]%
        {DBLP:journals/joc/AumannL10}
\bibfield{author}{\bibinfo{person}{Yonatan Aumann} {and}
  \bibinfo{person}{Yehuda Lindell}.} \bibinfo{year}{2010}\natexlab{}.
\newblock \showarticletitle{Security Against Covert Adversaries: Efficient
  Protocols for Realistic Adversaries}.
\newblock \bibinfo{journal}{\emph{J. Cryptology}} \bibinfo{volume}{23},
  \bibinfo{number}{2} (\bibinfo{year}{2010}), \bibinfo{pages}{281--343}.
\newblock


\bibitem[\protect\citeauthoryear{Baum}{Baum}{2016}]%
        {baum2016garbling}
\bibfield{author}{\bibinfo{person}{Carsten Baum}.}
  \bibinfo{year}{2016}\natexlab{}.
\newblock \showarticletitle{On garbling schemes with and without privacy}. In
  \bibinfo{booktitle}{\emph{International Conference on Security and
  Cryptography for Networks}}. Springer, \bibinfo{pages}{468--485}.
\newblock


\bibitem[\protect\citeauthoryear{Ben-Efraim, Lindell, and Omri}{Ben-Efraim
  et~al\mbox{.}}{2017}]%
        {ben2017efficient}
\bibfield{author}{\bibinfo{person}{Aner Ben-Efraim}, \bibinfo{person}{Yehuda
  Lindell}, {and} \bibinfo{person}{Eran Omri}.}
  \bibinfo{year}{2017}\natexlab{}.
\newblock \showarticletitle{Efficient scalable constant-round MPC via garbled
  circuits}. In \bibinfo{booktitle}{\emph{International Conference on the
  Theory and Application of Cryptology and Information Security}}. Springer,
  \bibinfo{pages}{471--498}.
\newblock


\bibitem[\protect\citeauthoryear{Bertoni, Daemen, Peeters, and
  Van~Assche}{Bertoni et~al\mbox{.}}{2008}]%
        {bertoni2008indifferentiability}
\bibfield{author}{\bibinfo{person}{Guido Bertoni}, \bibinfo{person}{Joan
  Daemen}, \bibinfo{person}{Michael Peeters}, {and} \bibinfo{person}{Gilles
  Van~Assche}.} \bibinfo{year}{2008}\natexlab{}.
\newblock \showarticletitle{On the indifferentiability of the sponge
  construction}. In \bibinfo{booktitle}{\emph{Annual International Conference
  on the Theory and Applications of Cryptographic Techniques}}. Springer,
  \bibinfo{pages}{181--197}.
\newblock


\bibitem[\protect\citeauthoryear{Bojarski, Yeres, Choromanska, Choromanski,
  Firner, Jackel, and Muller}{Bojarski et~al\mbox{.}}{2017}]%
        {bojarski2017explaining}
\bibfield{author}{\bibinfo{person}{Mariusz Bojarski}, \bibinfo{person}{Philip
  Yeres}, \bibinfo{person}{Anna Choromanska}, \bibinfo{person}{Krzysztof
  Choromanski}, \bibinfo{person}{Bernhard Firner}, \bibinfo{person}{Lawrence
  Jackel}, {and} \bibinfo{person}{Urs Muller}.}
  \bibinfo{year}{2017}\natexlab{}.
\newblock \showarticletitle{Explaining how a deep neural network trained with
  end-to-end learning steers a car}.
\newblock \bibinfo{journal}{\emph{arXiv preprint arXiv:1704.07911}}
  (\bibinfo{year}{2017}).
\newblock


\bibitem[\protect\citeauthoryear{Canetti and Fischlin}{Canetti and
  Fischlin}{2001}]%
        {CF01}
\bibfield{author}{\bibinfo{person}{Ran Canetti} {and} \bibinfo{person}{Marc
  Fischlin}.} \bibinfo{year}{2001}\natexlab{}.
\newblock \showarticletitle{Universally Composable Commitments}. In
  \bibinfo{booktitle}{\emph{Advances in Cryptology --- CRYPTO 2001}},
  \bibfield{editor}{\bibinfo{person}{Joe Kilian}} (Ed.).
  \bibinfo{publisher}{Springer Berlin Heidelberg}, \bibinfo{address}{Berlin,
  Heidelberg}, \bibinfo{pages}{19--40}.
\newblock
\showISBNx{978-3-540-44647-7}


\bibitem[\protect\citeauthoryear{Celis, Huang, Keswani, and Vishnoi}{Celis
  et~al\mbox{.}}{2019}]%
        {celis2019classification}
\bibfield{author}{\bibinfo{person}{L~Elisa Celis}, \bibinfo{person}{Lingxiao
  Huang}, \bibinfo{person}{Vijay Keswani}, {and} \bibinfo{person}{Nisheeth~K
  Vishnoi}.} \bibinfo{year}{2019}\natexlab{}.
\newblock \showarticletitle{Classification with fairness constraints: A
  meta-algorithm with provable guarantees}. In
  \bibinfo{booktitle}{\emph{Proceedings of the Conference on Fairness,
  Accountability, and Transparency}}. \bibinfo{pages}{319--328}.
\newblock


\bibitem[\protect\citeauthoryear{Damg{\aa}rd, Geisler, and Nielsen}{Damg{\aa}rd
  et~al\mbox{.}}{2010}]%
        {DBLP:conf/tcc/DamgardGN10}
\bibfield{author}{\bibinfo{person}{Ivan Damg{\aa}rd}, \bibinfo{person}{Martin
  Geisler}, {and} \bibinfo{person}{Jesper~Buus Nielsen}.}
  \bibinfo{year}{2010}\natexlab{}.
\newblock \showarticletitle{From Passive to Covert Security at Low Cost}. In
  \bibinfo{booktitle}{\emph{{TCC}}} \emph{(\bibinfo{series}{Lecture Notes in
  Computer Science}, Vol.~\bibinfo{volume}{5978})}.
  \bibinfo{publisher}{Springer}, \bibinfo{pages}{128--145}.
\newblock


\bibitem[\protect\citeauthoryear{Davenport and Kalakota}{Davenport and
  Kalakota}{2019}]%
        {davenport2019potential}
\bibfield{author}{\bibinfo{person}{Thomas Davenport} {and}
  \bibinfo{person}{Ravi Kalakota}.} \bibinfo{year}{2019}\natexlab{}.
\newblock \showarticletitle{The potential for artificial intelligence in
  healthcare}.
\newblock \bibinfo{journal}{\emph{Future healthcare journal}}
  \bibinfo{volume}{6}, \bibinfo{number}{2} (\bibinfo{year}{2019}),
  \bibinfo{pages}{94}.
\newblock


\bibitem[\protect\citeauthoryear{Dinur, Liu, Meier, and Wang}{Dinur
  et~al\mbox{.}}{2015}]%
        {dinur2015optimized}
\bibfield{author}{\bibinfo{person}{Itai Dinur}, \bibinfo{person}{Yunwen Liu},
  \bibinfo{person}{Willi Meier}, {and} \bibinfo{person}{Qingju Wang}.}
  \bibinfo{year}{2015}\natexlab{}.
\newblock \showarticletitle{Optimized interpolation attacks on LowMC}. In
  \bibinfo{booktitle}{\emph{International Conference on the Theory and
  Application of Cryptology and Information Security}}. Springer,
  \bibinfo{pages}{535--560}.
\newblock


\bibitem[\protect\citeauthoryear{Frederiksen, Pinkas, and Yanai}{Frederiksen
  et~al\mbox{.}}{2018}]%
        {frederiksen2018committed}
\bibfield{author}{\bibinfo{person}{Tore~K Frederiksen}, \bibinfo{person}{Benny
  Pinkas}, {and} \bibinfo{person}{Avishay Yanai}.}
  \bibinfo{year}{2018}\natexlab{}.
\newblock \showarticletitle{Committed MPC}. In \bibinfo{booktitle}{\emph{IACR
  International Workshop on Public Key Cryptography}}. Springer,
  \bibinfo{pages}{587--619}.
\newblock


\bibitem[\protect\citeauthoryear{Gasc{\'o}n, Schoppmann, Balle, Raykova,
  Doerner, Zahur, and Evans}{Gasc{\'o}n et~al\mbox{.}}{2017}]%
        {petsregression}
\bibfield{author}{\bibinfo{person}{Adri{\`a} Gasc{\'o}n},
  \bibinfo{person}{Phillipp Schoppmann}, \bibinfo{person}{Borja Balle},
  \bibinfo{person}{Mariana Raykova}, \bibinfo{person}{Jack Doerner},
  \bibinfo{person}{Samee Zahur}, {and} \bibinfo{person}{David Evans}.}
  \bibinfo{year}{2017}\natexlab{}.
\newblock \showarticletitle{Privacy-preserving distributed linear regression on
  high-dimensional data}.
\newblock \bibinfo{journal}{\emph{Proceedings on Privacy Enhancing
  Technologies}} \bibinfo{volume}{2017}, \bibinfo{number}{4}
  (\bibinfo{year}{2017}), \bibinfo{pages}{345--364}.
\newblock
\urldef\tempurl%
\url{https://doi.org/10.1515/popets-2017-0053}
\showDOI{\tempurl}


\bibitem[\protect\citeauthoryear{Ghodsi, Gu, and Garg}{Ghodsi
  et~al\mbox{.}}{2017}]%
        {ghodsi2017safetynets}
\bibfield{author}{\bibinfo{person}{Zahra Ghodsi}, \bibinfo{person}{Tianyu Gu},
  {and} \bibinfo{person}{Siddharth Garg}.} \bibinfo{year}{2017}\natexlab{}.
\newblock \showarticletitle{Safetynets: Verifiable execution of deep neural
  networks on an untrusted cloud}. In \bibinfo{booktitle}{\emph{Advances in
  Neural Information Processing Systems}}. \bibinfo{pages}{4672--4681}.
\newblock


\bibitem[\protect\citeauthoryear{Gilad{-}Bachrach, Dowlin, Laine, Lauter,
  Naehrig, and Wernsing}{Gilad{-}Bachrach et~al\mbox{.}}{2016}]%
        {cryptonets}
\bibfield{author}{\bibinfo{person}{Ran Gilad{-}Bachrach},
  \bibinfo{person}{Nathan Dowlin}, \bibinfo{person}{Kim Laine},
  \bibinfo{person}{Kristin~E. Lauter}, \bibinfo{person}{Michael Naehrig}, {and}
  \bibinfo{person}{John Wernsing}.} \bibinfo{year}{2016}\natexlab{}.
\newblock \showarticletitle{CryptoNets: Applying Neural Networks to Encrypted
  Data with High Throughput and Accuracy}. In
  \bibinfo{booktitle}{\emph{{ICML}}} \emph{(\bibinfo{series}{{JMLR} Workshop
  and Conference Proceedings}, Vol.~\bibinfo{volume}{48})}.
  \bibinfo{publisher}{JMLR.org}, \bibinfo{pages}{201--210}.
\newblock


\bibitem[\protect\citeauthoryear{Goyal, Mohassel, and Smith}{Goyal
  et~al\mbox{.}}{2008}]%
        {DBLP:conf/eurocrypt/GoyalMS08}
\bibfield{author}{\bibinfo{person}{Vipul Goyal}, \bibinfo{person}{Payman
  Mohassel}, {and} \bibinfo{person}{Adam~D. Smith}.}
  \bibinfo{year}{2008}\natexlab{}.
\newblock \showarticletitle{Efficient Two Party and Multi Party Computation
  Against Covert Adversaries}. In \bibinfo{booktitle}{\emph{{EUROCRYPT}}}
  \emph{(\bibinfo{series}{Lecture Notes in Computer Science},
  Vol.~\bibinfo{volume}{4965})}. \bibinfo{publisher}{Springer},
  \bibinfo{pages}{289--306}.
\newblock


\bibitem[\protect\citeauthoryear{He, Zhang, and Lee}{He et~al\mbox{.}}{2018}]%
        {he2018verideep}
\bibfield{author}{\bibinfo{person}{Zecheng He}, \bibinfo{person}{Tianwei
  Zhang}, {and} \bibinfo{person}{Ruby~B Lee}.} \bibinfo{year}{2018}\natexlab{}.
\newblock \showarticletitle{VerIDeep: Verifying Integrity of Deep Neural
  Networks through Sensitive-Sample Fingerprinting}.
\newblock \bibinfo{journal}{\emph{arXiv preprint arXiv:1808.03277}}
  (\bibinfo{year}{2018}).
\newblock


\bibitem[\protect\citeauthoryear{Hong, Katz, Kolesnikov, Lu, and Wang}{Hong
  et~al\mbox{.}}{2019}]%
        {hong2019covert}
\bibfield{author}{\bibinfo{person}{Cheng Hong}, \bibinfo{person}{Jonathan
  Katz}, \bibinfo{person}{Vladimir Kolesnikov}, \bibinfo{person}{Wen-jie Lu},
  {and} \bibinfo{person}{Xiao Wang}.} \bibinfo{year}{2019}\natexlab{}.
\newblock \showarticletitle{Covert Security with Public Verifiability: Faster,
  Leaner, and Simpler}. In \bibinfo{booktitle}{\emph{Annual International
  Conference on the Theory and Applications of Cryptographic Techniques}}.
  Springer, \bibinfo{pages}{97--121}.
\newblock


\bibitem[\protect\citeauthoryear{Ishai, Kushilevitz, and Ostrovsky}{Ishai
  et~al\mbox{.}}{2005}]%
        {collisionresistance}
\bibfield{author}{\bibinfo{person}{Yuval Ishai}, \bibinfo{person}{Eyal
  Kushilevitz}, {and} \bibinfo{person}{Rafail Ostrovsky}.}
  \bibinfo{year}{2005}\natexlab{}.
\newblock \showarticletitle{Sufficient Conditions for Collision-Resistant
  Hashing}. In \bibinfo{booktitle}{\emph{Theory of Cryptography}},
  \bibfield{editor}{\bibinfo{person}{Joe Kilian}} (Ed.).
  \bibinfo{publisher}{Springer Berlin Heidelberg}, \bibinfo{address}{Berlin,
  Heidelberg}, \bibinfo{pages}{445--456}.
\newblock
\showISBNx{978-3-540-30576-7}


\bibitem[\protect\citeauthoryear{jie Lu~Xiao~Wang}{jie Lu~Xiao~Wang}{2019}]%
        {pvc}
\bibfield{author}{\bibinfo{person}{Cheng Hong Jonathan Katz Vladimir
  Kolesnikov~Wen jie Lu~Xiao~Wang}.} \bibinfo{year}{2019}\natexlab{}.
\newblock \showarticletitle{Covert Security with Public Verifiability: Faster,
  Leaner, and Simpler}. In \bibinfo{booktitle}{\emph{EuroCrypt}}.
\newblock


\bibitem[\protect\citeauthoryear{Katz, Malozemoff, and Wang}{Katz
  et~al\mbox{.}}{[n.\,d.]}]%
        {katz2016efficiently}
\bibfield{author}{\bibinfo{person}{Jonathan Katz}, \bibinfo{person}{Alex~J
  Malozemoff}, {and} \bibinfo{person}{Xiao Wang}.}
  \bibinfo{year}{[n.\,d.]}\natexlab{}.
\newblock \showarticletitle{Efficiently Enforcing Input Validity in Secure
  Two-party Computation.}
\newblock  (\bibinfo{year}{[n.\,d.]}).
\newblock


\bibitem[\protect\citeauthoryear{Kilbertus, Gascon, Kusner, Veale, Gummadi, and
  Weller}{Kilbertus et~al\mbox{.}}{2018}]%
        {kilbertus2018blind}
\bibfield{author}{\bibinfo{person}{Niki Kilbertus}, \bibinfo{person}{Adria
  Gascon}, \bibinfo{person}{Matt Kusner}, \bibinfo{person}{Michael Veale},
  \bibinfo{person}{Krishna~P Gummadi}, {and} \bibinfo{person}{Adrian Weller}.}
  \bibinfo{year}{2018}\natexlab{}.
\newblock \showarticletitle{Blind Justice: Fairness with Encrypted Sensitive
  Attributes}. In \bibinfo{booktitle}{\emph{International Conference on Machine
  Learning}}. \bibinfo{pages}{2635--2644}.
\newblock


\bibitem[\protect\citeauthoryear{Lindell}{Lindell}{2013}]%
        {DBLP:conf/crypto/Lindell13}
\bibfield{author}{\bibinfo{person}{Yehuda Lindell}.}
  \bibinfo{year}{2013}\natexlab{}.
\newblock \showarticletitle{Fast Cut-and-Choose Based Protocols for Malicious
  and Covert Adversaries}. In \bibinfo{booktitle}{\emph{{CRYPTO} {(2)}}}
  \emph{(\bibinfo{series}{Lecture Notes in Computer Science},
  Vol.~\bibinfo{volume}{8043})}. \bibinfo{publisher}{Springer},
  \bibinfo{pages}{1--17}.
\newblock


\bibitem[\protect\citeauthoryear{Lindell}{Lindell}{2020}]%
        {lindell2020secure}
\bibfield{author}{\bibinfo{person}{Yehuda Lindell}.}
  \bibinfo{year}{2020}\natexlab{}.
\newblock \showarticletitle{Secure Multiparty Computation (MPC).}
\newblock \bibinfo{journal}{\emph{IACR Cryptol. ePrint Arch.}}
  \bibinfo{volume}{2020} (\bibinfo{year}{2020}), \bibinfo{pages}{300}.
\newblock


\bibitem[\protect\citeauthoryear{Lipton}{Lipton}{2018}]%
        {lipton2018mythos}
\bibfield{author}{\bibinfo{person}{Zachary~C Lipton}.}
  \bibinfo{year}{2018}\natexlab{}.
\newblock \showarticletitle{The Mythos of Model Interpretability: In machine
  learning, the concept of interpretability is both important and slippery.}
\newblock \bibinfo{journal}{\emph{Queue}} \bibinfo{volume}{16},
  \bibinfo{number}{3} (\bibinfo{year}{2018}), \bibinfo{pages}{31--57}.
\newblock


\bibitem[\protect\citeauthoryear{Madras, Creager, Pitassi, and Zemel}{Madras
  et~al\mbox{.}}{2018}]%
        {madras2018learning}
\bibfield{author}{\bibinfo{person}{David Madras}, \bibinfo{person}{Elliot
  Creager}, \bibinfo{person}{Toniann Pitassi}, {and} \bibinfo{person}{Richard
  Zemel}.} \bibinfo{year}{2018}\natexlab{}.
\newblock \showarticletitle{Learning adversarially fair and transferable
  representations}.
\newblock \bibinfo{journal}{\emph{arXiv preprint arXiv:1802.06309}}
  (\bibinfo{year}{2018}).
\newblock


\bibitem[\protect\citeauthoryear{Mohassel and Rindal}{Mohassel and
  Rindal}{2018}]%
        {mohassel2018aby}
\bibfield{author}{\bibinfo{person}{Payman Mohassel} {and}
  \bibinfo{person}{Peter Rindal}.} \bibinfo{year}{2018}\natexlab{}.
\newblock \showarticletitle{ABY 3: a mixed protocol framework for machine
  learning}. In \bibinfo{booktitle}{\emph{Proceedings of the 2018 ACM
  Conference on Computer and Communications Security}}. ACM,
  \bibinfo{pages}{35--52}.
\newblock


\bibitem[\protect\citeauthoryear{Mohassel and Zhang}{Mohassel and
  Zhang}{2017}]%
        {mohassel2017secureml}
\bibfield{author}{\bibinfo{person}{Payman Mohassel} {and}
  \bibinfo{person}{Yupeng Zhang}.} \bibinfo{year}{2017}\natexlab{}.
\newblock \showarticletitle{SecureML: A system for scalable privacy-preserving
  machine learning}. In \bibinfo{booktitle}{\emph{2017 38th IEEE Symposium on
  Security and Privacy}}. IEEE, \bibinfo{pages}{19--38}.
\newblock


\bibitem[\protect\citeauthoryear{Mohseni, Pitale, Singh, and Wang}{Mohseni
  et~al\mbox{.}}{2019}]%
        {mohseni2019practical}
\bibfield{author}{\bibinfo{person}{Sina Mohseni}, \bibinfo{person}{Mandar
  Pitale}, \bibinfo{person}{Vasu Singh}, {and} \bibinfo{person}{Zhangyang
  Wang}.} \bibinfo{year}{2019}\natexlab{}.
\newblock \showarticletitle{Practical solutions for machine learning safety in
  autonomous vehicles}.
\newblock \bibinfo{journal}{\emph{arXiv preprint arXiv:1912.09630}}
  (\bibinfo{year}{2019}).
\newblock


\bibitem[\protect\citeauthoryear{Naor, Pinkas, and Pinkas}{Naor
  et~al\mbox{.}}{2001}]%
        {naor2001efficient}
\bibfield{author}{\bibinfo{person}{Moni Naor}, \bibinfo{person}{Benny Pinkas},
  {and} \bibinfo{person}{Benny Pinkas}.} \bibinfo{year}{2001}\natexlab{}.
\newblock \showarticletitle{Efficient oblivious transfer protocols}. In
  \bibinfo{booktitle}{\emph{Proceedings of the twelfth annual ACM-SIAM
  symposium on Discrete algorithms}}. Society for Industrial and Applied
  Mathematics, \bibinfo{pages}{448--457}.
\newblock


\bibitem[\protect\citeauthoryear{Nikolaenko, Weinsberg, Ioannidis, Joye, Boneh,
  and Taft}{Nikolaenko et~al\mbox{.}}{2013}]%
        {nikolaenko_privacy-preserving_2013}
\bibfield{author}{\bibinfo{person}{Valeria Nikolaenko}, \bibinfo{person}{Udi
  Weinsberg}, \bibinfo{person}{Stratis Ioannidis}, \bibinfo{person}{Marc Joye},
  \bibinfo{person}{Dan Boneh}, {and} \bibinfo{person}{Nina Taft}.}
  \bibinfo{year}{2013}\natexlab{}.
\newblock \showarticletitle{Privacy-preserving ridge regression on hundreds of
  millions of records}. In \bibinfo{booktitle}{\emph{2013 IEEE Symposium on
  Security and Privacy}}. IEEE, \bibinfo{pages}{334--348}.
\newblock


\bibitem[\protect\citeauthoryear{Obermeyer, Powers, Vogeli, and
  Mullainathan}{Obermeyer et~al\mbox{.}}{2019}]%
        {obermeyer2019dissecting}
\bibfield{author}{\bibinfo{person}{Ziad Obermeyer}, \bibinfo{person}{Brian
  Powers}, \bibinfo{person}{Christine Vogeli}, {and} \bibinfo{person}{Sendhil
  Mullainathan}.} \bibinfo{year}{2019}\natexlab{}.
\newblock \showarticletitle{Dissecting racial bias in an algorithm used to
  manage the health of populations}.
\newblock \bibinfo{journal}{\emph{Science}} \bibinfo{volume}{366},
  \bibinfo{number}{6464} (\bibinfo{year}{2019}), \bibinfo{pages}{447--453}.
\newblock


\bibitem[\protect\citeauthoryear{Pietrzak}{Pietrzak}{2012}]%
        {LPN}
\bibfield{author}{\bibinfo{person}{Krzysztof Pietrzak}.}
  \bibinfo{year}{2012}\natexlab{}.
\newblock \showarticletitle{Cryptography from Learning Parity with Noise}.
  \bibinfo{pages}{99--114}.
\newblock
\showISBNx{978-3-642-27659-0}
\urldef\tempurl%
\url{https://doi.org/10.1007/978-3-642-27660-6_9}
\showDOI{\tempurl}


\bibitem[\protect\citeauthoryear{Putra, Nurhidayat, and Wicaksono}{Putra
  et~al\mbox{.}}{2018}]%
        {putra2018implementation}
\bibfield{author}{\bibinfo{person}{RE Putra}, \bibinfo{person}{AI Nurhidayat},
  {and} \bibinfo{person}{AY Wicaksono}.} \bibinfo{year}{2018}\natexlab{}.
\newblock \showarticletitle{Implementation of Neural Network to determine the
  New College Students}. In \bibinfo{booktitle}{\emph{IOP Conference Series:
  Materials Science and Engineering}}, Vol.~\bibinfo{volume}{288}. IOP
  Publishing, \bibinfo{pages}{012121}.
\newblock


\bibitem[\protect\citeauthoryear{Ribeiro, Singh, and Guestrin}{Ribeiro
  et~al\mbox{.}}{2016}]%
        {ribeiro2016should}
\bibfield{author}{\bibinfo{person}{Marco~Tulio Ribeiro},
  \bibinfo{person}{Sameer Singh}, {and} \bibinfo{person}{Carlos Guestrin}.}
  \bibinfo{year}{2016}\natexlab{}.
\newblock \showarticletitle{" Why should i trust you?" Explaining the
  predictions of any classifier}. In \bibinfo{booktitle}{\emph{Proceedings of
  the 22nd ACM SIGKDD international conference on knowledge discovery and data
  mining}}. \bibinfo{pages}{1135--1144}.
\newblock


\bibitem[\protect\citeauthoryear{Rothaus}{Rothaus}{1976}]%
        {ROTHAUS1976bent}
\bibfield{author}{\bibinfo{person}{O.S Rothaus}.}
  \bibinfo{year}{1976}\natexlab{}.
\newblock \showarticletitle{On “bent” functions}.
\newblock \bibinfo{journal}{\emph{Journal of Combinatorial Theory, Series A}}
  \bibinfo{volume}{20}, \bibinfo{number}{3} (\bibinfo{year}{1976}),
  \bibinfo{pages}{300--305}.
\newblock
\showISSN{0097-3165}
\urldef\tempurl%
\url{https://doi.org/10.1016/0097-3165(76)90024-8}
\showDOI{\tempurl}


\bibitem[\protect\citeauthoryear{Sanyal, Kusner, Gasc{\'{o}}n, and
  Kanade}{Sanyal et~al\mbox{.}}{2018}]%
        {tapas}
\bibfield{author}{\bibinfo{person}{Amartya Sanyal}, \bibinfo{person}{Matt~J.
  Kusner}, \bibinfo{person}{Adri{\`{a}} Gasc{\'{o}}n}, {and}
  \bibinfo{person}{Varun Kanade}.} \bibinfo{year}{2018}\natexlab{}.
\newblock \showarticletitle{{TAPAS:} Tricks to Accelerate (encrypted)
  Prediction As a Service}. In \bibinfo{booktitle}{\emph{International
  Conference on Machine Learning}}. \bibinfo{pages}{4497--4506}.
\newblock


\bibitem[\protect\citeauthoryear{Segal, Adi, Pinkas, Baum, Ganesh, and
  Keshet}{Segal et~al\mbox{.}}{2020}]%
        {segal2020fairness}
\bibfield{author}{\bibinfo{person}{Shahar Segal}, \bibinfo{person}{Yossi Adi},
  \bibinfo{person}{Benny Pinkas}, \bibinfo{person}{Carsten Baum},
  \bibinfo{person}{Chaya Ganesh}, {and} \bibinfo{person}{Joseph Keshet}.}
  \bibinfo{year}{2020}\natexlab{}.
\newblock \showarticletitle{Fairness in the Eyes of the Data: Certifying
  Machine-Learning Models}.
\newblock \bibinfo{journal}{\emph{arXiv preprint arXiv:2009.01534}}
  (\bibinfo{year}{2020}).
\newblock


\bibitem[\protect\citeauthoryear{Wagh, Gupta, and Chandran}{Wagh
  et~al\mbox{.}}{2019}]%
        {wagh2019securenn}
\bibfield{author}{\bibinfo{person}{Sameer Wagh}, \bibinfo{person}{Divya Gupta},
  {and} \bibinfo{person}{Nishanth Chandran}.} \bibinfo{year}{2019}\natexlab{}.
\newblock \showarticletitle{SecureNN: 3-Party Secure Computation for Neural
  Network Training}.
\newblock \bibinfo{journal}{\emph{Proceedings on Privacy Enhancing
  Technologies}}  \bibinfo{volume}{1} (\bibinfo{year}{2019}),
  \bibinfo{pages}{24}.
\newblock


\bibitem[\protect\citeauthoryear{Wang, Malozemoff, and Katz}{Wang
  et~al\mbox{.}}{2016}]%
        {emp-toolkit}
\bibfield{author}{\bibinfo{person}{Xiao Wang}, \bibinfo{person}{Alex~J.
  Malozemoff}, {and} \bibinfo{person}{Jonathan Katz}.}
  \bibinfo{year}{2016}\natexlab{}.
\newblock \bibinfo{title}{{EMP-toolkit: Efficient MultiParty computation
  toolkit}}.
\newblock \bibinfo{howpublished}{\url{https://github.com/emp-toolkit}}.
\newblock


\bibitem[\protect\citeauthoryear{Wang, Ranellucci, and Katz}{Wang
  et~al\mbox{.}}{2017a}]%
        {wang2017authenticated}
\bibfield{author}{\bibinfo{person}{Xiao Wang}, \bibinfo{person}{Samuel
  Ranellucci}, {and} \bibinfo{person}{Jonathan Katz}.}
  \bibinfo{year}{2017}\natexlab{a}.
\newblock \showarticletitle{Authenticated garbling and efficient maliciously
  secure two-party computation}. In \bibinfo{booktitle}{\emph{Proceedings of
  the 2017 ACM SIGSAC Conference on Computer and Communications Security}}.
  ACM, \bibinfo{pages}{21--37}.
\newblock


\bibitem[\protect\citeauthoryear{Wang, Ranellucci, and Katz}{Wang
  et~al\mbox{.}}{2017b}]%
        {wang2017global}
\bibfield{author}{\bibinfo{person}{Xiao Wang}, \bibinfo{person}{Samuel
  Ranellucci}, {and} \bibinfo{person}{Jonathan Katz}.}
  \bibinfo{year}{2017}\natexlab{b}.
\newblock \showarticletitle{Global-scale secure multiparty computation}. In
  \bibinfo{booktitle}{\emph{Proceedings of the 2017 ACM SIGSAC Conference on
  Computer and Communications Security}}. ACM, \bibinfo{pages}{39--56}.
\newblock


\bibitem[\protect\citeauthoryear{Zhang, Lemoine, and Mitchell}{Zhang
  et~al\mbox{.}}{2018}]%
        {zhang2018mitigating}
\bibfield{author}{\bibinfo{person}{Brian~Hu Zhang}, \bibinfo{person}{Blake
  Lemoine}, {and} \bibinfo{person}{Margaret Mitchell}.}
  \bibinfo{year}{2018}\natexlab{}.
\newblock \showarticletitle{Mitigating unwanted biases with adversarial
  learning}. In \bibinfo{booktitle}{\emph{Proceedings of the 2018 AAAI/ACM
  Conference on AI, Ethics, and Society}}. ACM, \bibinfo{pages}{335--340}.
\newblock


\end{thebibliography}

\appendix

\section{Proofs from Section~4}\label{app:coverthashes}
\hidingtheorem*
\begin{proof}
Let $r\leftarrow \mathcal{R}$. Suppose that a polynomial time algorithm $A$ is given input $k$, $(H_k(i,r,x))_{i\in \mathcal{I}}$. For fixed $k$, the $H_k(i,r,x)$ are independent uniform random variables irrespective of the value of $x$ or $r$, so without querying the oracle the adversary can learn nothing about $x$ or $r$.

When the adversary requests the value of the random oracle on an input beginning with $r'\in S$ suppose it is also told whether or not $r'=r$.

When the adversary queries with $r'\neq r$ it learns nothing about $r$ except that $r\neq r'$. Thus the probability of using the right salt on the $j$th query is at most $1/(|\mathcal{R}|-j+1)$ and so the probability of querying the correct $r$ with $a$ guesses is at most $a/|\mathcal{R}|$. As the adversary has time for only polynomially many queries and $|\mathcal{R}|$ grows exponentially in $\lambda$ it will query with $r$ as the randomness with negligible probability.

Conditioned on $A$ never querying the correct randomness, its view is independent of $x$ and thus so is the probability of it outputting $1$.
\end{proof}

\section{Communication for $\assert$.}\label{app:commassert}
Table~\ref{tab:comm} shows the amount of communication needed for executing the $\assert$ functionality using SHA3-256 and our scheme. We observe that our scheme requires $36\times$ less communication for the committer and the verifier than the SHA3-256 based baseline.

\begin{table}
\centering
\begin{tabular}{@{}cccc@{}}
\toprule
\multicolumn{1}{l}{No. of bits} & \multicolumn{1}{l}{Ours (MB)} & \multicolumn{1}{l}{SHA3-256 (MB)} & \multicolumn{1}{l}{Improvement} \\ \midrule

$2^{14}$                        & 2.51                          & 19.93                          & $8\times$                       \\

$2^{18}$                        & 10.90                         & 300.34                         & $28\times$                      \\

$2^{22}$                        & 141.25                        & 4805.39                        & $34\times$                      \\

$2^{26}$                        & 2169.02                       & 76900*                          & $35\times$                      \\
$2^{30}$                        & 34022.77                      & 1230200*                        & $36\times$                               \\ \bottomrule
\end{tabular}
\caption{Comparison of communication for SHA3-256 baseline and our scheme for executing the $\assert$ functionality. Here $p_c = 1/2$. * means estimated via extrapolation}
\label{tab:comm}
\end{table}

\section{Number of Indices $|\mathcal{I}|$}\label{app:numind}
\begin{figure}[!htbp]
	\centering
		\includegraphics[scale=0.5]{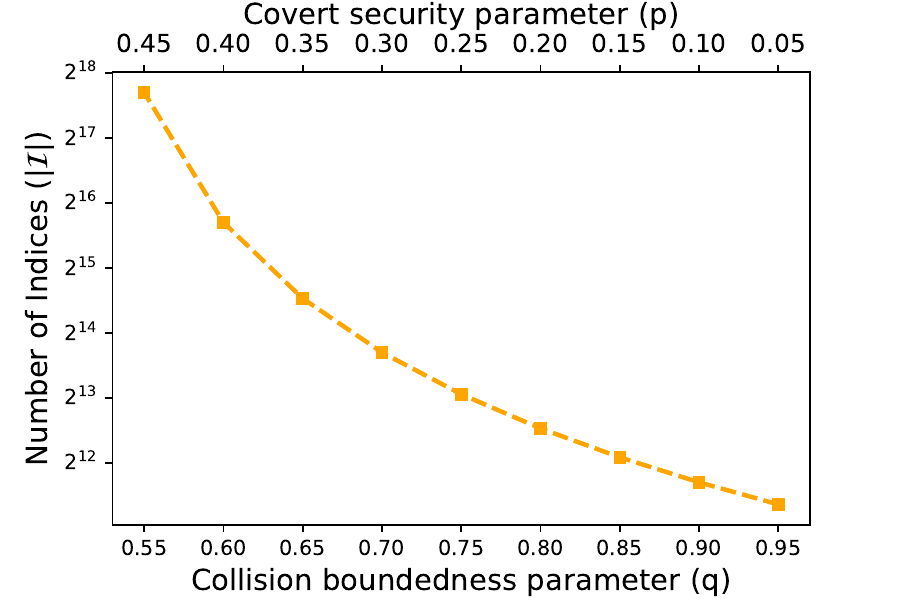}
		\caption{Number of indices (hashes) $|\mathcal{I}|$ needed to be computed by the committer as a function of $q$ (and the covert security parameter $p$ i.e. $1-q$). Here block size $b = 1024$ and statistical security parameter $\sigma = 40$.}
		\label{fig:indices}
\end{figure}

We use the formulation, upon ceiling to the next nearest integer, defined in Theorem~\ref{thm:construction3} to compute $|\mathcal{I}|$. In Figure~\ref{fig:indices}, we plot this formulation for $\sigma = 40$, $b = 1024$ and different values of q (and the covert security parameter $p$ i.e $1-q$) to show how the number times $|\mathcal{I}|$ that the committer needs to compute $H$ varies with the security parameters.

\section{Certified Predictions}\label{app:fairness}

\begin{table}[t!]
\centering
\resizebox{\columnwidth}{!}{
\begin{tabular}{lccc|ccc}
\hline
        & \multicolumn{3}{c|}{Accuracy}  & \multicolumn{3}{c}{Average Odds Difference} \\ \hline
Dataset & Unfair   & Fair     & Changed  & Unfair        & Fair         & Changed      \\ \hline
Credit  & $\mathrm{69.3\%}$ & $62.7\%$ & $64.3\%$ & $-0.341$      & $0.359$      & $0.122$      \\
COMPAS  & $67.6\%$ & $55.9\%$ & $65.0\%$ & $-0.181$      & $0.369$      & $0.073$      \\
Adult   & $80.4\%$ & $75.8\%$ & $78.8\%$ & $-0.270$      & $0.261$      & $0.111$      \\ \hline
\end{tabular}
}
\caption{Accuracies of a fair prediction method \cite{zhang2018mitigating} (Fair), the same model changed by a single weight to maximize accuracy (Changed), compared to the model trained without any constraints (Unfair).}
\label{table.motivation}
\vspace{-4ex}
\end{table}

In this section we describe how PVC committed MPC enables a key application, \emph{certified predictions}: obtaining secure predictions by a private model that is certified to have certain properties (more on such properties below). We show by means of a real-world example
how heuristic approaches that are sublinear in the input size fail. We do this by training a fair machine learning model and showing how,
by modifying a single parameter of the model, it can be made unfair and more accurate. We describe previous work on the problem of obtaining predictions
by a certified private model, and discuss 
an efficient solution enabled by our results.

Recently, ML models have started to be deployed into high-impact, real-world decision-making settings such as medicine \cite{davenport2019potential}, self-driving cars \cite{bojarski2017explaining}, and college admissions \cite{putra2018implementation}. However, this has led to problems: many of these settings have key constraints that ML models were not originally designed to handle. Current models lack \emph{interpretability} \cite{lipton2018mythos}, \emph{safety} \cite{amodei2016concrete}, and \emph{fairness} \cite{obermeyer2019dissecting}. To address this, there has been a wealth of recent work aimed at formalizing these constraints and creating ML models that satisfy them \cite{ribeiro2016should, mohseni2019practical, madras2018learning, celis2019classification, zhang2018mitigating}. 

However, models that satisfy these constraints often have reduced accuracy as the constraints restrict the model's predictions in accuracy-agnostic ways. As model accuracy is often directly tied to beneficial outcomes (e.g., monetary investment, company profit, likelihood of publication), real-world constraints can incentivise service providers to cheat. 
To prevent this a natural question arises: \emph{What is the minimal computation required to ensure cheating does not occur?} One may be tempted to try to construct a procedure that is sublinear in the size of the model. Recent work has proposed to generate a small series of tests to identify small changes to a model \cite{he2018verideep}. However, we show with a simple example that any protocol must ensure that nothing about the model is changed, requiring a linear time procedure.

\textbf{Any change may sacrifice fairness.}
We investigate a popular real-world constraint placed on ML models: \emph{fairness constraints}. In general, the most popular formulation of fairness constraints minimizes the difference between (functions of) predictions made on different demographic groups. Because these techniques constrain predictions across groups, their accuracy is less than unconstrained models. We investigate a popular fair prediction model \cite{zhang2018mitigating} applied to three fair prediction problems: judging credit risk (Credit\footnote{https://tinyurl.com/cm-credit}); predicting parole violators (COMPAS\footnote{https://tinyurl.com/cm-compas}); inferring income (Adult\footnote{https://tinyurl.com/cm-census}). We consider the following \emph{average odds difference} fairness criterion
\begin{align}
\Big( \mathbb{E}[\hat{Y} \mid A\!=\!0, Y\!=\!y] - \mathbb{E}[\hat{Y} \mid A\!=\!1, Y\!=\!y] \Big) \geq \tau, \;\;\; \forall y \in \{0,1\}, \nonumber
\end{align}
where $Y$ is the true outcome (e.g., $Y\!=\!1$ signifies good credit in Credit, while $Y\!=\!0$ signifies bad credit) and $\hat{Y}$ is the prediction. Here $A$ indicates demographic group (e.g., race, gender, sexual orientation, among others). Specifically $A\!=0\!$ indicates a \emph{disadvantaged group} and $A\!=1\!$ indicates a \emph{privileged group}. Thus the above constraint says that the average outcome for the disadvantaged group has to be at least $\tau$-larger than the average outcome for the advantaged group. This is to combat predictors $\hat{Y}$ that benefit the privileged group (such predictors will arise from unconstrained training). These expectations are computed over a training dataset. Table~\ref{table.motivation} shows the accuracy and average odds difference of the model in \citet{zhang2018mitigating} using the fairness constraint (Fair), compared to the model without the fairness constraint (Unfair).

Now we imagine that a cheating service provider wants to take the fair model and only change a single element of the model to maximize accuracy. We imagine they test every single element, optimizing for accuracy alone, while fixing the remaining parameters. They then take the model which has the maximum improvement in accuracy across all single-parameter-changed models (Changed). We report the accuracy and fairness of this model in Table~\ref{table.motivation}. 

These results show that changing just a single element can significantly improve the accuracy over the fair model (by as much as $9.1\%$ on COMPAS). Further, the changed model has significantly lower average odds difference than the fair model, unfairly benefiting the privileged group at the expense of the disadvantaged group. Thus, to ensure a service provider cannot surreptitiously improve accuracy at the expense of real-world constraints, a protocol must ensure that the entire model remains the same.

\textbf{Related work.}
To prevent this a number of works have proposed techniques to verify ML models \cite{ghodsi2017safetynets,he2018verideep,kilbertus2018blind,segal2020fairness}. SafetyNets \cite{ghodsi2017safetynets} propose an interactive proof protocol for verifying deep neural network predictions. This protocol only has a verification guarantee and leaves a security guarantee to future work. Further it is limited to models expressible as arithmetic circuits. VerIDeep \cite{he2018verideep} describe a method to generate inputs for which small changes to the ML model would yield very different outputs. However, this model does not guarantee that the entire model remains the same and thus would be vulnerable to attacks similar to that described above.

Recent work with security guarantees \cite{kilbertus2018blind,segal2020fairness} propose to use hash functions (SHA-256, SHA-3 in sponge mode) to verify a model has not been altered. Specifically these works generate and verify a hash within MPC. In MPC the protocol cost is dominated by AND gate computations and the most efficient method requires asymptotically 35 AND gates per input bit \cite{segal2020fairness}. While there exist an MPC-optimized hash called LowMCHash-256 \cite{albrecht2015ciphers} it is new and susceptible \cite{dinur2015optimized}. Our constructions above enable secure predictions with verified inputs that asymptotically require 0.5 AND gates per input bit and derives security from the well-known random oracle assumption.

\textbf{Our approach.}
To enable certified predictions we propose the following procedure. First the service provider (committer, P1) makes a commitment $c = \pvccommit(x,\sk,r)$ to a model $x$. P1 then engages in an MPC protocol with a regulatory agency (P2') where P2' verifies the model $x$ satisfies the required guarantee (e.g., fairness), and that $c$ is a commitment to that model. If these checks pass then P2' signs the commitment $c$ with their private key and sends it to P1. When a user (verifier, P2) wishes to obtain a certified prediction from P1, they engage in a PVC commitment. Here P1 sends $c$ to P2. If (a) P2 can verify that $c$ is signed by the regulatory agency P2' (e.g., this could be done if regulator's public key is publicly available) and (b) the PVC commitment is verified (via $\assert$ and $\checkfun$ as described in Figure~\ref{fig:basic_diagram}), then the output is a certified prediction.

\section{Proofs and Definitions of Section~5 \\ (Integrity Checking)}\label{app:ic}
\subsection{Proof that we have constructed a PVC \\ commitment scheme}\label{app:proofcommitmentscheme}

\secpvccom*
\begin{proof}
To show that the functions $\pvccommit$, $\assert$ and $\checkfun$ for a PVC commitment scheme with parameter $p$, we must check that $\checkfun$ is deterministic and that the four properties hold.

As the function $\checkfun$ is given by a decision tree depending on checking whether (deterministic) parts of the input are equal and whether signatures are valid (which is deterministic by the assumption on the verification function) it is deterministic.

\paragraph{Correctness}
Given $i,r,x$ and a valid key pair $\sk,\pk$, let $c=\pvccommit(x,\sk,r)$ and $a=\assert(x,\sk,r;i,\pk)$. Consider the definition of $\checkfun(c,a,\pk)$. As the key pair is valid both of the signatures will check out thus the result is not $\inconclusive$. Furthermore, both $c[a[0]]$ and $a[1]$ are equal to $H(i,r,x)$, thus the check will return $\valid$.

\paragraph{General Binding}
Suppose, these functions do not satisfy general binding. Then there exists a polynomial time adversary, $\adv$, contradicting Definition~\ref{def:generalbinding}. Let $x,\sk,r,x',\sk',r',\pk$ and $c$ be the output of this adversary. Further, let $i,a,a',\out$ and $\out'$ be as in the definition. As the signature scheme is discrimination resistant the distribution of $i$ conditioned on $G$ (and thus on $\out$ or $\out'$ being $\inconclusive$) is still uniform. It follows that in order for Inequality~\ref{eqn:generalbinding} to hold we must have
\begin{equation*}
\PP(H(i,r,x)\neq c[i])+\PP(H(i,r',x')\neq c[i])<p.
\end{equation*}
Thus for greater than a $1-p$ fraction of the choices of $i$ we must have $H(i,r,x)=c[i]=H(i,r',x')$. This would mean that $r,x,r',x'$ is a $q'$-collision for some $q'>q$. The above process then gives a polynomial time algorithm contradicting the $q$-collision boundedness of $H$. So the general binding property must hold.

\paragraph{Hiding}
Suppose $\adv$ is a polynomial time adversary contradicting Definition~\ref{def:schemehiding}. Consider the polynomial time algorithm that takes as input $(H(i,r,x))_{i\in\mathcal{I}}$, computes $\adv(\mathcal{O}_\sk(x))$ (using a hard-coded $\sk$) and outputs the result. This adversary contradicts the hiding property of $H$ (Definition~\ref{def:hiding}).

\paragraph{Defamation Freeness}
Let $\adv$ be a polynomial time adversary. In order to have $\checkfun(c,a,\pk)=\cheated$ both $c$ and $a$ must be correctly signed. As the signature scheme is chosen-plaintext secure $\adv$ can only achieve this with non-negligible probability by using the contents of $\mathcal{O}_\sk(x)$ as $c$ and $a$ (they ca not even be switched as they have different formats). But with that choice of $c$ and $a$, $c[a[0]]=a[1]$, and thus the check would return $\valid$. Therefore the functions are defamation free.
\end{proof}

\subsection{Execution in the ideal world}\label{sec:ideal}

Next, we present in detail the ideal world execution
of a function $g(x,y)$. The ideal world is parameterized by the party corrupted by
adversary $\adv$,
which we denote by $\corrupted\in \{\A, \B, \bot\}$ ($\bot$ is just a value different from $\A$ and $\B$ to represent that all parties are honest) and, as mentioned above, two
probabilities $p_{\texttt{exec}}, p_{\texttt{commit}}$.
Let us remark that $\adv$ has an auxiliary input, and that all parties are initialized with the same value
on their security parameter tape (including the trusted party), but we leave both of these aspects implicit for clarity.

An unusual aspect of this ideal world is the presence of an ``observation of the environment''
which happens after the commitment has been made but before the computation.
The idea being that a party has committed to an input
if they are unable to make it depend on something they learnt between commitment and computation.
We assume that this observation is drawn from some distribution $\mathcal{E}$
and that the distribution can be sampled from by a polynomial time algorithm.
This latter assumption is to stop the environment from encoding, say, collisions of zero for a secure hash function.

\paragraph{1. $\A$ receives input.}
Party $\A$ receives its prescribed input $x$.
\paragraph{2. Commitment of $\A$'s input.}
At this stage $\A$ sends to the 
trusted party the input that it intends 
to use in a subsequent computation, 
which we denote by $w$.
If $\corrupted \neq \A$, then
$w = x$, and otherwise
$\adv$ sets $w$ to be an arbitrary
valid input value in a way that might depend on $x$.

\paragraph{3. The environment is revealed}
The value $e$ is sampled from the distribution $\mathcal{E}$
Party $\A$ is given the value of $e$

\paragraph{4. $\B$ receives input and parties send inputs.}
Party $\B$ receives its prescribed input $y$.
Next, $\A$ and $\B$
send to the trusted party their inputs to be used in the computation,
denoted $a,b$, respectively.
\begin{itemize}
\item If $\corrupted = \A$, then $\adv$ sets
  
$a\in \{w, \texttt{abort}, \texttt{corrupted},
\texttt{cheat\_exec},
\texttt{cheat\_commit}\}$ for $\A$, and otherwise
$\A$ sets $a = w$.

\item If $\corrupted = \B$ then $\adv$ sets $b$ for $\B$,
otherwise
$\B$ sets $b = y$.
\end{itemize}

\paragraph{5. Early abort \& blatant cheating.}
$\adv$ is given the opportunity 
to have $\corrupted$ abort or announce that it is corrupted.
This results in updating either $a$ (if $\corrupted = \A$) or $b$ (if $\corrupted = \B$) to $\texttt{abort}$ or $\texttt{corrupted}$.
If that is the case, $a$ (resp. $b$) is forwarded to $\B$ (resp. $\A$) and the trusted party halts.
\paragraph{6. Attempted cheat option.}
If $a = \texttt{cheat\_exec}$, then
the trusted party tosses 
a coin $X = \texttt{Ber}(p_{\texttt{exec}})$, where $p_{\texttt{exec}}$ is the probability of $\A$
getting caught cheating at this stage, and
\begin{itemize}
    \item If $X = 1$ then the trusted party sends $\texttt{corrupted}$ to both $\A$ and $\B$.
    \item If $X = 0$ then the 
    trusted party sends $\texttt{undetected}$
    to $\A$, along with $y$ ($\B$'s input). Following this, $\adv$ gets to choose $\B$'s output of the protocol, and sends it to the trusted party.
\end{itemize}
The ideal execution ends at this point if $a = \texttt{cheat\_exec}$.
\paragraph{7. Attempted break commitment option.} If
$a = \texttt{cheat\_commit}$
then
the trusted party requests from $\A$ (a) a probability $q$ and (b) a new value $w'$ for $w$.
The trusted party then sets $p = q$, if $w = w'$, and $p = \max(q, p_{\texttt{commit}})$
otherwise, where $p_{\texttt{commit}}$ is the probability of $\A$
getting caught cheating at this stage. (Note that this simply allows the adversary to
choose an arbitrary probability of getting caught when cheating to rewrite $w$
with the same value again).
Then, the trusted party
(i) tosses 
a coin $Y = \texttt{Ber}(p)$, (ii) rewrites $w$ to take value $w'$,
(iii) runs $g_1(w, b)$ with the updated $w$, and
(iv) gives $\adv$ the opportunity to abort $\A$. Next,
\begin{itemize}
    \item if $Y = 1$ then the trusted party sends $\texttt{corrupted}$ to both $\A$ and $\B$ and halts, and
    \item if $Y = 0$ then the 
    trusted party sends $\texttt{undetected}$
    to $\A$.
\end{itemize}

Let us remark that giving $\adv$ the opportunity to abort upon observing the output in in step 7.
is allowed just to simplify the presentation of out protocol, and that
an extension where $\adv$ does not receive an output when caught cheating is
easy to achieve by just adding a round of interaction to our protocol.
In that extra round $\B$ enables $\A$ to ungarble their output after verifying
the commitment resulting from the secure computation.

\paragraph{8. Trusted party gives out outputs.}
The trusted party evaluates $g(w, b)$, and gives
$\adv$ the chance to abort the execution. Otherwise it
gives their designated output to $\B$, at which point $\adv$
is allowed to either abort the execution, or let the honest party receive their output.

\paragraph{Outputs.} The honest party  outputs what they received in the final step, and 
$\adv$ outputs an arbitrary (probabilistic) polynomial-time computable function of 
$\corrupted$'s input, any auxiliary input, and its view during the execution.

\subsection{Definitions}\label{sec:icdefs}

The following simulation security definition deviates from most such definitions in that we allow the adversary in the ideal model to be logically omniscient, whereas it is standard to restrict the simulator to polynomial time computations. The polynomial time assumption is important in the context of zero-knowledge proofs and for certain systems of composability. However standard bit commitment is impossible in the universal composability model~\cite{CF01}, thus we must settle for weaker composition guarantees here. We hope it is clear that the ideal setting here is information theoretically secure. Thus even a logically omniscient adversary can not possibly learn things that it should not in the ideal model.
The simulator could be made computable at the expense of slightly complicating the proof, but as this is unnecessary and also not standard we prefer to keep the proof simple.

Denote by $\texttt{IDEAL}^{p_{\texttt{exec}}, p_{\texttt{commit}}, \mathcal{E}}_{g,\mathcal{S}(z),i}(x,y,\lambda)$ the environment variable and the outputs of the honest parties and adversary in an execution in the ideal world above, and let $\texttt{REAL}^{p_{\texttt{exec}}, p_{\texttt{commit}}, \mathcal{E}}_{g,\adv(z),i}(x,y,\lambda)$ denote the environment variable and the outputs of the honest parties and the adversary in a real execution of a protocol $\pi$.

\begin{definition}
Let $g$ and $p$ be as above. A protocol $\pi$ \emph{securely computes $g$ with committed first input in the presence of a malicious P1 or a covert P2 with $p$-deterrent} if for every non-uniform probabilistic polynomial time adversary $\adv$ for the real model, there exists a definable adversary $\mathcal{S}$ for the ideal model such that for each $i \in \{1,2\}$:
\begin{equation*}
  \begin{split}
  &\left\{\texttt{IDEAL}^{p_{\texttt{exec}}, p_{\texttt{commit}}, \mathcal{E}}_{g,\mathcal{S}(z),i}(x,y,\lambda) \right\}_{x,y,z\in \{0,1\}^*, \lambda\in\mathbb{N}}  \equiv^c \\
& ~~~~~~\left\{\texttt{REAL}^{p_{\texttt{exec}}, p_{\texttt{commit}}, \mathcal{E}}_{g,\adv(z),i}(x,y,\lambda) \right\}_{x,y\in \{0,1\}^*, \lambda\in\mathbb{N}}
    \end{split}
\end{equation*}
\end{definition}

In PVC security it is important that a fail-stop adversary is not labelled a cheat (at least in most contexts including ours) for that we say that:

\begin{definition}
A protocol $\pi$ is \emph{non-halting detection accurate} if for every fail-stop adversary $\adv$ controlling party P1, the probability of an honest P2 outputting $\texttt{corrupted}$ is negligible.
\end{definition}

In order to have PVC security in place of the covert security we require that their be some algorithm $\texttt{Blame}$. When applied to the view of an honest party that has outputted $\texttt{corrupted}$, it must return a proof of that corruption. The proof is verified by another algorithm $\texttt{Judgement}$, which will output $\cheated$ if and only if it is a genuine proof. These ideas are formalized as follows.

Given an algorithm $\texttt{Commit}$ and a protocol $\mathcal{P}$ let the commitment protocol formed by them consist of P1 running $\texttt{Commit}$ and sending the result to P2, P1 then receiving $e\gets \mathcal{E}$ and then both P1 and P2 engaging in $\mathcal{P}$ and taking the output from that protocol as the output.

\begin{definition}\label{def:main}
A quadruple $(\texttt{Commit},\mathcal{P},\texttt{Blame},\texttt{Judgement})$ \emph{securely computes $g$ with committed first input in the presence of a malicious P1 or a covert P2 with $p$-deterrent and public verifiability} if the following hold:
\begin{enumerate}
\item (Simulatability with $p$-deterrent:) The commitment protocol formed from $\texttt{Commit}$ and $\mathcal{P}$ securely computes $g$ with committed first input in the presence of a malicious P1 or a covert P2 with $p$-deterrent and is non-halting detection accurate.
\item (Accountability:) For every PPT adversary $\adv$ controlling P1 and interacting with an honest P2,
\begin{equation*}
\PP(\B\mbox{ outputs }\texttt{corrupted} \wedge \texttt{Judgement}(\texttt{Blame}(\texttt{View}(\B)))) \neq \cheated)
\end{equation*}
is negligible.
\item (Defamation-Free:) For every PPT adversary $\adv$ controlling P2 and interacting with an honest P1,
\begin{equation*}
\PP(\adv \mbox{ outputs } \wedge \texttt{Judgement}(Cert) = \cheated)
\end{equation*}
is negligible.
\end{enumerate}
\end{definition}

\subsection{PVC committed MPC Proof}\label{sec:mainproof}

\mainthm*
\begin{proof}

\myparagraph{Simulating P1}
First we consider simulatability in the case where P1 is corrupted.
Given a non-uniform probabilistic polynomial time adversary $\adv$, we construct $\mathcal{S}$ as follows.

First note that $\mathcal{S}$ can uniformly randomly choose a randomness tape which will be used for all of its black box runs of $\adv$, this reduces the task to the case where $\adv$ is a deterministic adversary.

By running $\adv$ the simulator is given the commitment $c$ that $\adv$ chooses to use (which may or may not be generated by applying $\texttt{Commit}$ to some $w$). Now $\mathcal{S}$ can look at how $\adv$ would respond to every possible environment variable $e$ (this is fine because it is a mathematical function which need not be computable).
If, in response to $e$, $\adv$ does any of early abort, blatant cheat or cheat during the execution the outcome is independent of what commitment was made so it would not matter what $\mathcal{S}$ commits to in the ideal world.
The other possibility is that $\adv$ does none of those things and submits some $x'$ as their input alongside supposed randomness $r'$.

The underlying protocol $\Pi$ allows input extraction in polynomial time (as is used in the proofs of security for that protocol in Hong et. al.~\cite{pvc}) thus in all of these other cases $\mathcal{S}$ can extract which $x'$ will be used in response to each $e$.
For each one $\mathcal{S}$ can then compute $a=\assert(x',\sk,r';i,\pk)$ for every $i\in \mathcal{I}$ and check for what fraction of the $i$ we have $\checkfun(c,a,\pk)=\cheated$.

For those $e$ that result in being caught with probability at least $p/2$ it would not matter what $\mathcal{S}$ committed to as it will be able to attempt to cheat the commitment to change the input to $x'$ and get caught with the correct probability. After which it will receive $g_1(x',y)$, add it to the simulated view, and proceed according to what $\adv$ would do next. Aborting if and only if $\adv$ chooses to abort. P2 will then receive $\texttt{corrupted}$ with the correct probability.

Those $e$ that result in less than $p/2$ probability of being caught, it will matter that $\mathcal{S}$ commited to the value of $x'$ that $\adv$ wants to use. Thus for the commitment phase $\mathcal{S}$ will commit with the trusted party to the value $\tilde{x}$ that is most likely to be used as $x'$ (with respect to the randomness of $e$).
If the adversary uses $x'=\tilde{x}$ then $\mathcal{S}$ will now be able to tell the trusted party it wants to use that value and it wants to get caught with the correct probability.

The remaining possibility, that $e$ results in $\adv$ using an input $x'\neq \tilde{x}$ and $r'$ which has a probability less than $p/2$ of resulting in P1 being caught, would be a serious problem for $\mathcal{S}$.
We claim however that this can happen with only negligible probability.

Suppose to the contrary that some non-negligible fraction of the weight of $\mathcal{E}$ resulted in these bad $x',r'$. Then as each must individually have weight at most that assigned to $\tilde{x}$ we can construct a polynomial time algorithm as follows.

Sample $e\gets \mathcal{E}$ and extract the input of $\adv$ for this $e$, compute $a=\assert(x',\sk,r';i,\pk)$ for all $i\in \mathcal{I}$, check to see if less than a $p/2$ fraction of the $i$s would result in $\checkfun(c,a,\pk)=\cheated$. Repeat this process until two distinct such values of $(x',r')$ have been found with this property. As the fraction of $e$ that result in finding such an $x'$ other than the most common one is non-negligible, this algorithm runs in expected polynomial time.

However, this can (by putting a polynomial time upper bound on the run time and failing if it reaches it) be converted into a PPT algorithm which contradicts the general binding property of the PVC commitment scheme. Thus proving the claim.

This addresses simulating the correct distribution between $e$, the output of P2 and messages explicitly sent in our protocol to P1. The messages sent to P1 in the secure computation black box are dealt with by the simulator for $\Pi$ as given in Hong et. al.~\cite{pvc}.

Extending this to the optimized integrity check case is straight forward, everything is the same except $\mathcal{S}$ must produce a fake hash-commitment for $\adv$ to sign. This can be done by hashing randomness due to the hiding property of the commitment scheme this would not break indistinguishability. Further whilst the signed version should be given to P2 in the real world it does not form part of P2's output so we need not worry about coordinating with that.

If the commitment had been produced honestly, then $\adv$ must have some $x,r$ that it committed to that it knows of. This together with any value of $x',r'$ that collides with the resulting commitment $c$ less than some fraction $p$ of the time will break general binding. Thus the same simulator as above with this extra observation gives the stronger security.

\myparagraph{Simulating P2}
Simulating the other side is much easier. P2 receives a commitment $c$ to some $x$, however due to the hiding property of the commitment $\mathcal{S}$ can get away with providing $\adv$ with a commitment to some arbitrary value, say $0$. The simulator can now have $\adv$ interact with a copy of P1 (multiple times) in order to extract their input $y$ and index $i$. It can then give this input to the trusted party to find the correct value of $g_2(x,y)$. It can then add the result of $g_2(x,y)$ to the simulation of the adversaries view using the simulator for $\Pi$.

With the optimization the only change is that rather than giving the value of $\assert$ for the given $i$, a pair $(i,c[i])$ is signed and added to the simulated view.

\myparagraph{Correctness}
If both parties are honest then $\Pi$ will correctly output $(g_1,(x,y),g_2(x,y))$ and the result of $\assert$. As $\assert$ is computed correctly on the honest inputs the check with the commitment from the first step will return $\valid$. And thus P2 will accept the output and the parties will have successfully computed $g$. 

\myparagraph{Accountability}
For accountability, we need to show that a cheating $\A$
gets caught \emph{publicly}, in the sense that if $\B$ claims that $\A$
cheated then there's a proof accompanying that claim except with negligible probability.
Note that if $\A$ cheats inside
the secure computation with $\Pi$, this follows from the PVC properties of $\Pi$, and the definition
of $\texttt{Blame}$ and $\texttt{Judgemenet}$  in terms of
$\texttt{Blame}_{\texttt{exec}}$ and $\texttt{Judgement}_{\texttt{exec}}$.
If $\A$ cheats in that the input to $\Pi$ differs from the committed value
then $\texttt{Judgement}_{\texttt{commit}}(\texttt{Blame}_{\texttt{commit}}(\cdot))$
will output $\cheated$ after verifying that
$\texttt{Blame}_{\texttt{commit}}(\cdot)$ constitutes a valid signature of
the fact that the commitment $c$ and evaluation of $H$ at $i$ do not match.
This happen with all but negligible probability due to the properties of the
cryptographic signature, and the correctness of the PVC commitment scheme, i.e.
different values for indexed hashes necessarily come from different inputs.

\myparagraph{Defamation Freeness}
Defamation freeness states that proofs of cheating can not be forged. This holds
for proofs outputted by $\Pi$ by the fact that it satisfies PVC security,
and it holds for proofs generated by
$\texttt{Blame}_{\texttt{exec}}$
due to the properties of the
cryptographic signature scheme.

\end{proof}

\section{Proofs of Section~7 (Lower Bounds)}\label{app:lb}
\lowerpropone*
\begin{proof}
  We give a polynomial time algorithm which, given a circuit, $C$, that implements a function from $\{0,1\}^a \times \{0,1\}^n$ to $\{0,1\}^m$ and contains fewer than $\lceil (n-m)/2 \rceil$ nonlinear gates, returns a non-zero input $x \in \{0,1\}^n$ such that $C(s,x)=C(s,0)$ for all $s\in \{0,1\}^a$. As this algorithm finds a $1$-collision with certainty if the circuit is small enough, for $H_k$ to be secure that must happen with negligible probability in $\lambda$. And the result is immediate.

  Consider the wires of $C$ that are either outputs of the circuit or inputs to nonlinear gates. The hypotheses imply that there are $<n$ of such wires. Wire $j$ must contain the XOR of a linear (i.e. parity) function $f_j$ of the input with an affine function of the key and nonlinear gate outputs (either of which could be trivial).

  The conditions $f_j(x)=f_j(0)$ form a collection of $<n$ linear constraints in $n$ variables. Since $x=0$ is obviously a solution of this under-determined system, then it must have also a nontrivial solution, which can be found efficiently.
\end{proof}

\lowerproptwo*
\begin{proof}
  The idea of the proof is similar to that of Proposition~\ref{prop:lowercovert}.
  We give a polynomial time algorithm which given a circuit $C$ from $\{0,1\}^n$ to $\{0,1\}^m$ with fewer than $n-m$ nonlinear gates returns a collision in that circuit. Thus to have collision resistance the circuit can be that small only with negligible probability.

  Let $d$ be the number of nonlinear gates in $C$, we will assume WLOG that they are AND gates. We will derive $d+m$ affine conditions on $x$ which determine $C(x) = C(0)$. As affine systems are efficiently solved, collision resistance requires that there be at most one solution ($x = 0$) to this system. For this to happen $d+m\geq n$ must hold, which proves the statement.

  Thus it remains to derive the aforementioned $d+m$ affine conditions. We need to fix the output of each AND gate using only one affine condition on $x$. This can be achieved as follows. Consider each AND gate in (a totalisation of the partial) order from input to output, i.e. a topological ordering of the circuit. For each AND in that sequence, if the first input can be set to $0$ with an affine condition then add that condition to the set and move on to the next gate. Otherwise, the first input is already determined so we need only add an affine condition that fixes the second input. Either way that is only one condition per gate. In summary, by adding one condition for each of the output wires we can determine their values, so we are done.
\end{proof}

\lowerpropthree*

\begin{proof}
  For a indexed hash function circuit $C$ with $n$-bit main input and $d$ nonlinear gates we explain how to do each of the following in time polynomial in the size of $C$:
  \begin{itemize}
  \item Transform $C$ into another circuit $\tilde{C}$ with $d$ non-linear gates.
  \item Simulate the output of $\tilde{C}(i,r,x)$, given the output of $C(i,r,x)$ and $i$.
  \item Derive a collision in $\tilde{C}$ from a collision in $C$
  \end{itemize}
  Finally we will show that $\tilde{C}_k$ has output length at most $3d_k$ with all but negligible probability.
  It follows that $\{\tilde{C}_k\}_{k\in K}$ is non-trivially collision bounded and the result follows from Propositions~\ref{prop:lowercovert} and~\ref{prop:lowerordinary}.
  
  Throughout this proof $L$ with a subscript will denote a linear function.

  Let $C$ be a circuit with input $(i,r,x)$ where $x$ is the $n$-bit input, $r\in \mathcal{R}$ is the randomness and $i$ is an index. All the following computations can be done in polynomial time we will avoid repeating this fact for each one.

  Note that $C(i,r,x)$ can be rewritten as $L_1(i,r,x,g(L_2(i,r,x)))$ for some nonlinear function $g$ where $L_2$ has a $2d$ bit output and $g$ has a $d$ bit output and is implemented with $d$ non-linear gates.

  Considering $L_2$ as a linear function of $\mathcal{R}$ we can find its kernel $T$ which has codimension at most $2d$. Compute representations of $\pi_{T^\perp}(r)$ and $\pi_{T}(r)$, represented in a basis of $T^\perp$ and a basis of $T$, call them $r_1$ and $r_2$ respectively. Thus $r_1$ has length at most $2d$, and $L_2(i,r,x)$ is equal to some $L_3(i,r_1,x)$. We can thus write $C(i,r,x)$ as
  \begin{equation}\label{eq:brokendown}
    L_4(i)+L_5(r_2)+L_6(r_1,x,g(L_3(i,r_1,x)))
  \end{equation}
  We now define $\tilde{C}(i,r_1,x)$ to be a representation of
  \begin{equation}
    \pi_{(\text{Im} L_5)^\perp}(C(i,r,x)-L_4(i))=L_7(x,r_1,g(L_3(i,r_1,x)))
  \end{equation}
  in a basis of $\text{Im} L_7$.

  As $\tilde{C}(i,r,x)$ is a linear function of $C(i,r,x)$ and $i$ we can write it with $d$ nonlinear gates.

  As $\tilde{C}(i,r_1,x)$ is a known linear function of $i$ and $C(i,r,x)$ so simulating it is trivial.

  We abuse notation and use $r$ for the function that recovers $r$ from a derived $r_1$ and $r_2$. Suppose that $\tilde{C}(i,r_1,x)=\tilde{C}(i,r'_1,x')$ and $x\neq x'$. Then we can compute $C(i,r(r_1,0),x)-C(i,r(r'_1,0),x')$ which by the definition of $\tilde{C}$ will be in $\text{Im} L_5$, we can then choose $r_2$ such that
  \begin{equation}
    C(i,r(r_1,0),x)-C(i,r(r'_1,0),x')=L_5(r_2)
  \end{equation}
  which combined with Equation~\ref{eq:brokendown} yields
  \begin{equation}
    C(i,r(r_1,0),x)=C(i,r(r'_1,r_2),x')
  \end{equation}
    
  Finally, recall that $\tilde{C}(i,r_1,x)$ is a representation of $L_7(x,r_1,g(L_3(i,r_1,x)))$ and that $\tilde{C}$ has full rank. Thus we can write each $\tilde{C}_k$ as
  \begin{equation}
    L^k_8(x)+L^k_9(r_1,g(L_3(i,r_1,x)))
  \end{equation}
  As $(r_1,g(L_3(i,r_1,x)))$ is at most $3d$ bits long the rank of $L^k_9$ must be at most $3d$. If $\text{Im} L^k_8$ is contained in $\text{Im} L^k_9$ then the length of the output of $\tilde{C}_k$ is at most $3d$. Otherwise, $\pi_{(\text{Im} L^k_9)^\perp}\tilde{C}_k(i,r,x)$ is a non-trivial linear function of $x$, however this latter possibility must occur with negligible probability otherwise $\{\tilde{C}_k\}_{k\in K}$, and thus $\{C_k\}_{k\in K}$, is not hiding.
\end{proof}

\section{Proofs of Section~8 (from Covert to \\ Malicious)}\label{app:malicious}
\maliciousone*
\begin{proof}
Suppose $x$ and $x'$ are two distinct inputs. Then for some $j$, $x^j\neq x'^j$. Then by the error detecting property $E(x^j)$ and $E(x'^j)$ must differ in at least $\kappa$ places, $j_1,...,j_\kappa$. Thus $\tilde{x}_{j_m}\neq \tilde{x}'_{j_m}$ for all $m\in \{1,...,\kappa\}$.

Suppose that $r,x,r',x'$ is a $q'^\kappa$-collision of $H_k^E$ for some $q'>q$. Then at least one of $r, \tilde{x}_{j_m}, r', \tilde{x}'_{j_m}$ is a $q'$-collision of $H_k$. Thus, by testing each $m$ in turn, a $q'$-collision of $H_k$ can be found in polynomial time from a $q'^\kappa$-collision of $H_k^E$. The first part of the result follows. The number of required gates follows from counting through the algorithm for $H^E$.
\end{proof}

\section{Proofs of Section~9 (Arithmetic \\ Circuits)}\label{app:arithmetic}
\arithmeticthm*
\begin{proof}
By Theorem~\ref{thm:collision} it suffices to show that for two distinct inputs $x,x'$ at most $b$ values of $i$ will result in $d_4(i,x)=d_4(i,x')$.
Consider two distinct inputs $x, x'$, and assume WLOG that they differ in the first $b$ field elements. Let $y$ and $y'$ be the first $b$ field elements from $x$ and $x'$ respectively.

It suffices to show that only $b$ values of $i$ will result in $d_4(i,y)=d_4(i,y')$. Start by expanding out the difference.
\begin{equation*}
\begin{split}
  d_4(i,y)-d_4(i,y')=\sum_{j=1}^{b/2} \big( &i^{2j-1}(y_{2j}-y'_{2j})~+\\
  & i^{2j}(y_{2j-1} -y'_{2j-1})~+\\[1.4ex]
  & y_{2j-1}y_{2j} - y'_{2j-1}y'_{2j}\big)
\end{split}
\end{equation*}
Let $s$ be the function on the natural numbers that swaps $2j$ and $2j-1$ for all $j$. The difference is given by the inner product of $(1,i,i^2,...,i^b)$ with a vector $v(y,y')$ in $\mathbb{F}^{b+1}$ with zeroth entry being
\begin{equation}
  \sum_{j=1}^{b/2} y_{2j-1}y_{2j} - y'_{2j-1}y'_{2j}
\end{equation}
and $j$th entry for $j>0$ being $y_{s(j)}-y'_{s(j)}$.

We have $d_4(i,y)=d_4(i,y')$ only if $(1,i,...,i^b)$ is perpendicular to $v(y,y')$. But as any $b+1$ vectors of the form $(1,i,...,i^b)$ form a Vandermonde matrix and thus are linearly independent at most $b$ of them could be perpendicular to any fixed $v(y,y')$.
\end{proof}

\end{document}